\tiny\color{gray},
\lstdefinestyle{beepl}{
  basicstyle=\scriptsize\ttfamily,
  commentstyle=\color{green!50!black},
  keywordstyle=\color{blue},
  stringstyle=\color{black},
  showstringspaces=false,
  breaklines=true,
  frame=single,
  language=C,
  morekeywords={let, in, match, with, function, fun, type, module, struct, sig, open, rec, and, if, then, else, begin, end, try, raise, ref, for},
  literate=
    {:=}{{\textcolor{blue}{:=}}}2
    {!}{{\textcolor{blue}{!}}}1
    {!=}{{\textcolor{black}{!=}}}2
    {\#}{{\textcolor{black}{\#}}}2
    {alloc}{{\textcolor{red}{alloc}}}5
    {read}{{\textcolor{red}{read}}}4
    {write}{{\textcolor{red}{write}}}5
    {int}{{\textcolor{blue}{int}}}3
    {uint16}{{\textcolor{blue}{uint16}}}3
    {\_\_u16}{{\textcolor{blue}{\_\_u16}}}3
    {long}{{\textcolor{blue}{long}}}6
    {option}{{\textcolor{blue}{option}}}7
}
\setlist{left=.618\parindent .. 1.618\parindent}
\newif\iffullversion
\newif\ifdiff
\definecolor{ao(english)}{rgb}{0.0, 0.5, 0.0}
\newcommand*\figref[1]{Figure~\ref{fig:#1}}
\newcommand*\secref[1]{Section~\ref{sec:#1}}
\newcommand{\type}{\kw{type}}
\newcommand{\typev}{\kw{{prim_{\tau}}}}
\newcommand{\typeb}{\kw{basic_{\tau}}}
\newcommand{\kint}{\kw{int}}
\newcommand{\klong}{\kw{long}}
\newcommand{\tbool}{\kw{bool}}
\newcommand{\tint}[3]{\kint \ #1 \ #2 \ #3}
\newcommand{\tlong}[2]{\klong \ #1 \ #2}
\newcommand{\kunsigned}{\kw{unsigned}}
\newcommand{\ksigned}{\kw{signed}}
\newcommand{\keight}{\kw{i8}}
\newcommand{\ksixteen}{\kw{i16}}
\newcommand{\kthirtytwo}{\kw{i32}}
\newcommand{\tpointer}{\kw{ptr_\tau}}
\newcommand{\tunit}{\kw{unit_\tau}}
\newcommand{\tbstruct}[2]{\kw{struct} \ #1 \ #2}
\newcommand{\tbarray}[3]{#1 \ [#2] \ #3}
\newcommand{\tref}[1]{#1\rptr}
\newcommand{\toption}[1]{\kw{option}(#1)}
\newcommand{\tpstruct}[2]{\kw{struct} \ #1 \ #2*}
\newcommand{\tfunction}[3]{\overline{#1} \rightarrow_{#2} \ #3}
\newcommand{\tfun}[3]{{#1} \rightarrow_{#2} \ #3}
\newcommand{\tpfunction}[3]{*\overline{#1} \rightarrow_{#2} \ #3}
\newcommand{\bytes}{\kw{bytes}}
\newcommand{\effect}{\kw{effect}}
\newcommand{\efdivergence}{\kw{divergence}}
\newcommand{\efread}[1]{\kw{read} \ #1}
\newcommand{\efwrite}[1]{\kw{write} \ #1}
\newcommand{\efalloc}[1]{\kw{alloc} \ #1}
\newcommand{\efio}{\kw{io}}
\newcommand{\expr}{\mathsf{expr}}
\newcommand{\val}{\mathsf{Value}}
\newcommand{\const}{\mathsf{Const}}
\newcommand{\kw}[1]{{\ensuremath{\mathsf{#1}}}}
\newcommand{\kif}{\kw{if}}
\newcommand{\klet}{\kw{let}}
\newcommand{\kin}{\kw{in}}
\newcommand{\kthen}{\kw{then}}
\newcommand{\kelse}{\kw{else}}
\newcommand{\kunit}{\kw{unit}}
\newcommand{\kpop}{\kw{op}_p}
\newcommand{\kuop}{\kw{uop}}
\newcommand{\kbop}{\kw{bop}}
\newcommand{\kmass}{:=}
\newcommand{\kderef}{!}
\newcommand{\kref}{\kw{ref}}
\newcommand{\ecall}{\kw{ecall}}
\newcommand{\kmatch}{\kw{match}}
\newcommand{\kfor}{\kw{for}}
\newcommand{\kpat}{\kw{Pattern}}
\newcommand{\kpnone}{\kw{pnone}}
\newcommand{\kpsome}[1]{\kw{psome} \ #1}
\newcommand{\kpbytes}[4]{(#1,#2): \overline{(#3,#4)}}
\newcommand{\prim}[2]{{#1}(\overline{#2})}
\newcommand{\cond}[3]{\kif\ {#1}\ \kthen\ {#2}\ \kelse\ {#3}}
\newcommand{\app}[2]{#1(\overline{#2})}
\newcommand{\elet}[4]{\klet\ #1 : #2 = #3 \ \kin \ {#4}}
\newcommand{\loc}[2]{(#1, #2)}
\newcommand{\ef}[2]{\ecall\ #1\ #2}
\newcommand{\match}[3]{\kmatch\ #1\ \kw{with}\ \overline{{#2} \rightarrow #3}}
\newcommand{\structi}[3]{#1 | \overline{({#2}, {#3})}}
\newcommand{\structf}[2]{#1.#2}
\newcommand{\fore}[4]{\kfor\ (#1 \ldots #2, #3) \ #4}
\newcommand{\fundecl}{\{\kw{sec}: \kw{option \ string}; \kw{rt}:\tau;\ \kw{ef}:\eta; \kw{cc} : \kw{conv};\ \kw{args}:\overline{x,\tau};\ \kw{vars}:\overline{y,\tau};\kw{body}:e; \kw{flag}:\kw{bool}\}}
\newcommand{\globalvar}{\{\kw{gvar} : (x : \tau, v);  \kw{sec} : \mathsf{option \ string}\}}
\newcommand{\program}{\overline{\kw{id, decl}}}
\newcommand{\bstate}[3]{\prec#1,#2,#3\succ}
\newcommand{\eval}[4]{#1,#2 \rightarrow #3,#4}
\newcommand{\evals}[4]{#1,#2 \rightarrowtail #3,#4}
\newcommand{\meval}[5]{#1,#2 \twoheadrightarrow_{#5} #3,#4}
\newcommand{\mevals}[5]{#1,#2 \rightsquigarrow_{#5} #3,#4}
\newcommand{\typerule}[7]{#1,#2,#3,#4 \vdash #5 : #6, #7}
\newcommand{\typerules}[7]{#1,#2,#3,#4 \Vdash #5 : #6, #7}
\newcommand{\rptr}{\scalebox{0.6}{$\star$}}
\newcommand{\gtick}{\color{green}\checkmark}
\newcommand{\rcross}{\color{red}\times}
\def\substitute#1by#2{[#2/#1]}
\newcommand*\vtrue{\ensuremath{{t\kern-.4ex t}}}
\newcommand*\vfalse{\ensuremath{{f\kern-.7ex f}}}
\newtheorem{lemma}{Lemma}
\theoremstyle{definition}
\begin{document}

\title{BeePL: Correct-by-compilation kernel extensions}

\author{Swarn Priya}
\orcid{0000-0001-5820-6042}
\affiliation{%
  \institution{Virginia Tech}
  \city{Blacksburg}
  \country{USA}
}
\email{swarnp@vt.edu}

\author{Frédéric Besson}
\orcid{0000-0001-6815-0652}
\affiliation{%
  \institution{Inria}
  \city{Rennes}
  \country{France}
}
\email{frederic.besson@inria.fr}

\author{Connor Sughrue}
\orcid{}
\affiliation{%
  \institution{Virginia Tech}
  \city{Blacksburg}
  \country{USA}
}
\email{cpsughrue@vt.edu}

\author{Tim Steenvoorden}
\orcid{0000-0002-8436-2054}
\affiliation{%
  \institution{Open Universiteit of the Netherlands}
  \city{Heerlen}
  \country{Netherlands}
}
\email{tim.steenvoorden@ou.nl}

\author{Jamie Fulford}
\orcid{}
\affiliation{%
  \institution{University of Illinois Urbana-Champaign}
  \city{Urbana-Champaign}
  \country{USA}
}
\email{jamiehf2@illinois.edu}

\author{Freek Verbeek}
\orcid{0000-0002-6625-1123}
\affiliation{%
  \institution{Open Universiteit of the Netherlands}
  \city{Heerlen}
  \country{Netherlands}
}
\affiliation{%
  \institution{Virginia Tech}
  \city{Blacksburg}
  \country{USA}
}
\email{freek@vt.edu}

\author{Binoy Ravindran}
\orcid{0000-0002-8663-739X}
\affiliation{%
  \institution{Virginia Tech}
  \city{Blacksburg}
  \country{USA}
}
\email{binoy@vt.edu}

\renewcommand{\shortauthors}{}

\begin{abstract}
eBPF (extended Berkeley Packet Filter)  is a technology that allows developers to safely extend kernel functionality without modifying kernel source code or developing loadable kernel modules. Since the kernel governs critical system operations and enforces isolation boundaries between user space and privileged data, any mechanism that modifies its behavior must meet the highest standards of safety and correctness. To this end, the eBPF toolchain includes a verifier, which statically checks safety properties such as memory access validity, bounded loops, and type correctness before loading the program into the kernel. However, several recent studies have demonstrated that the existing verifier is both overly conservative in some cases—rejecting valid programs—and unsound in others, permitting unsafe behavior that violates the intended semantics of the kernel interface.

To address these challenges, we introduce BeePL, a domain-specific language for eBPF with a formally verified type system. The BeePL type system, along with the language design, statically enforces key safety properties such as type-correct memory access, safe pointer usage, absence of unbounded loops, and structured control flow. These guarantees are backed by formal type soundness proofs, ensuring that well-typed programs satisfy the safety invariants required by the eBPF execution environment. BeePL also proves that well-typed source programs meet critical eBPF-specific properties related to memory safety, termination, and control flow, enabling high-level reasoning prior to compilation. For properties not fully enforceable statically—such as dynamic bounds and undefined behavior—BeePL inserts semantics-preserving runtime checks during compilation. We develop a verified compilation strategy that extends CompCert to generate BPF bytecode from BeePL programs, establishing a principled foundation for an end-to-end verifiable toolchain for safe kernel extensions.

\end{abstract}

\maketitle

\section{Introduction}
eBPF \cite{ebpfintro} is a revolutionary technology that enables safe, efficient, and dynamic instrumentation of the Linux kernel without requiring modifications to the kernel source code or the use of loadable kernel modules. It empowers developers to write user-defined programs that execute in kernel space, extending the kernel’s capabilities in areas such as performance monitoring, network packet filtering, system call tracing, and malware inspection. This flexibility has made eBPF a foundational component in modern observability and security tooling. It underpins a wide range of high-performance applications~\cite{ebpfcasestudy}, including networking frameworks (such as Meta’s Katran~\cite{facebook} and Google’s Cilium~\cite{google}), security enforcement platforms (like Cloudflare~\cite{cloudflare} and Falco~\cite{Falco}), observability solutions (like Netflix~\cite{netflix}), and performance monitoring tools (deployed at AWS~\cite{aws}). Initially, eBPF programs were written in a restricted subset of C, but with increasing attention from both industry and academia, support for writing eBPF programs in safer and more expressive languages like Rust \cite{Aya} has gained traction. To ensure safety and prevent buggy or malicious code from compromising the kernel, the eBPF toolchain includes a critical static analysis component known as the eBPF verifier. While there exists an informal specification for what constitutes a safe eBPF program—covering properties like memory safety, bounded execution, control flow integrity, valid helper usage, and safe pointer manipulation—this specification is neither formalized nor explicitly documented. It is understood implicitly within the eBPF community. Currently, the verifier enforces these checks, and only programs that pass are allowed to run in the kernel. Prior research and CVEs (Common Vulnerabilities and Exposures) \cite{eBPFCVElist} show that it can both accept unsafe programs and reject safe ones, revealing inconsistencies in its guarantees. These issues highlight the need for a more rigorous, formal foundation for verifying eBPF programs.


We formalize a set of key safety properties for eBPF programs and provide a trustworthy foundation for building compliant kernel extensions. Our language enforces many of these properties statically via its type system, while others—like dynamic memory safety—are ensured through semantics-preserving transformations by a verified compiler. The properties we capture include absence of uninitialized memory access, null dereferences, undefined behavior, guaranteed termination, memory bounds safety, and adherence to eBPF-specific typing rules.

\subsection{Problem Statement}
The existing eBPF ecosystem lacks a formally verified approach to guarantee the safety and security of programs loaded into the kernel. Here, we outline the key problem statements we aim to address. A detailed discussion of related efforts and prior work in these directions is provided in \secref{relatedwork}.
\paragraph{The eBPF verifier is limited and unsound.}
One of the major drawback's of the eBPF verifier is its unsoundness. Several reports~\cite{10278676}\cite{10.1145/3341301.3359641}\cite{10.1007/978-3-031-37709-9_12} have highlighted logical bugs in the verifier that lead to false negatives, allowing unsafe programs to pass verification. This can have severe consequences, including kernel memory corruption~\cite{CVE202339191} and security vulnerabilities~\cite{CVE20214159}~\cite{10.1145/3609510.3609822}. The complexity of a program depends not only on the number of instructions but also on the number of branching instructions in the program. For each branch, the verifier needs to fork the state to produce two states that need to be analyzed separately. The verifier has limited resources and can suffer from state explosion.
\paragraph{The absence of a type-safe language for eBPF development.}
Several approaches have explored using strongly typed languages for writing eBPF and user-space programs, including Rust~\cite{rust, Aya}, Go~\cite{go, Cilium}, Elixir~\cite{10.1145/3696443.3708923} and Lua~\cite{lua}. These languages aim to provide a safer and more accessible programming experience, either through built-in safety features or by relying on the verifier for enforcement. 
\begin{table}[h!]
\centering
\resizebox{0.55\textwidth}{!}{\begin{tabular}{|c|c|c|c|c|c|c|c|c|}
\hline
\rowcolor{Lavender} 
\textbf{eBPF Properties} & \multicolumn{7}{|c|}{\cellcolor[HTML]{CCE5FF}\textbf{Languages}}  \\ \hline
\rowcolor[HTML]{E5E5E5} 
 \textbf{Checked at compile-time} & \textbf{C} & \textbf{Rust} & \textbf{Elixir} & \textbf{Go} & \textbf{Python}  & \textbf{Lua} & \textbf{BeePL}   \\ \hline
Type safety     & $\LEFTcircle$  & $\gtick$ & $\LEFTcircle$ & $\gtick$   & $\rcross$   & $\rcross$ & $\gtick$            \\ \hline
Termination   & $\rcross$   &  $\rcross$ &  $\gtick$ & $\rcross$  & $\rcross$  & $\rcross$ & $\gtick$   \\ \hline
Stack usage    & $\rcross$   &  $\LEFTcircle$ & $\gtick$ & $\rcross$  & $\rcross$   & $\rcross$ & $\rcross$   \\ \hline
Register state safety & $\rcross$   &  $\gtick$ & $\LEFTcircle$ &  $\rcross$  & $\rcross$  & $\rcross$ & $\gtick$   \\ \hline
Memory safety   & $\rcross$   &  $\gtick$ & $\LEFTcircle$ & $\rcross$  & $\rcross$   & $\rcross$ & $\gtick$   \\ \hline
Helper functions safety    & $\rcross$    &  $\LEFTcircle$  & $\LEFTcircle$ &  $\rcross$  & $\rcross$   & $\rcross$ & $\LEFTcircle$    \\ \hline
Absence of undefined behavior   & $\rcross$   &  $\gtick$ & $\LEFTcircle$ &  $\rcross$  & $\rcross$   & $\rcross$ & $\gtick$   \\ \hline
Certified compilation  & $\LEFTcircle$   &  $\rcross$ & $\rcross$ &  $\rcross$  & $\rcross$  & $\rcross$ & $\LEFTcircle$   \\ \hline
\end{tabular}}
\caption{Languages and their support for properties of interest}
    \vspace{-1.5em}
\label{fig:languages-state-of-art}
\end{table}

Table~\ref{fig:languages-state-of-art} compares several languages—C, Rust, Elixir, and Lua for eBPF programs, and Go and Python for user-space tools—based on their ability to statically reason about eBPF properties. Symbols denote whether a language cannot ($\rcross$), can ($\gtick$), or partially ($\LEFTcircle$) verify a given property statically. C remains dominant for eBPF development due to its performance and integration with Clang/LLVM, but its type system cannot reason about termination, stack usage, uninitialized registers, or memory safety, and relies heavily on the eBPF verifier. Rust offers memory safety, bounds checking, and safe map access, but lacks termination analysis and certified BPF compilation. Honey Potion, an eBPF backend for Elixir, uses language features and code generation to reason about termination and stack use, but lacks formal guarantees. Lua, Go, and Python provide high-level abstractions but offer limited safety guarantees. The core challenge is the lack of a type system and a formal approach that fully captures eBPF safety properties—a gap BeePL aims to fill.

\paragraph{Lack of certified compilation.}
Ensuring eBPF safety properties at the source level alone is insufficient, as compiler optimizations can inadvertently break these guarantees when generating BPF bytecode. For instance, while Rust can statically enforce several eBPF safety properties (as shown in Table~\ref{fig:languages-state-of-art}), there is no assurance that these properties are preserved in the compiled BPF bytecode due to the lack of a formally verified, property-preserving compiler. The work in~\cite{10.1145/3691621.3694943} highlights several attack patterns that exploit the unsoundness of Rust’s compiler, demonstrating how ``safe" Rust can fail to uphold critical safety properties. Their analysis reveals vulnerabilities in Rust’s borrow checker, type system, and lifetime inference, which can lead to violations of its safety guarantees. 

\subsection{Contributions}
We present an end-to-end framework for building safe and secure eBPF programs using a strong type system and certified compilation. Our toolchain automates safety enforcement through static checks and runtime-preserving transformations, eliminating the need for manual formal reasoning. The framework includes BeePL—a verification-friendly language with a rich type system and a compiler that translates BeePL to C. The generated C code is subsequently compiled to BPF bytecode using CompCert. We extend CompCert with a new backend that targets eBPF bytecode, and this backend is formally verified in the Rocq theorem prover to ensure preservation of functional correctness. BeePL enforces memory safety by disallowing unsafe pointer operations and requiring null checks via option types and pattern matching. It supports structured memory access and guarantees no out-of-bounds errors through safe byte-pattern matching. Control-flow safety is ensured by only allowing terminating, bounded for-loops. By shifting safety checks to compile time and using a verified compiler, BeePL reliably produces correct C and BPF bytecode.

\textit{Contributions in nutshell}
\begin{itemize}
    \item A language called BeePL, featuring a type system that reasons about side effects, unsafe pointer manipulations, termination, and type signatures of helper functions.
    \item A compiler that translates BeePL to C while automatically enforcing safety properties through program transformations, without requiring manual checks.
    \item A CompCert extension for BPF bytecode generation, with mechanized proofs in Rocq ensuring preservation of functional correctness.
    \item Formal pencil-and-paper proofs establishing the \textit{type soundness} of the BeePL type system.
    \item BeePL enforces a set of eBPF safety properties, supported by formal pencil-and-paper proofs.
    \item A study illustrating how BeePL can eliminate several previously reported eBPF vulnerabilities.
\end{itemize}

\subsection{Motivating examples}
In this section, we present some motivating examples showcasing the features of BeePL language and its compiler. We examine a series of well-documented vulnerabilities and bug reports related to the eBPF verifier, many of which have led to critical security issues and assigned CVEs. We revisit some of these vulnerabilities and demonstrate how BeePL, by design, avoids these pitfalls through its strong type system and precise program transformations.

CVE-2021-3600 \cite{CVE20213600} highlights a critical flaw in the eBPF verifier related to the incorrect handling of 32-bit division and modulo operations. Specifically, the verifier fails to properly account for the truncation of the source register when performing 32-bit div or mod operations, which can result in unexpected values being propagated. This misbehavior can lead to out-of-bounds memory accesses—both reads and writes—posing a serious threat to system integrity.  \figref{ebpfprogc1} shows the C representation of the eBPF bytecode \cite{CVE20213600Blog} corresponding to the CVE-2021-3600 \cite{CVE20213600}, where a division or modulo operation is performed with a divisor that may become zero due to truncation. In this example, the check in line 6 guards the 64-bit register \texttt{r0}, but the actual operation in line 4 uses its truncated 32-bit counterpart \texttt{w0}. The verifier incorrectly assumes that the non-zero value of \texttt{r0} ensures safety, but fails to account for the fact that truncating \texttt{r0} to \texttt{w0} yields zero, leading to a divide-by-zero error—an instance of undefined behavior. 
\begin{figure}[h]
\centering
\begin{minipage}[t]{0.45\textwidth}
\begin{lstlisting}[basicstyle=\scriptsize]
SEC("xdp")
int prog1(struct xdp_md *ctx) {
    long r0 = 0x100000000;    // 2^32 - 64 bits can fit, but not 32
    int w0 = (int)r0;   // truncates to 0 (lower 32 bits)
    int w1 = 3;
    if (r0 != 0) { w1 = w1 % w0; } // Dangerous: w0 == 0 : division/modulo-by-zero
    return XDP_PASS;
}
char LICENSE[] SEC("license") = "GPL";
\end{lstlisting}
\caption{eBPF program in C for CVE-2021-3600 (allows division or modulus by zero)}
\label{fig:ebpfprogc1}
\end{minipage}
\hfill
\begin{minipage}[t]{0.50\textwidth}
\begin{lstlisting}[basicstyle=\scriptsize,mathescape=true]
#section "xdp"
fun bprog1(option(struct xdp_md$\rptr$) ctx) : int {
    let r0 : long = 0x100000000 in 
        let w0 : int = (int)r0 in 
            let w1 : int = 3 in 
                if (r0 != 0) then let w1 = w1 % w0 in XDP_PASS else XDP_PASS
}
char LICENSE[] #section "license" = "GPL";
// Division by zero eliminated due to program transformation done by BeePL compiler
\end{lstlisting}
\caption{BeePL program for CVE-2021-3600}
\label{fig:ebpfprogb1}
\end{minipage}
\end{figure}

In contrast, \figref{ebpfprogb1} presents the same logic written in BeePL. In BeePL, the compiler explicitly tracks truncation and enforces safety through a program transformation (done by BeePL compiler) that rewrites operations like \texttt{w1 \% w0} into a guarded conditional form: \lstinline[basicstyle=\normalsize\ttfamily]|if w0 == 0 then 0 else w1 

CVE-2020-8835~\cite{CVE20208835} reports a vulnerability where the eBPF verifier failed to properly track 32-bit register values used as shift operands. For example, an instruction like \(r = r >> r\), with \(r\) initialized to 808464432 (a \texttt{long}), may result in undefined behavior if the shift amount is \(\ge\)64. The verifier should enforce an upper bound (i.e., \(\le\)63 for 64-bit values), but in this case~\cite{CVE20208835Blog}, it allowed the program to pass despite the unsafe operation. BeePL prevents such undefined behaviors through compile-time transformations. In particular, for shift operations, the BeePL compiler inserts explicit bounds checks to eliminate unsafe execution paths.

CVE-2022-23222~\cite{CVE202223222} exposes a critical flaw in the eBPF verifier’s handling of $\kw{*\_OR\_NULL}$ pointers. These may be NULL or valid, and the verifier must ensure safety before any arithmetic. However, due to a missing sanity check in $\kw{adjust\_ptr\_min\_max\_vals}$ within \texttt{kernel/bpf/verifier.c}, it incorrectly allows pointer arithmetic on possibly NULL values, leading to out-of-bounds access and potential privilege escalation.

\begin{figure}[h]
\centering
\begin{minipage}[t]{0.40\textwidth}
\begin{lstlisting}[basicstyle=\scriptsize]
SEC("xdp")
struct { ... } counter_table SEC(".maps");
int cprog2(struct xdp_md *ctx) {
     long uid; 
     long *p; 
     uid = bpf_get_current_uid_gid() & 0xFFFFFFFF;  
     p = bpf_map_lookup_elem(&counter_table, (&uid)); 
     return (int)*p // Dangerous: dereferencing may be NULL
}
char LICENSE[] SEC("license") = "GPL";
\end{lstlisting}
\caption{eBPF program in C for CVE-2022-23222 (may lead to null pointer dereference)}
    \vspace{-1em}
\label{fig:ebpfprogc2}
\end{minipage}
\hfill
\begin{minipage}[t]{0.55\textwidth}
\begin{lstlisting}[basicstyle=\scriptsize,mathescape=true]
#section "xdp"
struct { ... } counter_table #section ".maps"
fun bprog2(option(struct xdp_md$\rptr$) ctx) : int {
    let  uid : long$\rptr$ = ref(0) in 
        let _ = uid := bpf_get_current_uid_gid() & 0xFFFFFFFF in 
            let p : option(long*) = bpf_map_lookup_elem(counter_table, uid) in 
                (int)!p
}
char LICENSE[] #section "license" = "GPL";
// Rejected by the BeePL type system as dereferencing(!) an option type is not allowed 
\end{lstlisting}
\caption{BeePL program for CVE-2022-23222 (safe BeePL is present in \figref{ebpfprogb3})}
\label{fig:ebpfprogb2}
    \vspace{-1em}
\end{minipage}
\end{figure}
Figure~\ref{fig:ebpfprogc2} shows an eBPF program vulnerable to null pointer dereference. The helper function $\kw{bpf\_map\_lookup\_elem}$ returns a pointer to a map value or NULL if the key is not found. However, the returned pointer p is dereferenced without any null-check. This can lead to undefined behavior or a crash if p is NULL. This is the core issue behind CVE-2022-23222~\cite{CVE202223222Blog} missing safety check before pointer dereference. Figure~\ref{fig:ebpfprogb2} shows the same logic written in BeePL. Here, \texttt{bpf\_map\_lookup\_elem} returns \texttt{option(long$\rptr$)}, making the possibility of a NULL value explicit in the type. The dereference \texttt{!p} is rejected by the type checker, as BeePL requires such operations to be guarded by pattern matching. Unlike C, all pointer-returning helper functions in BeePL return an option type, enabling null safety at the language level. During compilation, option types are translated back to regular pointers in C, with pattern matches ensuring that null dereferences cannot occur. This approach eliminates the need for manual null checks and ensures that programs like the one in Figure~\ref{fig:ebpfprogb2} are rejected by the type checker, preventing verifier bugs.
\begin{figure}[h]
\centering
\begin{minipage}[t]{0.50\textwidth}
\begin{lstlisting}[basicstyle=\scriptsize,mathescape=true]
#section "xdp"
struct { ... } counter_table #section ".maps"
fun bprog3(option(struct xdp_md$\rptr$) ctx) : int {
    let  uid : long$\rptr$ = ref(0) in 
        let _ = uid := bpf_get_current_uid_gid() & 0xFFFFFFFF in 
            let p : option(long*) = bpf_map_lookup_elem(counter_table, uid) in 
                match p with 
                    | pnone => -1
                    | psome p' => (int)!p'
}
char LICENSE[] #section "license" = "GPL";
\end{lstlisting}
\caption{Safe BeePL program for CVE-2022-23222}
    \vspace{-1em}
\label{fig:ebpfprogb3}
\end{minipage}
\hfill
\begin{minipage}[t]{0.45\textwidth}
\begin{lstlisting}[basicstyle=\scriptsize,mathescape=true]
SEC("xdp")
struct { ... } counter_table #section ".maps";
int cprog3(struct xdp_md *ctx) {
     long uid; 
     long *p; 
     long *p';
     uid = bpf_get_current_uid_gid() & 0xFFFFFFFF;  
     p = bpf_map_lookup_elem(&counter_table, (&uid)); 
     if (p == (long *) 0) { return -1; } 
        else { p' = p;return (unsigned int) *p'; }
}
char LICENSE[] SEC("license") = "GPL";
\end{lstlisting}
\caption{Safe C program generated by BeePL compiler for CVE-2022-23222}
    \vspace{-1em}
\label{fig:ebpfprogc3}
\end{minipage}
\end{figure}

Several CVEs involving NULL pointer dereference—CVE-2022-0433~\cite{CVE20220433}, CVE-2022-3606~\cite{CVE20223606}, CVE-2025-21852~\cite{CVE202521852}, and CVE-2024-38566~\cite{CVE202438566}—can be eliminated using BeePL’s typechecker and program transformations performed by the compiler. \figref{ebpfprogb3} shows the safe BeePL version of the unsafe program from \figref{ebpfprogb2}. Here, the possible null dereference is handled via a \texttt{match} expression (lines 6–8) on the \texttt{option} type. If \texttt{p} is \texttt{Pnone}, the function returns $-1$; otherwise, it safely dereferences the pointer inside \texttt{Psome}. This resolves the type checker error by ensuring dereferencing applies only to valid pointers. Additionally, no null check is needed on \texttt{uid}, as \texttt{ref} guarantees non-null pointers. \figref{ebpfprogc3} shows the generated C code, where an explicit null check appears on line 9, inserted automatically by the BeePL compiler. This demonstrates how BeePL enforces safe pointer usage and removes the burden of manual null checks.

\begin{table}[h]
\centering
\scriptsize
\resizebox{\linewidth}{!}{%
\begin{tabular}{|c|p{7.5cm}|c|p{4.6cm}|}
\hline
\textbf{CVE-ID} & \textbf{Vulnerability} & \textbf{BeePL} & \textbf{BeePL features} \\
\hline
CVE-2021-3600 & Division/modulo with zero due to truncation of 64-bit to 32-bit & \ding{55} & \makecell{Program transformation} \\
\hline
CVE-2020-8835 & Out-of-bounds reads and writes due to improper bound calculation in shift operator & \ding{55} & \makecell{Type safety + \\ Program transformation} \\
\hline
CVE-2022-3606 & Manipulation in the program operations leads to null pointer dereference & \ding{55} & \makecell{Type safety + \\ Program transformation} \\
\hline
\makecell{CVE-2022-0433, \\ CVE-2022-23222 \\ CVE-2025-21852, \\ CVE-2024-38566} & Null pointer dereference flaw & \ding{55} & \makecell{Type safety + \\ Program transformation} \\
\hline
CVE-2021-3490 & Improper ALU32 Bounds Tracking & \ding{55} & \makecell{Type safety + \\No pointer arithmetic} \\
\hline
\makecell{CVE-2021-31440,\\
CVE-2020-27194} & Wrong assumption about bounds of 32-bit register values & \ding{55} & \makecell{Type safety + \\ Program transformation} \\
\hline
\end{tabular}%
}
\caption{CVE mitigated by BeePL's verified compilation}
\label{tab:beepl-cve}
\end{table}
We examined a collection of known CVEs~\cite{eBPFCVElist} related to the eBPF ecosystem to assess whether BeePL can eliminate them through its design and guarantees. Table~\ref{tab:beepl-cve} summarizes these findings: each CVE, its vulnerability type, a \ding{55} indicating vulnerabilities prevented by BeePL, and BeePL features (type system, verified transformations) responsible for prevention. This case study shows BeePL’s ability to eliminate critical bugs and highlights our goal to enable developers to safely write eBPF programs without relying on the kernel verifier.

\section{Formally verified end-to-end framework}
\figref{beeplframework} presents an overview of our approach. We introduce BeePL, a programming language designed with the expressive power necessary to write realistic eBPF programs. BeePL supports multiple programming paradigms and features a type system that statically enforces key eBPF properties—including type safety, termination, side effects, and correct usage of helper function signatures—directly at the source level (see Table~\ref{fig:languages-state-of-art}). BeePL also employs a range of program transformations that eliminate undefined behavior and prevent out-of-bounds memory accesses. The framework also includes a compiler that translates BeePL source code to C, which is then compiled to BPF bytecode using CompCert~\cite{10.1007/s10817-009-9155-4}.
\begin{figure}[htbp]
  \centering
  \begin{subfigure}[t]{0.58\textwidth}
    \centering
    \includegraphics[width=\linewidth]{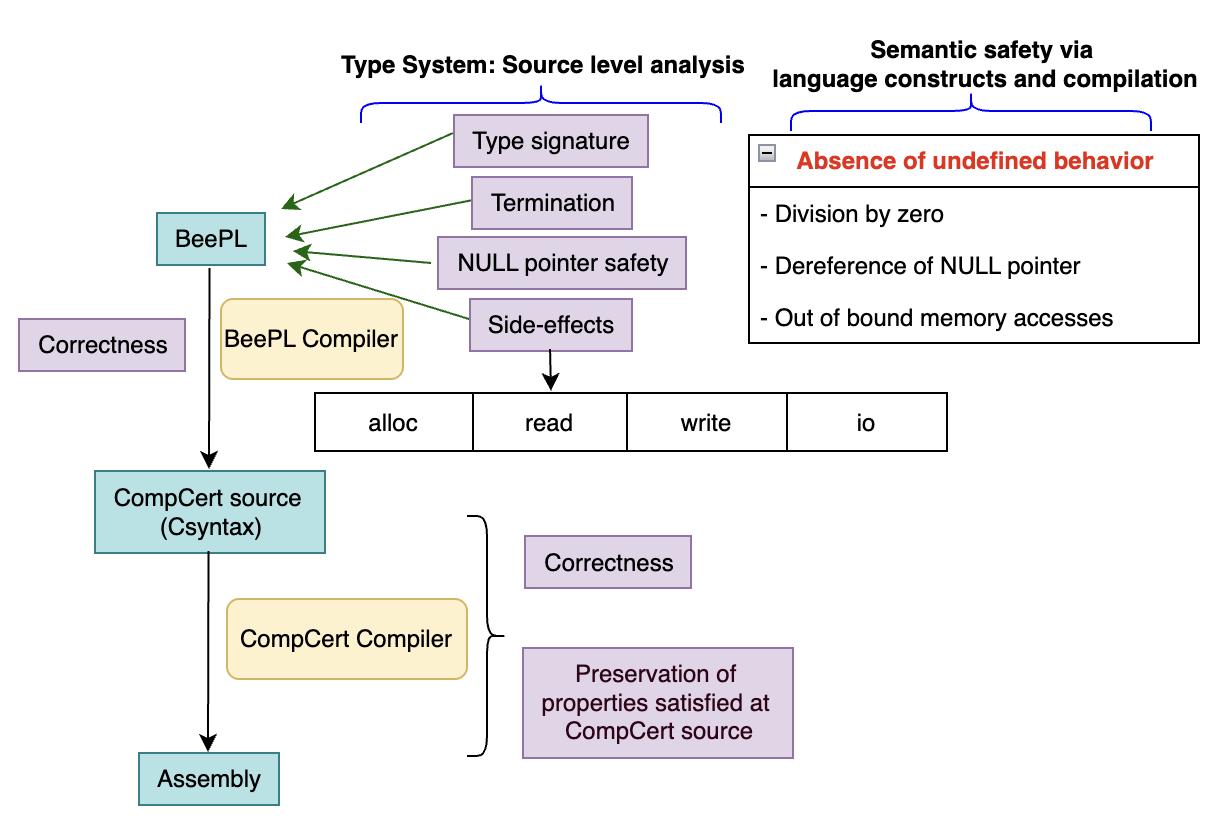}
    \caption{BeePL framework}
    \label{fig:beeplframework}
  \end{subfigure}
  \hfill
  \begin{subfigure}[t]{0.38\textwidth}
    \centering
    \includegraphics[width=\linewidth]{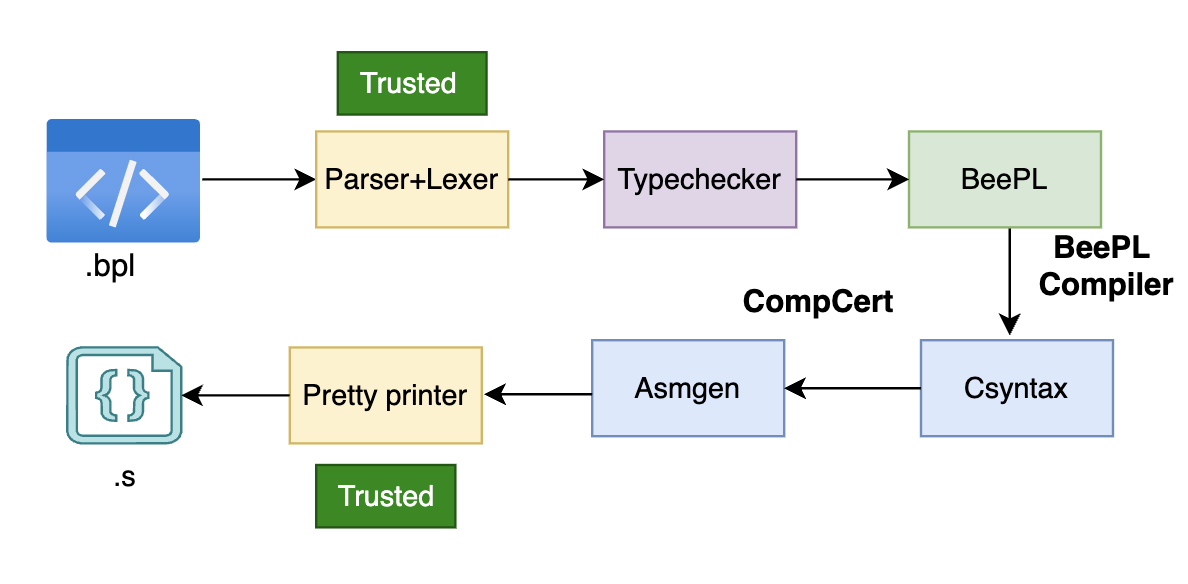}
    \caption{BeePL toolchain}
    \label{fig:beepltoolchain}
  \end{subfigure}
\end{figure}

A key component of our framework is the formally verified BeePL compiler (\figref{beepltoolchain}), which ensures that safe source programs compile to safe and secure target code. It parses BeePL source into an AST, type-checks it, and compiles it to C, which is then compiled to BPF bytecode using CompCert~\cite{10.1007/s10817-009-9155-4}, a verified C-to-assembly compiler. The BeePL-to-C compiler is implemented in the Rocq proof assistant, with passes extracted to OCaml. While the core compiler is verified (BeePL using paper-based formal proofs and CompCert (with extension of eBPF backend) in Rocq), the parser, lexer, and pretty-printer remain part of the trusted computing base. Our development contributes 33,635 lines of Rocq code as a CompCert extension.

\subsection{Formal specification for a set of eBPF properties}\label{sec:ebpf-spec}
The primary objective behind the design of BeePL and its type system is to enable formal reasoning about a well-defined set of safety properties critical to eBPF programs. We introduce a high-level formal specification for a set of eBPF safety properties, and in \secref{ebpf-proof}, we prove that all well-typed BeePL programs satisfy them. A program is considered $\kw{eBPF_{safe}}$ if it adheres to the following properties (text shown in green highlights which specific component of the framework is responsible for enforcing each property). The formal specification provides a high-level judgment involving a program $P$, an initial state $s$, a final state $s'$, a result value $v$, and a step count $n$. It states that if $P$ is well-typed and the initial state $s$ is well-formed, then the evaluation of $P$ transitions from $s$ to $s'$ in $n$ steps, producing the value $v$. The detailed definitions of well-typed programs and well-formed states are presented later in \secref{beepllang} and \secref{formalproofs}.

\begin{itemize}
\item \textbf{No uninitialized memory access ($\textmd{\textcolor{ForestGreen}{\textsf{Language feature}}})$:} All memory accesses involving pointers (represented as location $l$ with offset $o$ in memory $\Theta$ and ! representing dereference) generated within BeePL are guaranteed to reference initialized memory locations. 
\small
\begin{align*}
\forall P, s, s', v, n.\ &\vdash P : \text{well-typed} \wedge \text{well-formed}(s) \wedge \mevals{s}{P}{s'}{v}{n} \implies \\
&\forall (l, o),\ \text{if } !(l, o) \text{ evaluated during execution, then } \exists i.\ s_\Theta[l, o] = i
\end{align*}
\item \textbf{No null pointer dereferences ($\textmd{\textcolor{ForestGreen}{\textsf{Type system and runtime check}}}$):} Every memory access is ensured to be safe using the type system and runtime checks, ruling out null pointer dereferences.
\small
\begin{align*}
\forall P, s, s', v, n.\ &\vdash P : \text{well-typed} \wedge \text{well-formed}(s) \wedge \mevals{s}{P}{s'}{v}{n} \implies \\
&\forall (l, o),\ \text{if } !(l, o) \text{ evaluated during execution, then } l \neq \kw{NULL}
\end{align*}
\item \textbf{Memory bounds safety ($\textmd{\textcolor{ForestGreen}{\textsf{Program transformation and runtime check}}}$):} All memory accesses are verified to stay within bounds through program transformations and run-time checks, primarily for structured access patterns. $\Sigma$ maps memory locations to their types and is used to compute type-specific sizes.
\small
\begin{align*}
\forall P, s, s', v, n,\ l, o, \tau.\ 
&\vdash P : \text{well-typed} \wedge \text{well-formed}(s) \wedge \mevals{s}{P}{s'}{v}{n} \ \wedge \\
&\text{struct of type } \tau \text{ is read at } (l, o) \Rightarrow o + \kw{sizeof}(\tau) \leq \kw{sizeof}(\Sigma(l))
\end{align*}

\item \textbf{Compliance on helper function signatures and section attributes ($\textmd{\textcolor{ForestGreen}{\textsf{Type \ system}}}$):} All helper function calls respect their declared signatures, ensuring correct usage and argument types. BeePL’s type system statically enforces that all global variables and eBPF functions are annotated with the appropriate section attributes. This ensures conformance with the kernel's runtime placement and calling conventions.
\item \textbf{Termination ($\textmd{\textcolor{ForestGreen}{\textsf{Type system and language feature}}}$):} Every program written in BeePL terminates, ruling out infinite loops or non-terminating behavior (ensuring no divergence effect in the inferred effect $\eta$ for the program \textit{P}).
\small
\[
\forall P, s, \eta. \ \vdash P : \text{well-typed} \wedge \text{well-formed}(s) \implies \exists n, s', v. \ \mevals{s}{P}{s'}{v}{n} \wedge \kw{divergence} \notin \eta
\]

\item \textbf{Absence of undefined behavior ($\textmd{\textcolor{ForestGreen}{\textsf{{Type system and program transformation}}}}$):} 
Operations such as division, modulo, or arithmetic on unbounded integers are assigned well-defined semantics. When their operands may lead to undefined behavior—such as division by zero or overflow—the BeePL compiler inserts guards and rewrites them into safe expressions. Any result outside the expected semantics is modeled as a special case called $\kw{undef}$.
\small
\[
\forall P, s, s', v, n.\quad 
\vdash P : \text{well-typed} \wedge \text{well-formed}(s) \wedge 
\mevals{s}{P}{s'}{v}{n} \implies v \neq \kw{undef}
\]
\end{itemize}
\section{BeePL language and its type system}\label{sec:beepllang}
\paragraph{Expressions}\label{sec:expressions}
Expressions in BeePL (shown in \figref{BeePL}) are built from variables, constants, function applications, operators, and let-bindings. The language includes control-flow constructs such as conditionals and for-loops, supports external function calls, and allows initialization of user-defined structs and fixed-size arrays. Pattern matching over optional values and binary data enables expressive, type-safe deconstruction. The presence of the \texttt{option} type also facilitates pattern matching over nullable references via the \kw{match} construct, whose semantics and translation are described in \secref{semantics}. This mechanism ensures that all cases—both Psome and Pnone—are explicitly handled, further eliminating the possibility of undefined behavior due to null dereferencing. BeePL allows the use of safe abstractions over direct pointer manipulations. By utilising the pattern matching constructs, BeePL allows operations on packed data without relying on unsafe pointer arithmetic. For instance, rather than casting raw pointers and adding offsets—a common practice in low-level C code—BeePL code operates on $\kw{bytes}$. Pattern matching on $\kw{bytes}$ in BeePL allows direct extraction of structured data from binary sequences. This enables precise control over byte alignment, size, and type during parsing, without explicit pointer manipulation. Operators include standard unary and binary forms, as well as primitive operators for mutable effects (e.g., $\mathsf{ref}$, !, :=). A BeePL program consists of a sequence of declarations—functions (local or external) and global variables. Each function includes a signature specifying the return type, effects, parameters, local variables (with types), and a body. A special $\mathsf{flag}$ field marks whether the function is an eBPF program, enabling additional checks on section attributes and argument types to ensure consistency with the associated eBPF hook. Both function and global variable declarations may include an optional $\kw{sec}$ field for section attributes, which is $\kw{None}$ for regular functions.
\begin{figure}[h]
		\footnotesize
		\begin{minipage}[t]{.40\columnwidth}
			\[\begin{array}{@{}r@{\ \ }l@{\quad}l}
				e  \in \expr ::= & x                           & \mbox{variable}\\
				               \mid & c   			            & \mbox{constant} \\
				              \mid & \app e e                    & \mbox{application} \\
                                \mid & \prim {\kw{op}} e             & \mbox{primitive operations}\\
				              \mid & \elet x \tau e e            & \mbox{let-binding} \\
				              \mid & \cond e e e                 & \mbox{conditional} \\ 
                                \mid & \structi x {f} {e}           & \mbox{struct initialization}\\
                                \mid & \structf e f                 & \mbox{field access}\\
                                \mid & \kw{none}                    & \mbox{none}\\
                                \mid & \kw{some}\ e                 & \mbox{some}\\
                                \mid & \match e p e                 & \mbox{match}\\
                                \mid & \fore e e d e               & \mbox{for}\\
                                \mid & \kw{unit}                   & \mbox{unit}\\[5ex]
			\end{array}\]
                
                \[\begin{array}{@{}r@{\ \ }l@{\quad}l}
				fdecl  \in \kw{fundecl} ::= & \ef {id} {esg}                         & \mbox{external}\\
                                \mid & \kw{lcall} \ {id} \ {lsg}         & \mbox{internal}\\

                gdecl \in \kw{globdecl} ::= & \globalvar        & \mbox{global}\\[3ex]

			\end{array}\]

                \[\begin{array}{@{}r@{\ \ }l@{\quad}l}
				esg  \in \kw{sig} ::= & \{\kw{args_{type}} : \overline{\tau};\ \kw{res_{type}} : \tau;\ \kw{ef} : \eta;  \kw{cc}
                : \kw{conv} \}      \\
                lsg  \in \kw{sig} ::= & \fundecl\\[2ex]
			\end{array}\]
            
\[
\begin{array}{@{}l@{\qquad}l@{}}
\begin{array}{@{}r@{\ \ }l@{\quad}l}
co \in \text{\kw{composite}} ::= & \overline{x, \overline{y,\tau}} & \text{composite definitions}
\end{array}
&
\begin{array}{@{}r@{\ \ }l@{\quad}l}
\mathsf{P} \in \text{\kw{prog}} := \{ \text{decl} : \overline{\text{fdecl} \mid \text{gdecl}};\ \text{cenv} : \overline{co} \} & \text{program}
\end{array}
\end{array}
\]

		\end{minipage}%
		\begin{minipage}[t]{.70\columnwidth}
         \[\begin{array}{@{}r@{\ \ }l@{\quad}l}
				op  \in \kpop ::= & \kderef                         & \mbox{dereference}\\
                                      \mid & \kmass                          & \mbox{allocation}\\
                                      \mid & \kref                         & \mbox{reference} \\
                                      \mid & \kuop                       & \mbox{unary operation}\\
                                      \mid & \kbop                       & \mbox{binary operation}\\[2ex]
			\end{array}\]

             \[\begin{array}{@{}r@{\ \ }l@{\quad}l}
			      p  \in \kpat ::= & \kpnone                         & \mbox{None}\\
                                     \mid  & \kpsome x                  & \mbox{Some}\\
                                     \mid & \kpbytes {x} {\tau} {y} {\tau}   & \mbox{Bytes}\\[2ex]
			\end{array}\]
            
			\[\begin{array}{r@{\ \ }l@{\quad}l}
				c \in \const ::= & \kw{int} & \mbox{32-bit integers}\\
				               \mid & \kw{long}   & \mbox{64-bit integers}\\[2ex]

				v \in \val := & \kunit                     & \mbox{unit}\\
                              \mid & \kw{bool}             & \mbox{bool}\\
				           \mid & \kw{int}                  & \mbox{int}\\
				           \mid & \kw{long}                 & \mbox{long}\\
                              \mid & \loc l n                 & \mbox{location}\\
                              \mid & \kw{option} \ v            & \mbox{option}\\
				
			\end{array}\]
		\end{minipage}
		
	\caption{Syntax of BeePL}
        \vspace{-1em}
    \label{fig:BeePL}
\end{figure}

\paragraph{Types}\label{sec:types}
BeePL features a rich type system that encompasses a diverse set of type constructs: primitive types ($\tau_v$), pointer types ($\tau*$), effectful function types ($\tfunction{\tau}{\eta}{\tau}$), user-defined struct types annotated with attributes (omitted in paper but incorporated in Rocq development), array types (fixed-size, contiguous collections of elements of type $\tau$ and size $n$), a bytes type (representing raw byte sequences that encapsulate data for structured access), and the unit type (indicating the absence of meaningful data). The complete set of supported types is shown in \figref{Btypes}. Primitive types ($\tau_v$) capture low-level data representations, including booleans, integers parameterized by bitwidth, signedness, and attributes, as well as long integers with similar qualifiers. Size is defined as follows: $sz \in \kw{isize} ::= \keight \mid \ksixteen \mid \kthirtytwo$ and signedness as follows: $s \in \kw{sign} ::= \kunsigned\ \mid \ksigned\ $.

Pointer types ($\tau*$) denote memory references. In particular, $\tref{\tau_b}$ denotes a safe, statically trackable reference to an allocated value (either local or global). References are introduced internally through safe constructs like $\kw{ref}$. This design is crucial in eBPF settings to differentiate between pointers created inside the program and those originating externally. BeePL disallows returning the addresses of local variables and it is ensured by the BeePL type system (present in \figref{typesystem3}). $\toption {\tau*}$ represents references that may be null. This optional pointer type models a nullable reference—a pointer that either refers to a valid value of type $\tau$, or denotes the absence of a value (i.e., null). Such types are essential for capturing partiality, lookup failures, and conditionally available data in low-level systems programming, especially in languages like C where nullability is pervasive but not tracked by the type system.

\begin{figure}[h]
    \centering
    \footnotesize
    \begin{minipage}[t]{0.3\textwidth}
        \begin{math}
        \begin{array}{@{}r@{\ \ }l@{\quad}l}
           \tau  \in \type ::= & \tau_v                                   & \mbox{prim type} \\
                                \mid & \tau*                              & \mbox{pointer}\\
                                \mid & \tfunction {\tau} {\eta} {\tau}    & \mbox{function}\\
                                \mid & \tbstruct {id}                     & \mbox{struct}\\
                                \mid & \tbarray {\tau_v} {n}              & \mbox{array}\\
                                \mid & \bytes                              & \mbox{bytes}\\
                                \mid & \tunit                              & \mbox{unit}\\[2ex]  
        \end{array}
        \end{math}
    \end{minipage}%
    \hspace{0.5em}
    \begin{minipage}[t]{0.3\textwidth}
        \begin{math}
        \begin{array}{@{}r@{\ \ }l@{\quad}l}
           \tau_v  \in \typev ::= & \tint {sz} {s}              & \mbox{int}\\
                                  \mid & \tlong {s}              & \mbox{long} \\
                                  \mid & \tbool                  & \mbox{boolean} \\[2ex]
           \tau_b  \in \typeb ::= & \tau_v                      & \mbox{value}\\
                                  \mid & \tbstruct {id}          & \mbox{struct} \\
                                  \mid & \tbarray {\tau_v} {n}   & \mbox{array}\\[2ex]
        \end{array}
        \end{math}
    \end{minipage}%
    \hspace{0.5em}
    \begin{minipage}[t]{0.3\textwidth}
    \vspace{-12ex}
        \begin{math}
        \begin{array}{@{}r@{\ \ }l@{\quad}l}
           \tau*  \in \tpointer ::= & \tref {\tau_b}            & \mbox{ref type}\\
                                  \mid & \toption {\tau*}       & \mbox{option type}\\
                                  \mid & \tpfunction {\tau} {\eta} {\tau} & \mbox{fun ptr}\\[2ex]
        \end{array}
        \end{math}
    \end{minipage}
    \caption{Types supported by BeePL}
    \label{fig:Btypes}
\end{figure}

In BeePL, optional pointers play a crucial role in maintaining memory safety while enabling expressive pointer usage. BeePL distinguishes between two sources of pointers: (1) those produced by the \kw{ref} expression, which allocates memory and returns a valid address, and (2) those obtained from external functions. By design, the \kw{ref} construct is guaranteed to allocate and return a valid non-null pointer; thus, null-pointer checks are only necessary for external function calls that may return invalid or null pointers. Unlike traditional C-style programming where null checks such as if (p != NULL) are manually inserted and easy to forget, BeePL uses $\toption {\tau*}$ to enforce nullability at the type level. This approach enables the compiler—rather than the programmer—to insert necessary null-pointer checks during translation to C, ensuring that no unchecked dereference can occur. Moreover, the optional type clearly distinguishes between definitely valid pointers and potentially null pointers, making pointer usage both explicit and verifiable.

\textit{Extension of type system to capture effects.} 
BeePL extends its type system with an effect system (\(\eta\)) to track the computational behavior of expressions, including allocation, reading, writing, I/O, and divergence: $\eta \in \effect ::= \efdivergence \mid \efread \mid \efwrite \mid \efalloc \mid \efio.$ Each expression or function is annotated with a list of effects summarizing its possible side effects. This allows the compiler to enforce purity or track memory operations. Making effects explicit enables compile-time reasoning about stateful or external operations, improving security and predictability. The effect concatenation is written as \(\dotplus\). Effects serve as a static approximation of runtime behavior and align with operational semantics, where the tracked effects correspond to actual execution steps. Currently, BeePL tracks abstract effects like read, write, alloc, and io. In future work, we plan to extend this to richer properties—such as memory regions, access permissions, and data flow tags—to support more precise analysis and modular safety reasoning.

\subsection{Type System}
\figref{typesystem} presents the typing rules for BeePL. The typing judgment in BeePL is of the form $\typerule{\Gamma}{\Sigma}\Pi \Psi {e}{\tau}{\eta}$, where $\Gamma$ represents the typing environment that maps variables to their types, and $\Sigma$ denotes the store typing environment, which maps memory locations to their types. $\Pi$ denotes the struct environment, which encapsulates the type information associated with all struct declarations in the program. This environment is consulted during type checking to validate the use of struct expressions, such as field access and initialization, ensuring that the fields conform to their declared types and attributes. On the other hand, $\Psi$ represents the external function signature environment. It records the expected type signatures of all external functions invoked in the program, including eBPF helper functions. Instead of relying on standard BPF signature files, we construct $\Psi$ explicitly as a map within BeePL. This design choice is motivated by the fact that the types of eBPF helper functions, as interpreted in BeePL, may differ from their low-level C counterparts—particularly due to BeePL’s stricter typing and effect system. A more detailed treatment of this distinction is provided in the subsequent discussion of the typing rules. Extensions to the typing environment are denoted using the notation $x \rightarrow \tau, \Gamma$, where the first placeholder represents the new variable, the second denotes its associated type, and the third refers to the existing typing context being extended. Access to the typing environment is denoted by the notation $\_ \_= \_$, where the first placeholder represents the typing environment, the second indicates the variable or location being queried, and the third denotes the associated type. The operation yields either $\kw{Some}~\tau$ if the variable or location exists in the context with type $\tau$, or $\kw{None}$ if it is not present.

It is important to highlight that $\Sigma$ is not required in the executable type checker, as BeePL’s surface syntax does not permit programmers to write explicit memory locations. Concrete locations are introduced only during the evaluation of expressions as part of intermediate computation steps. Consequently, type information for these locations becomes relevant only during the reduction phase and not during the initial type-checking phase.
While the store typing environment is critical for proving meta-theoretical properties such as progress and preservation, it is not involved in the type-checking process for programs written by the user. BeePL disallows implicit casting, ensuring that the type of a location is determined by its initial value at the time of allocation. This guarantees that a location’s type remains fixed throughout execution, and thus, $\Sigma$—which maps each location to its type—is sufficient for type-checking locations during evaluation. 

For clarity of presentation, we omit details such as size and signedness for primitive types like $\kw{int}$ and $\kw{long}$ in the typing rules. The typing rule for variables retrieves the type of the variable $x$ from the typing context $\Gamma$ and produces an empty effect $\phi$. Typing rules for values are generally straightforward, with the exception of locations. In the case of locations, the type of a location $l$ is looked up in the store context $\Sigma$. The typing rule $\kw{TAPP}$ governs function application in BeePL, covering both user-defined and external functions. The judgment $\typerule{\Gamma}{\Sigma}{\Pi}{\Psi}{e}{\tau_e}{\eta}$ checks the function expression’s type, while $\typerules{\Gamma}{\Sigma}{\Pi}{\Psi}{\overline{e}}{\overline{\tau}}{\eta'}$ checks the argument expressions. A predicate $\kw{isExt}(e)$ distinguishes between internal and external functions. If $e$ refers to an external function, its type is retrieved from the external environment $\Psi$ and must match the inferred type $\tau_e$. In both cases, $e$ must have a function type $\tfunction{\tau}{\eta_e}{\tau_r}$, denoting its argument type, effect, and return type. The total effect of the application is the union of: $\eta$ (evaluating the function expression, reducing to its declaration), $\eta'$ (evaluating the arguments), and $\eta_e$ (the function’s declared effect). This rule ensures both type and effect safety, with explicit handling of external functions. While $\kw{isExt}$ and the lookup in $\Psi$ are typically resolved during elaboration or typechecking, we make them explicit in the rule to clarify how external calls are integrated into the type system. For example, in the BeePL function
\lstinline[basicstyle=\normalsize\ttfamily]|fun foo() : int, <alloc,read> { let x : int* = ref(2) in let r : int = !x + 1 in r;}|
the function’s effect captures both the allocation from \texttt{ref(2)} and the read from dereferencing \texttt{x}.

\begin{figure}
 \footnotesize
  \[
\begin{array}{@{}c@{}}
   \inferrule*[left=\kw{TVAR}]{\Gamma \ x = \tau}{\typerule \Gamma \Sigma \Pi \Psi x \tau \phi}\ \ \ \ \ 
   \inferrule*[left=\kw{TCONSI}]{~}{\typerule \Gamma \Sigma \Pi \Psi i {\kw{int}} \phi}\ \ \ \ \
   \inferrule*[left=\kw{TCONSL}]{~}{\typerule \Gamma \Sigma \Pi \Psi l {\kw{long}} \phi}\\[2ex]
   \inferrule*[left=\kw{TCONSB}]{~}{\typerule \Gamma \Sigma \Pi \Psi b {\kw{bool}} \phi}\ \ \ \ \ 
   \inferrule*[left=\kw{TLOC}]{\Sigma \ l = \tref \tau}{\typerule \Gamma \Sigma \Pi \Psi {(l,o)} {\tref \tau} \phi}\\[2ex]
    \inferrule*[left=\kw{TREF}]{\typerule \Gamma \Sigma \Pi \Psi {e} {\tau_b} {\eta} \ \quad \kw{isBasic \ \tau_b} }
                               {\typerule \Gamma \Sigma \Pi \Psi {\kw{ref}(e)} {\tref {\tau_b}} (\kw{alloc} :: \eta)}\ \ \ \ \
    \inferrule*[left=\kw{TDEREF}]{\typerule \Gamma \Sigma \Pi \Psi {e} {\tau_e*} {\eta} \ \quad \tau_e* \neq \toption{\_*} }
                               {\typerule \Gamma \Sigma \Pi \Psi {!e} {\tau_e} (\kw{read} :: \eta)}\\[2ex]
    \inferrule*[left=\kw{TMASSGN}]{\typerule \Gamma \Sigma \Pi \Psi {e_1} {\tau_1*} {\eta_1} \ \quad \tau_1* \neq \toption{\_*}
                                    \quad \typerule \Gamma \Sigma \Pi \Psi {e_2} {\tau_1} {\eta_2}}
                               {\typerule \Gamma \Sigma \Pi \Psi {e_1 := e_2} {\kw{unit}_\tau} (\eta_1 \dotplus \eta_2 \dotplus \kw{write})}\\[2ex]
    \inferrule*[left=\kw{TBOP}]{\typerule \Gamma \Sigma \Pi \Psi {e_1} \tau {\eta_1} \quad 
                                \typerule \Gamma \Sigma \Pi \Psi {e_2} \tau {\eta_2} \quad \kw{isPrim \ \tau} }
                               {\typerule \Gamma \Sigma \Pi \Psi {\kw{bop}(op,e_1,e_2)} \tau {\eta_1 \dotplus \eta_2}}\ \ \ \ \
    \inferrule*[left=\kw{TUOP}]{\typerule \Gamma \Sigma \Pi \Psi e \tau \eta \quad \kw{isPrim \ \tau}}
                               {\typerule \Gamma \Sigma \Pi \Psi {\kw{uop}(op,e)} \tau \eta}\\[2ex]
    \inferrule*[left=\kw{TCOND}]{\typerule \Gamma \Sigma \Pi \Psi e \tau \eta \quad 
                                 \typerule \Gamma \Sigma \Pi \Psi {e_1} {\tau'} {\eta_1} \quad
                                 \typerule \Gamma \Sigma \Pi \Psi {e_2} {\tau'} {\eta_2} \quad
                                 \kw{isBool \ \tau}}
                                {\typerule \Gamma \Sigma \Pi \Psi {\cond e {e_1} {e_2}} \tau' {\eta \dotplus \eta_1 \dotplus \eta_2}}\\[2ex]
   \inferrule*[left=\kw{TAPP}]{\typerule \Gamma \Sigma \Pi \Psi {e} {\tau_e} {\eta} \quad 
                               \typerules \Gamma \Sigma \Pi \Psi {\overline{e}} {\overline{\tau}} {\eta'} \\
                               (\kw{if} \ \kw{isExt \ e} \ \kw{then} \ (\Psi(e) = \tau_f \quad \tau_e = \tau_f \quad \tau_e = {\tfunction {\tau} {\eta_e} {\tau_r}}) \ 
                                \kw{else} \ \tau_e = {\tfunction {\tau} {\eta_e} {\tau_r}}} 
                              {\typerule \Gamma \Sigma \Pi \Psi {\app e e} {\tau_r} {\eta \dotplus \eta' \dotplus \eta_e}}\\[2ex]
    \inferrule*[left=\kw{TBIND}]{\typerule \Gamma \Sigma \Pi \Psi {e_1} {\tau} {\eta_1} \quad 
                                 \typerule {(x \rightarrow \tau, \Gamma)} \Sigma \Pi \Psi {e_2} {\tau_2} {\eta_2}}
                                {\typerule \Gamma \Sigma \Pi \Psi {\elet x \tau {e_1} {e_2}} {\tau_2} {\eta_1 \dotplus \eta_2}}\\[2ex]
    \inferrule*[left=\kw{TSINIT}]{\Gamma \ x = \kw{struct} \ {id}* \quad \Pi \ id = co \quad 
                                  \mathsf{fields_\tau}(co) = \tau_s \quad \typerules {\Gamma} {\Sigma} {\Pi} {\Psi} {\overline {e}} {\tau_s'} {\eta_s} \ \quad \ \tau_s = \tau_s'}
                                {\typerule \Gamma \Sigma \Pi \Psi {\structi x {f} {e}} {\tpstruct {id} {}} {\eta_s}}\\[2ex]
    \inferrule*[left=\kw{TFIELD}]{\typerule {\Gamma} {\Sigma} {\Pi} {\Psi} {e} {\tau} {\eta} 
                                  \quad \tau = \kw{struct} \ {id}* \lor \tbstruct {id} \quad \Pi \ id = co 
                                  \quad \mathsf{field_\tau}(co, f) = \tau_f }
                                {\typerule \Gamma \Sigma \Pi \Psi {\structf e f} {\tau_f} {\eta}}\\[2ex]
    \inferrule*[left=\kw{TFOR}]{\typerule {\Gamma} {\Sigma} {\Pi} {\Psi} {e_1} {\tau_1} {\eta_1} 
                                  \quad \typerule {\Gamma} {\Sigma} {\Pi} {\Psi} {e_2} {\tau_2} {\eta_2}
                                  \quad \typerule {\Gamma} {\Sigma} {\Pi} {\Psi} {e} {\tau} {\eta}
                                  \quad \tau_1 = \tau_2 \quad \tau_1 \in \{\kw{int}, \kw{long}\}\\ 
                                  \quad \ \mathsf{fvar}(e) \cap \mathsf{fvar}(e_1) = \phi 
                                  \quad \mathsf{fvar}(e) \cap \mathsf{fvar}(e_2) = \phi}
                                {\typerule \Gamma \Sigma \Pi \Psi {\fore {e_1} {e_2} d e} {\tau} {\eta_1 \dotplus \eta_2 \dotplus \eta}}\\[2ex]
   \inferrule*[left=\kw{TSEQ}]{\typerule \Gamma \Sigma \Pi \Psi {e} \tau \eta \quad 
                               \typerules \Gamma \Sigma \Pi \Psi {es} {\tau_s} {\eta'}}
                               {\typerules  \Gamma \Sigma \Pi \Psi {e :: es} {\tau :: \tau_s} {\eta \dotplus \eta'}}\\[2ex]

  \end{array}
 \]
 \caption{Typing rules for expressions.}
\label{fig:typesystem}
\end{figure}

The rule $\kw{TMASSGN}$ defines the typing rule for memory assignment expressions of the form $e_1 := e_2$. It ensures that $e_1$ is of a pointer type, which may originate either from within the BeePL program via the $\kw{ref}$ operator or from external sources, such as pointers returned by eBPF helper functions. \lstinline[basicstyle=\normalsize\ttfamily]|fun bar() : int, <alloc,read,write,read> { let x : int* = ref(2) in let _ : _ in x := !x + 1 in !x;}|. The effect annotation associated with the function $\kw{bar}$ captures the effects produced by its body, which includes the use of $\kw{ref}$, dereferencing ($!$), and memory assignment ($:=$). The rule enforces that the memory assignment expression has the unit type as its result, reflecting that the purpose of the expression is to produce a side effect rather than a value.

The typing rules for unary and binary operators ($\kw{TUOP}$ and $\kw{TBOP}$) are straightforward. They accumulate the effects produced during the evaluation of their operands and ensure that these operations are performed only on primitive data types. In BeePL, pointer arithmetic is disallowed, so such operators cannot be applied to pointer types. The typing rules for conditional and let-bindings are also pretty straightforward. Type checking of struct initialization ($\kw{TSINIT}$) ensures that all fields are initialized with values matching their declared types and appear in the correct order as defined in the struct environment $\Pi$ (ensured by the function $\kw{fields_\tau}$). For field access ($\kw{TFIELD}$), the type checker looks up the struct type in $\Pi$ to verify that the accessed field exists and returns the corresponding field's type. 

The type checking rule $\kw{TFOR}$ for the for loop in BeePL ensures that the loop always terminates by enforcing strict control over its structure and dependencies. The loop has the form $\fore {e_1} {e_2} d e$, where $e_1$ and $e_2$ represent the lower and upper bounds, $d$ is the fixed direction (up or down), and $e$ is the loop body. The typing rule requires that $e_1$ and $e_2$ are integer/long expressions, and that the loop body $e$ does not reference any variables used in $e_1$ or $e_2$. This is enforced by checking that the set of free variables in the body is disjoint from those in the bounds. As a result, the number of iterations is fixed and independent of the loop body, ensuring that the loop cannot introduce unbounded behavior. Consider the following code snippet: \lstinline[basicstyle=\normalsize\ttfamily]|fun loop() : int, <alloc,read,write,read> { let x : int* = ref(2) in let _  = for(1 ... 5, Up) {x := !x + 1} in !x;}|. The typing rule for the for loop captures the effect of the loop body—for example, <read, write, read> for the expression like x := !x + 1—and also tracks the free variables and constants involved (such as x, 1, and 5). This information is used to ensure that the loop bounds remain independent of the body, allowing the type system to statically guarantee the absence of the divergence effect. We conservatively include the per-iteration effect of the loop body once in the overall effect. This ensures soundness without requiring static knowledge of the number of iterations. 
\begin{figure}
 \footnotesize
  \[
\begin{array}{@{}c@{}}
   \inferrule*[left=\kw{TNONE}]{~}
                               {\typerule  \Gamma \Sigma \Pi \Psi {\kw{None}} {\toption {\tau*}} {\phi}}\ \ \ \ \
   \inferrule*[left=\kw{TSOME}]{\typerule \Gamma \Sigma \Pi \Psi e {\tau*} \eta}
                               {\typerule  \Gamma \Sigma \Pi \Psi {\kw{Some} \ e} {\toption {\tau*}} {\eta}}\\[2ex]
   \inferrule*[left=\kw{TMATCHO}]{\typerule \Gamma \Sigma \Pi \Psi {m} {\toption {\tau_m*}} {\eta_{m}} \quad 
                                   \quad \overline{p,e} = \{(\kw{Pnone}, e_1), (\kw{Psome \ x}, e_2), \ldots\} \quad \\
                                  \typerules {(x \rightarrow \tau_m*, \Gamma)} \Sigma \Pi \Psi {\overline{e}} {\tau_s} {\eta_s} \quad
                                   \forall \tau_1, \tau_2 \in \tau_s, \tau_1 = \tau_2 }
                                 {\typerule  \Gamma \Sigma \Pi \Psi {\match {m} p e} {\tau} {\eta_{m} \dotplus \eta_s}}\\[2ex]
  \inferrule*[left=\kw{TMATCHB}]{\typerule \Gamma \Sigma \Pi \Psi {m} {\kw{bytes}} {\eta_{m}} \quad 
                                  \overline{p,e} = \{(\kpbytes {x} {\tau_x} {y} {\tau_y}, e_1), \ldots\} \quad\\ 
                                  \typerules {(x \rightarrow \tau_x, \overline{y \rightarrow \tau_y}, \Gamma)} \Sigma \Pi \Psi {\overline{e}} {\tau_s} {\eta_s} \quad
                                   \exists \tau_r : \forall \tau \in \tau_s, \tau_r = \tau }
                                 {\typerule  \Gamma \Sigma \Pi \Psi {\match {m} p e} {\tau_r} {\eta_{m} \dotplus \eta_s}}
  \end{array}
 \]
 \caption{Typing rules for additional expressions.}
\label{fig:typesystem1}
\end{figure}
\figref{typesystem1} presents additional typing rules emphasizing external calls and match expressions. The typing rule for pattern matching on an option expression in BeePL ensures that both the None and Some cases are handled explicitly and safely. Given an expression $m$ of type option $\tau_m*$, the rule typechecks two branches: one for \kw{Pnone}, which produces a result of type $\tau$, and one for $\mathsf{Psome} \ x$, where $x$ is bound to $\tau_m*$, and the body is typechecked under this binding to also yield type $\tau$. Both branches must return the same type, and the total effect includes the effect of evaluating $m$ as well as those of both branches. This rule enforces that a potentially null-like value must be explicitly unpacked before use, guaranteeing that no dereferencing occurs unless the value is known to be present, thus eliminating null pointer dereference errors at the type level.

For example, the eBPF helper function $\mathsf{bpf\_map\_lookup\_elem}$ returns a pointer to a value associated with a key in the eBPF map, but this pointer may be NULL if the key is not found. To model this uncertainty safely, BeePL wraps the return type of such functions in an option type. The type system then enforces, through the match expression and the typing rule $\kw{TMATCHO}$, that any use of such a pointer must first explicitly pattern match on the result. This guarantees that the Pnone case (corresponding to a null pointer) is handled before any dereferencing can occur, ensuring memory safety through compile-time checks. Revisiting the typing rule for function application, in the case of external function calls, BeePL uses the environment $\Psi$ to typecheck the call against the externally defined function signature. The environment $\Psi$ maps each external function to its type, including argument types, return type, and associated effects. For example, the signature of the helper function $\textcolor{blue}{\mathsf{bpf\_map\_lookup\_elem}}$ is defined in $\Psi$ as $\tfun {(\textcolor{blue}{{\toption {\kw{map*}}}}, \textcolor{blue}{\toption {\kw{long*}}})} {<\kw{{\textcolor{red}{read}}}, \kw{\textcolor{red}{io}}>} \textcolor{blue}{\toption{\kw{long*}}}$, indicating that it takes optional pointers to a map (a pointer to a struct, which creates map in eBPF) and a key, may perform \kw{read} and \kw{io} effects, and returns an optional pointer to a long integer. The $\kw{io}$ effect distinguishes operations that involve interaction with the external environment, such as calls to external functions or system-level resources. It allows the type system to clearly separate internal computation from externally observable behavior. Similarly, the eBPF helper function $\textcolor{blue}{\mathsf{bpf\_get\_current\_uid\_gid}}$ is assigned the type $\tfun{\textcolor{blue}{()}}{<\kw{\textcolor{red}{io}}>}{\textcolor{blue}{\kw{long}}}$ in BeePL. This signature indicates that the function does not perform any memory reads or writes, but it interacts with the external environment—such as querying kernel state—which results in an $\kw{io}$ effect.

The typing rule $\kw{TMATCHB}$ handles pattern matching on byte sequences when attempting to parse structured data, such as C-style headers, from raw bytes. In a pattern of the form $\kpbytes{x}{\tau_x}{y}{\tau_y}$, the byte buffer m (e.g., a field like data in \kw{xdp\_md} BPF structure, illustrated in the example present in \figref{ebpfprogb4}) is being parsed into a structure of type $\tau_x$ (e.g.,eth), which is referred to by the variable $x$. The result of this parsing is a sequence of bindings represented by the list of fields y with types $\tau_y$, which correspond to the fields of the struct type $\tau_x$. The rule first checks that $m$ has type $\kw{bytes}$, with effect $\eta_m$. Then, each branch expression is typechecked in an extended typing context where $x$ is bound to the struct type $\tau_x$ and the individual field bindings in $\overline{y : \tau_y}$ are available. All branch expressions must return the same result type $\tau_r$, and their effects are combined into $\eta_s$. This rule ensures that parsing raw bytes into a structured type is only allowed when it is type-safe, and that all resulting fields are well-typed and locally available for use in the corresponding match branch.

\begin{figure}[h]
 \footnotesize
  \[
\begin{array}{@{}c@{}}
\inferrule*[left=\kw{TFDECL}]{
  \typerule{\overline{x : \tau},\ \overline{y : \tau},\ \Gamma}{\Sigma}{\Pi}{\Psi}{e}{\tau}{\eta} \wedge 
  \kw{rt} = \tau \wedge 
  \kw{ef} = \eta \wedge 
  \neg(\tau*) \wedge
  \forall x \in \overline{x}.\ \mathsf{section\_ok}(\tau,\ \kw{s})
}{
  \Gamma, \Sigma, \Pi, \Psi \vdash\ \{\kw{sec}: \kw{s}; \kw{rt}:\tau;\ \kw{ef}:\eta; \kw{cc} : \kw{conv};\ \kw{args}:\overline{x,\tau};\ \kw{vars}:\overline{y,\tau};\kw{body}:e; \kw{flag}:\kw{bool}\}
}\\[2ex]
\inferrule*[left=\kw{TGDECL}]{
  \kw{typeof}(v) = \tau \wedge \mathsf{section\_ok}(\tau,\ \kw{s})
}{
  \Gamma, \Sigma, \Pi, \Psi \vdash\ \{\kw{gvar} : (x : \tau, v);  \kw{sec} : s\}
}\quad 
\inferrule*[left=\kw{TPROG}]{
  p = \overline{\kw{id, decl}} \wedge \forall d \in \kw{decl}. \vdash d
}{
  \Gamma, \Sigma, \Pi, \Psi \vdash\ p}
\end{array}
 \]
 \caption{Typing rules for BeePL program.}
\label{fig:typesystem3}
\end{figure}
\figref{typesystem3} defines the typing rules for BeePL programs. It enforces that function declarations must return non-pointer types—preventing functions from returning addresses—and ensures that each argument's type is compatible with the declared section attribute through the \textsf{section\_ok} predicate. The typing environment \(\Gamma\) initially starts empty and is extended with function arguments and local variables when checking the function body. The environment \(\Psi\) contains the signatures of external functions, while \(\Sigma\) and \(\Pi\) track the store and global contexts, respectively.
The predicate $\mathsf{section\_ok}(\tau, \kw{s})$ ensures that the type of a function argument is compatible with the section attribute of the function. If no section is specified ($\kw{s} = \texttt{None}$), no constraint is enforced. Otherwise, the predicate checks that specific argument types (e.g., $\kw{struct \ xdp\_md, struct \ \_\_sk\_buff)}$ are only used in functions annotated with the corresponding section (e.g., "xdp", "socket"). For all other types or sections, the predicate defaults to allowing the combination.

\subsection{Operational Semantics}\label{sec:semantics}
The small-step operational semantics of expressions are defined over a state $\bstate {\Delta} {\Omega} {\Theta}$, where $\Delta$ denotes global environment, comprising function definitions, global variable declarations, and struct type definitions, $\Omega$ is the variable environment, mapping local variables to memory locations and their respective types: ($\Omega := [x_1 \rightarrow (l_1, \tau_1), ..., x_n \rightarrow (l_n, \tau_n)])$, and $\Theta$ represents the memory mapping from locations to values($\Theta := [l_1 \rightarrow v_1, ..., l_n \rightarrow v_n]$). BeePL uses the memory model of CompCert~\cite{inproceedings}. We assume that $\mathsf{dom}(\Omega) \cap \mathsf{dom}(\Delta) = \Phi$, indicating that the domains of $\Omega$ and $\Delta$ are disjoint—that is, no variable name appears in both the local environment and the global environment. Access to environments is expressed using the notation $\_[\_]$, where the outer component refers to the environment or store, and the inner component denotes the key being looked up. Updates to environments are denoted using the notation $[\_ \rightarrow \_]$, which represents a mapping from a key to a new value. New allocations in the environment are represented using the notation $[|\_ \rightarrow \_|]$, which depends on the type of the variable being allocated. Type information is crucial for reserving the correct amount of memory. While types are omitted in the rules for clarity of presentation, they are fully taken into account in the actual implementation. Each component of the program state is accessed using a projection notation. For instance, $s_\Theta$ denotes the memory component of the state $s$. This operational model supports effectful computation while preserving a clear separation between name binding, memory layout, and program behavior, facilitating both reasoning and formal verification. The operational semantics of the language is present in \figref{exp_sem}.
\begin{figure}
 \footnotesize
  \[
\begin{array}{@{}c@{}}
    \inferrule*[left=\kw{LVAR}]{s_{\Omega}[x] = l, \tau_x \wedge s_{\Theta}[l] = v}{\eval s x s v} \ \ \ \ \ \
   \inferrule*[left=\kw{GVAR}]{x \notin dom(s_{\Omega}) \wedge s_{\Delta}[x] = l \wedge s_{\Theta}[x] = v}{\eval s x s v}\\[2ex]
    \inferrule*[left=\kw{REF}]{\eval s e {{s'}} e'}{\eval s {\kref(e)} {{s'}} {\kref(e')}}\ \ \ \ \ \
    \inferrule*[left=\kw{REFV}]{\kw{fresh}(l) \wedge \Theta' = s_{\Theta}[l \rightarrow v] \wedge \Sigma[l \rightarrow *\kw{typeof}(v)]}{\eval s {\kref(v)} {\prec s_{\Delta},\Omega,\Theta'\succ} l} \\[2ex]
    \inferrule*[left=\kw{DREF}]{\eval s e {{s'}} e'}{\eval s {!(e)} {{s'}} {!(e')}}\ \ \ \ \ \
    \inferrule*[left=\kw{DREFV}]{\Theta[(l,\kw{o})] = v}{\eval s {!(l,\kw{o})} s v}\\[2ex]
    \inferrule*[left=\kw{MASSGN1}]{\eval s {e_1} {s'} {e_1'}}{\eval s {e_1 := e_2} {s'} {e_1' := e_2}}\ \ \ \ \ \
      \inferrule*[left=\kw{MASSGN2}]{\eval s e {s'} e'}{\eval s {l := e} {s'} {l := e'}}\ \ \ \ \ \
      \inferrule*[left=\kw{MASSGNV}]{\Theta' = \Theta[l \rightarrow v]}{\eval s {l := v} {\prec s_{\Delta},s_{\Omega},\Theta'\succ} \kw{unit}}\\[2ex]
    \inferrule*[left=\kw{LET}]{\eval s {e_1} {s'} {e_1'}}{\eval s {\elet x \tau {e_1} {e_2}} {s'} {\elet x \tau {e_1'} {e_2}}}\ \ \ \ \ \
      \inferrule*[left=\kw{LETV}]{{e_2}[x \leftarrow v] = {e_2'}}{\eval s {\elet x \tau {v} {e_2}} s {e_2'}}\ \ \ \ \ \\[2ex]
    \inferrule*[left=\kw{UOP}]{\eval s e {s'} e'}{\eval s {\mathsf{uop}(op,e)} {s'} {\mathsf{uop}(op,e')}}\ \ \ \ \ \
      \inferrule*[left=\kw{UOPV}]{\mathsf{uop_{sem}}(op,v) = v'}{\eval s {\mathsf{uop}(op,v)} {s'} {v'}}\ \ \ \ \ \
    \inferrule*[left=\kw{BOP1}]{\eval s {e_1} {s'} {e_1'}}{\eval s {\mathsf{bop}(op,e_1,e_2)} {s'} {\mathsf{bop}(op,e_1',e_2)}}\\[2ex]
    \inferrule*[left=\kw{BOP2}]{\eval s {e_2} {s'} {e_2'}}{\eval s {\mathsf{bop}(op,v_1,e_2)} {s'} {\mathsf{bop}(op,v_1,e_2')}}\ \ \ \ \ \
    \inferrule*[left=\kw{BOPV}]{v = \kw{if} \ \mathsf{unsafe}(op,v_1,v_2,s) \ \kw{then} \ 0 \ \kw{else} \ \mathsf{bop_{sem}}(op,v_1, v_2)}{\eval s {\mathsf{bop}(op, v_1, v_2)} {s'} {v}}\ \ \ \ \ \\[2ex]
    \inferrule*[left=\kw{COND}]{\eval s {e_1} {s'} {e_1'}}{\eval s {\cond {e_1} {e_2} {e_3}} {s'} {\cond {e_1'} {e_2} {e_3}}}\ \ \ \ \
    \inferrule*[left=\kw{CONDT}]{~}{\eval s {\cond {\kw{true}} {e_2} {e_3}} s e_2}\\[2ex]
    \inferrule*[left=\kw{CONDF}]{~}{\eval s {\cond {\kw{false}} {e_2} {e_3}} s e_3}\ \ \ \ \
    \inferrule*[left=\kw{STRUCT}]{\forall \ e \ \in \ \overline{e}: \eval s {e} {s'} {e'}}{\eval s {\structi x {\tau} {f} {e}} {s'} {\structi x {\tau} {f} {e'}}}\\[2ex]
      \inferrule*[left=\kw{STRUCTV}]{\tau = \tpstruct {sid} \wedge \Delta[sid] = co \wedge \forall \ f \ \in \overline{f}, \mathsf{field_{ofs}}(\Delta, f, co) = [o_1, \ldots,  o_n] \wedge \Omega[|x \rightarrow (l, o)|] = \Omega'\\   
                                     \wedge \ \Theta[f_1, (l + o + o_1) \rightarrow v_1, \ldots, f_n, (l + o + o_n) \rightarrow v_n] = \Theta'}{\eval s {\structi x {f} {v}} {\prec s_\Delta, \Omega', \Theta'\succ} {\loc l o}}\\[2ex]
    \inferrule*[left=\kw{FACCESS}]{\eval s {e} {s'} {e'}}{\eval s {\structf e f} {s'} {\structf e' f}}\ \ \ \ \ \
    \inferrule*[left=\kw{FACCESSV}]{\tau = \mathsf{struct \ id} \wedge \Delta[id] = co \wedge \mathsf{field_{ofs}}(\Delta, f, co) = o'}{\eval s {\structf {(l,o)} f} s (l, o + o')}\\[2ex]
    \inferrule*[left=\kw{FOR1}]{\eval s {e_1} {s'} {e_1'}}{\eval s {\fore {e_1} {e_2} d {e}} {s'} {\fore {e_1'} {e_2} d {e}}}\ \ \ \ \ \
    \inferrule*[left=\kw{FOR2}]{\eval s {e_2} {s'} {e_2'}}{\eval s {\fore {v_1} {e_2} d {e}} {s'} {\fore {v_1} {e_2'} d {e}}}\\[2ex]
    \inferrule*[left=\kw{FORV}]{\mathsf{range}(v1,v2,d) = n \wedge s, (e, \ldots, e)_n = {s'}, v}{\eval s {\fore {v_1} {v_2} d {e}} {s'} v}\ \ \ \ \ \
    \inferrule*[left=\kw{APP1}]{\eval s {e_1} {s'} {e_1'}}{\eval s {\app {e_1} {e_2}} {s'} { \app {e_1'} {e_2}}}\ \ \ \ \ \
    \inferrule*[left=\kw{APP2}]{\forall e \in \overline{e}, \eval s e {s'} e'}{\eval s {\app {fn} {e}} {s'} { \app {fn} {e'}}}\\[2ex] \inferrule*[left=\kw{APP3}]{\Delta[fn] = fd \wedge fd = \{\kw{sec} := sa; \kw{rt} := \tau; \kw{ef} := \eta; \kw{args} := as; \kw{vars} := vs; \kw{body} := e; \kw{flag} := b\} \\ \wedge \ as \cap vs = \phi \wedge \Omega[|ls \leftarrow as, l{s'} \leftarrow vs|] = \Omega' \wedge \Theta[ls \rightarrow \overline{v}]}{\eval s {\app {fn} {v}} {\prec s_\Delta, \Omega', \Theta' \succ} {e}}\\[2ex]
    \inferrule*[left=\kw{EAPP}]{\Delta[fn] = fd \wedge fd = \{\kw{arg_{type}} := \overline{\tau}; \kw{arg_{type}} := \tau_r; \kw{ef} := \eta; \kw{cc} := cc\} \wedge fn(\overline{v}) = v', {s'}}{\eval s {\app {fn} {v}} {s'} {v'}}\\[2ex]
    
  \end{array}
 \]   
 \caption{Semantics for expressions.}
\label{fig:exp_sem}
\end{figure}

The evaluation of a local or global variable in the state $\bstate{\Delta}{\Omega}{\Theta}$ proceeds by first resolving the variable $x$ to its corresponding location $l$ in $\Omega$ (for local variables) or $\Delta$ (for global variables), and then dereferencing $l$ in memory $\Theta$ to obtain the associated value. Rule $\mathsf{REFV}$ and $\mathsf{REF}$ define the operational semantic of $\mathsf{ref}$ construct. The construct $\mathsf{ref}$ in BeePL explicitly allocates a value to memory and returns the new location. The programmers are not intended to write expressions involving explicit, concrete locations: such expression will arise only as intermediate results during the operational semantics. This ensures that all pointers originate from well-defined allocation sites within the program, making them tractable and easier to reason about. Unlike C, where the address of any variable can be taken arbitrarily, BeePL guarantees that every address corresponds to memory that has been explicitly allocated. Similarly, the rules $\mathsf{DEREF}$ and $\mathsf{DEREFV}$ define the operational semantics of the dereference construct $!$. They evaluate a pointer by locating the address $l$ with offset $o$ in the memory $\Theta$, and produce the value stored at that specific location. The memory assignment construct $e_1 := e_2$ updates the value stored in the memory location pointed out by $e_1$ with the result of the evaluation $e_2$. Unlike let-binding, which introduces an immutable binding and does not alter memory, $:=$ performs a side-effecting update, modifying the program state. 

Let-binding introduces an immutable variable bound to the result of evaluating an expression, without performing memory updates—it simply extends the environment. Unlike \texttt{:=}, it is purely functional and side-effect free. In \(\elet{x}{\tau}{e_1}{e_2}\), the result of \(e_1\) is bound to \(x\) in \(e_2\). Unary and binary operators follow standard evaluation rules, but BeePL handles operations like division, modulo, and shifts explicitly to avoid undefined behavior common in C. Division and modulo return zero on division-by-zero or overflow; shift operations also return zero when exceeding bit-width or on overflow. This ensures deterministic semantics. The rule \(\mathsf{BOPV}\) uses an auxiliary function \(\mathsf{unsafe}\) to detect such cases based on the operator, operand values, and signedness. Struct initialization (\(\mathsf{STRUCTV}\)) allocates memory and assigns values to fields according to the struct layout. Field access retrieves the value at the correct offset, determined using type information from the global environment, even if types are not shown explicitly in the rule.

The operational semantics of the for loop in the form $\fore {e_1} {e_2} d {e}$ define a bounded iteration over a range of values determined by the evaluated results of $e_1$, $e_2$, and the direction $d$ (which can be either increasing or decreasing). Both $e_1$ and $e_2$ are evaluated exactly once before the loop begins and remain unchanged throughout the execution of the loop body $e$. This design guarantees termination, as the number of iterations is fixed and entirely determined by the initial bounds. By ensuring the immutability of the range and preventing the loop body from influencing the control flow, the construct avoids common issues such as unintended infinite loops or mutation-induced errors—offering a safer and more analyzable alternative to traditional C-style loops. The translation of BeePL for-loop to C for-loop is discussed in \secref{appendix}. The operational semantics of function application and external call is pretty straightforward.

\begin{figure}[h]
 \footnotesize
  \[
\begin{array}{@{}c@{}}
    \inferrule*[left=\kw{NONE}]{~}{\eval s {\kw{None}} s {\kw{None}}}\ \ \ \ \ \
    \inferrule*[left=\kw{SOME}]{\eval s e s' e'}{\eval s {\kw{Some} \ e} s' {\kw{Some} \ e'}}\ \ \ \ \ \
    \inferrule*[left=\kw{SOMEV}]{~}{\eval s {\kw{Some} \ v} s {\kw{Some} \ v}}\\[2ex]
    \inferrule*[left=\kw{MATCH1}]{\eval s e s' e'}{\eval s {\match e p e} s' {\match {e'} p e}}\\[2ex]
    \inferrule*[left=\kw{MSOME}]{\mathsf{isSome} \ v \wedge \overline{p,e} = \{(\kw{Pnone}, e_1), (\kw{Psome \ x}, e_2), \ldots\}}
    {\eval s {\match v p e} s {{e_2}[x \leftarrow v]}}\ \ \ \ \ \
    \inferrule*[left=\kw{MNONE}]{\overline{p,e} = \{(\kw{Pnone}, e_1), (\kw{Psome} \ x, e_2), \ldots\}}
    {\eval s {\match {\kw{None}} p e} s {e_1}}\\[2ex]
    \inferrule*[left=\kw{MBYTES}]{
  \mathsf{typeof}(v) = \kw{bytes} \ \wedge \
  \overline{p,e} = \{(\kpbytes {x} {\tau_x} {y} {\tau_y}, e_1), (p_2, e_2), \ldots\} \ \wedge \
  \kw{length}(v) \geq \kw{sizeof}(\tau_x) \ \wedge \\
  \kw{extract}(v, \tau_x) = (v_x, \{ y_1 \mapsto v_1, \ldots, y_n \mapsto v_n \})
}{
  \eval{s}{\match{v}{p}{e}}{s'}{e_1[\overline{y_i \mapsto v_i},\ x \mapsto v_x]}
}\\[2ex]
    \inferrule*[left=\kw{MBYTESF}]{\mathsf{typeof(v)} = \kw{bytes} \wedge \overline{p,e} = \{(\kpbytes {x} {\tau} {y} {\tau}, e_1), (p_2, e_2), \ldots\} \
    \wedge \ \kw{length}(v) < sizeof(\tau_x)}
    {\eval s {\match v p e} s {e_2}}\\[2ex]
    \inferrule*[left=\kw{EMPTY}]{~}{\evals s {[::]} s {[::]}}\ \ \ \ \ \
    \inferrule*[left=\kw{SEQH}]{\eval s e s' e'}{\evals s {(e :: es)} s' {(e' :: es)}}\ \ \ \ \ \
        \inferrule*[left=\kw{SEQT}]{\evals s {es} s' {es'}}{\evals s {(v :: es)} s' {(v :: es)}}\ \ \ \ \ \

  \end{array}
 \]   
 \caption{Semantics for additional expressions.}
 \label{fig:exp_sem1}
\end{figure}

\figref{exp_sem1} presents the additional operational semantics for pattern matching in BeePL. BeePL supports three kinds of patterns, each serving a distinct purpose. Patterns over option types, such as $\kw{Pnone}$ and $\kw{Psome} \ x$, ensure that programmers explicitly handle cases where the evaluation of an expression $e$ results in either $\kw{None}$ or $\kw{Some}$. In the case of $\kw{None}$, the semantics proceed with the expression corresponding to the $\kw{Pnone}$ branch, closely resembling the handling of null pointers in C. In the case of $\kw{Some} \ x$, the value is extracted and execution continues with the expression associated with the $\kw{Psome}$ branch. The $\kw{MBYTES}$ rule defines how BeePL evaluates pattern matching over byte buffers. It checks that the value being matched is of type bytes and that the buffer is large enough to hold a value of the target type $\tau_x$. The pattern $\kpbytes{x}{\tau_x}{y}{\tau_y}$ attempts to parse the byte buffer into a value $v_x$ of type $\tau_x$, binding it to $x$. If $\tau_x$ is a struct, its individual fields are also extracted and bound to variables $\overline{y}$. The match succeeds if extraction is successful, and the body expression $e_1$ is evaluated with all these bindings substituted in. This rule handles both flat and structured types uniformly, ensuring that byte parsing is size-checked and safe before access. This form of pattern matching enforces memory bounds checks directly from type information and safe-semantics conditions, avoiding any reliance on low-level pointer arithmetic. The corresponding translation to C is provided in \secref{appendix}. The $\kw{EMPTY}$ rule states that evaluating an empty sequence returns an empty sequence without modifying the state. The $\kw{SEQH}$ rule evaluates the head of a non-empty sequence if it is not yet a value, while the $\kw{SEQT}$ rule allows evaluation to proceed to the tail only after the head has evaluated to a value.

\section{Formally Verified Type System and the Compiler}\label{sec:formalproofs}
This section presents the core formal foundations of BeePL, focusing on its type system and verified compilation strategy. We first describe the type system that enforces key safety properties expected by the eBPF runtime, such as memory safety, control-flow integrity, and termination. We then provide formal proofs that well-typed BeePL programs satisfy a set of eBPF properties at the source level. Finally, we describe the compilation process that translates BeePL programs to BPF bytecode while preserving the functional correctness.

\subsection{Formally Verified Type System}
We start by proving some key properties related to the soundness of the type system. The BeePL type system is sound: any program that typechecks is guaranteed to never reach a stuck state. That is, every well-typed expression either evaluates to a value or can take a valid step in the operational semantics.

\paragraph{A well-typed BeePL program never reaches a stuck state:}
This property follows from the progress lemma~\ref{progress}, which ensures that every well-typed expression is either a value or can take a reduction step. Combined with preservation, this guarantees that the program maintains type correctness throughout execution, ruling out runtime type errors.

Since our operational semantics and typing rules involve various entities like local store, global store, and memory, we need a definition about well-formedness of the state. A well-formed state is defined as follow:
\begin{definition}[Well Formed State]\label{well-formed-state}
A state $s$ is well-formed, written as $\kw{well\_formed\_state}$, if:
\footnotesize
\[
\forall \Gamma, \Sigma, s
\begin{cases}
\forall x, \tau. \ \Gamma \ x = \mathsf{Some} \ \tau &
\begin{cases}
x \in dom(s_{\Omega}) & 
\exists l, v, o.\ s_{\Omega}[x] = (l, \tau) \wedge \Sigma \ l = \tau \wedge s_\Theta[l, o] = v \\[0.8ex]
x \notin dom(s_{\Omega}) &
\exists l, o, v.\ s_\Delta[x] = l \wedge \Sigma \ l = \tau \wedge s_{\Theta}[l,o] = v
\end{cases} \\[1.2ex]

\forall x, \tau. \ \Sigma \ x = \tau &
\kw{isValidAccess}(\Theta, x, \kw{Freeable}) \\[0.8ex]

\forall x.\ \kw{isValidAccess}(\Theta, x, \kw{Freeable})  &
\forall \tau. \ \Sigma \ x = \tau \\[0.8ex]

\forall \Pi, \Psi. &
\forall l,\ \eta,\ \tau_a,\ \eta',\ \tau_r,\ \tau_s.\ 
\typerule{\Gamma}{\Sigma}{\Pi}{\Psi}{l}{\eta}{\tfunction{\tau_a}{\eta}{\tau_r}} 
\wedge\ \typerules{\Gamma}{\Sigma}{\Pi}{\Psi}{es}{\overline{\tau_s}}{\eta'} \\
& \exists f.\ s_{\Delta}[l] = f \wedge f.(\kw{args}) \cap f.(\kw{vars}) = \phi \wedge\ |\kw{args}| = |es| \\
& \wedge \overline{\tau_s} = \kw{typeof}(f.\kw{args}) \wedge \tau_r = f.(\kw{rt}) \\[1.2ex]

\forall \Pi. &
\forall x, id. \
\Gamma \ x = \kw{struct} \ id* \wedge \exists co. \ \Pi \ id = \kw{Some} \ co
\end{cases}
\]
\end{definition}

The predicate \texttt{well\_formed\_state} characterizes when a memory store is consistent with a typing context \(\Gamma\) and store typing environment \(\Sigma\). It ensures that:
\begin{itemize}
  \item local variables in \(\Gamma\) point to valid locations in \(\Omega\), with well defined types in \(\Sigma\);
  \item global variables are absent from \(\Omega\) but mapped in \(\Delta\) with well defined types in \(\Sigma\);
  \item each location typed in \(\Sigma\) is accessible in \(\Theta\) with \texttt{Freeable} permission, and all such accessible locations are well-typed;
  \item functions in \(\Delta\) match their type signatures and avoid variable name clashes;
  \item struct variables must be declared in \(\Pi\).
\end{itemize}
This invariant underpins type soundness: it guarantees progress by ruling out stuck states from invalid memory access, and preservation by maintaining type consistency across evaluation steps.

All lemmas are also established for lists of expressions; however, their statements are omitted for brevity. The complete proofs are provided in Appendix.
\paragraph{Progress} The progress lemma presents the progress property for BeePL's expressions and expression sequences. It asserts that any well-typed expression or sequence of expressions is either a value or can take a computational step, ensuring that well-typed programs do not get stuck during execution.

\begin{lemma}\label{progress}[Progress]
\small
\begin{align*}
\forall \Gamma, \Sigma, \Pi, \Psi, e, \eta, \tau, s.\quad
&\typerule{\Gamma}{\Sigma}{\Pi}{\Psi}{e}{\tau}{\eta} 
\wedge \mathsf{well\_formed\_state}(\Gamma, \Sigma, s) \implies \\
&\mathsf{isVal}\ e 
\vee \exists s', e'.\ \eval{s}{e}{s'}{e'} 
\wedge \mathsf{well\_formed\_state}(\Gamma, \Sigma, s')
\end{align*}
\end{lemma}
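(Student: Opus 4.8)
The plan is to prove the progress lemma by \emph{structural induction on the typing derivation} $\typerule{\Gamma}{\Sigma}{\Pi}{\Psi}{e}{\tau}{\eta}$, with a simultaneous (mutual) induction over the sequence-typing judgment $\typerules{\Gamma}{\Sigma}{\Pi}{\Psi}{es}{\overline{\tau_s}}{\eta'}$ to handle the list-of-expressions case. For each typing rule I consider the expression at the root. In the value cases ($\kw{TVAR}$ after lookup, $\kw{TCONSI}$, $\kw{TCONSL}$, $\kw{TCONSB}$, $\kw{TLOC}$, $\kw{TNONE}$, and $\kw{unit}$) the conclusion $\mathsf{isVal}\ e$ holds immediately, except that variables require a step: by $\kw{TVAR}$ we know $\Gamma\ x = \tau$, and well-formedness (Definition~\ref{well-formed-state}) guarantees that $x$ resolves either to a location in $s_\Omega$ with $s_\Theta[l,o]=v$ defined, or as a global in $s_\Delta$, so rule $\kw{LVAR}$ or $\kw{GVAR}$ applies and $e$ steps to a value. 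For the compound cases I use the induction hypotheses on the subexpressions: each subexpression is either a value or can step. If some subexpression can step, I apply the corresponding \emph{congruence} rule (e.g.\ $\kw{REF}$, $\kw{DREF}$, $\kw{BOP1}/\kw{BOP2}$, $\kw{MASSGN1}/\kw{MASSGN2}$, $\kw{LET}$, $\kw{COND}$, $\kw{FOR1}/\kw{FOR2}$, $\kw{APP1}/\kw{APP2}$, $\kw{MATCH1}$, $\kw{SEQH}/\kw{SEQT}$) to obtain a step for $e$; if all subexpressions are values, I apply the corresponding \emph{reduction} rule.

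The crux of the argument is showing, in each all-values case, that the required reduction rule actually fires — i.e.\ that the side-conditions on the operational rule are satisfiable. This is exactly where the typing inversion and well-formedness are combined to establish a \emph{canonical forms} fact for each type. For $\kw{TDEREF}$, inversion gives that the operand has pointer type $\tau_e*$ that is not an option; since by $\kw{TDEREF}$ we exclude $\toption{\_*}$, the value must be a concrete location $(l,o)$ (not $\kw{None}$/$\kw{Some}$), and well-formedness ensures $\Sigma\ l=\tau_e$ with $s_\Theta[l,o]=v$ defined, so $\kw{DREFV}$ applies. The analogous reasoning covers $\kw{TMASSGN}$ (rule $\kw{MASSGNV}$ writes to the location, always defined since $l$ is $\kw{Freeable}$), $\kw{TREF}$ (rule $\kw{REFV}$ allocates a fresh location, always possible in the CompCert memory model), $\kw{TCOND}$ (the guard has boolean type, so by canonical forms it is $\kw{true}$ or $\kw{false}$ and $\kw{CONDT}/\kw{CONDF}$ fire), the struct rules $\kw{TSINIT}/\kw{TFIELD}$ (where the struct environment $\Pi$ and the last clause of well-formedness guarantee the field offsets are computable via $\mathsf{field_{ofs}}$), and the match rules $\kw{TMATCHO}/\kw{TMATCHB}$ (canonical forms of option/bytes values select the $\kw{MSOME}/\kw{MNONE}$ or $\kw{MBYTES}/\kw{MBYTESF}$ branch). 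I would package these as a short auxiliary \emph{canonical-forms lemma} relating each type to the shape of its values, and invoke it uniformly.

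Two cases deserve particular care. For binary operations ($\kw{TBOP}$), the reduction rule $\kw{BOPV}$ is \emph{total} precisely because of BeePL's design: the auxiliary predicate $\mathsf{unsafe}(op,v_1,v_2,s)$ intercepts division-by-zero, overflow, and out-of-range shifts and returns $0$, so no operand combination gets stuck — this is the place where progress relies on the defensive operational semantics rather than on any typing precondition. For function application ($\kw{TAPP}$), once the function subexpression and all arguments are values, I must produce either an internal step ($\kw{APP3}$) or an external step ($\kw{EAPP}$): here the fourth clause of $\mathsf{well\_formed\_state}$ is essential, since it guarantees that a value of function type $\tfunction{\tau_a}{\eta}{\tau_r}$ resolves to an actual declaration $f$ in $s_\Delta$ with matching arity ($|\kw{args}|=|es|$) and disjoint argument/local sets, so the premises of $\kw{APP3}$ are met. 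The $\kw{for}$-loop reduction $\kw{FORV}$ fires once both bounds are integer values, using $\mathsf{range}$ to compute the finite iteration count.

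The main obstacle I anticipate is the \emph{second conjunct of the conclusion}: I must re-establish $\mathsf{well\_formed\_state}(\Gamma,\Sigma,s')$ for the post-step state $s'$, not merely exhibit a step. For congruence steps this is immediate from the induction hypothesis. The genuinely delicate steps are the state-mutating reductions — $\kw{REFV}$ (which extends $\Theta$ and $\Sigma$ with a fresh location), $\kw{MASSGNV}$ (which overwrites a value at an existing, correctly-typed location), $\kw{STRUCTV}$ (which allocates and populates several fields), and $\kw{APP3}$ (which extends $\Omega$ and $\Theta$ with the callee's parameters and locals). For each I must check that the four well-formedness clauses are preserved: freshness of the allocated location keeps $\Sigma$ and $\Theta$ consistent; assignment preserves typing because $\kw{TMASSGN}$ guarantees $e_2$ has the same type $\tau_1$ as the location's content; and the $\kw{Freeable}$-permission invariant is maintained by the CompCert allocation semantics. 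I expect the fresh-location bookkeeping in $\kw{REFV}$ and $\kw{STRUCTV}$ — ensuring the new binding does not clash with existing entries and that the extended $\Sigma$ still matches $\Theta$ on \emph{all} locations — to be the most technical part, and it is naturally discharged together with the preservation lemma, since progress and preservation here share the same well-formedness invariant.
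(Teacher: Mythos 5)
Your proposal is correct and follows essentially the same route as the paper's proof: mutual structural induction, a congruence-versus-reduction case split driven by whether subexpressions are values, the well-formedness clauses used to discharge the side-conditions of the reduction rules (notably for dereference, assignment, and application), and the observation that $\kw{BOPV}$ is total by the defensive $\mathsf{unsafe}$ semantics. The only cosmetic difference is packaging: where you propose a single uniform canonical-forms lemma, the paper discharges the all-values cases with a family of specific auxiliary lemmas (safe dereference, safe assignment, allocation and variable-binding success, and progress for unary/binary operators), which together play exactly the role you describe.
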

The complete proof is provided in Appendix~\ref{progress}.

\paragraph{Preservation}
The preservation lemma ensures that if a well-typed expression takes a step, the resulting expression remains well-typed with the same type and possibly fewer effects. BeePL infers effects conservatively—for example, $\kw{if} \ b \ \kw{then} \ *p \ \kw{else} \ 0$ may carry a read effect even if $p$ is never dereferenced at runtime. Thus, preservation permits the resulting effects to be a subset of the original.
\begin{lemma}\label{preservation}[Preservation]
\small
\begin{align*}
\forall \Gamma, \Sigma, \Pi, \Psi, e, e', \eta, \tau, s, s'.\quad
&\typerule{\Gamma}{\Sigma}{\Pi}{\Psi}{e}{\tau}{\eta} 
\wedge \mathsf{well\_formed\_state}(\Gamma, \Sigma, s) \wedge \eval{s}{e}{s'}{e'} \implies \\
&\exists \eta'. \ \typerule{\Gamma}{\Sigma}{\Pi}{\Psi}{e'}{\tau}{\eta'} \wedge \eta' \subseteq \eta \wedge \mathsf{well\_formed\_state}(\Gamma, \Sigma, s') 
\end{align*}
\end{lemma}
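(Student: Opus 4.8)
The plan is to proceed by induction on the small-step evaluation derivation $\eval{s}{e}{s'}{e'}$, which amounts to a case analysis on the last reduction rule applied. The rules split naturally into \emph{congruence} rules (e.g.\ \kw{REF}, \kw{DREF}, \kw{MASSGN1}, \kw{LET}, \kw{BOP1}, \kw{COND}, \kw{FACCESS}, \kw{APP1}, \kw{MATCH1}, \kw{SEQH}) that reduce a strict subexpression, and \emph{computation} rules (e.g.\ \kw{REFV}, \kw{DREFV}, \kw{MASSGNV}, \kw{LETV}, \kw{BOPV}, \kw{CONDT}/\kw{CONDF}, \kw{STRUCTV}, \kw{FACCESSV}, \kw{FORV}, \kw{APP3}, \kw{EAPP}) that fire a redex. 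For each congruence case I would invert the typing derivation to expose the typing judgment of the subexpression that steps, apply the induction hypothesis to obtain a re-typing of the reduced subexpression at the same type with a smaller effect, and then reassemble the derivation for $e'$ with the very same typing rule. Because every effect-combining rule forms its effect as a $\dotplus$ of its premises' effects, the requirement $\eta' \subseteq \eta$ reduces to a monotonicity lemma: $\eta_1' \subseteq \eta_1$ implies $\eta_1' \dotplus \eta_2 \subseteq \eta_1 \dotplus \eta_2$. Preservation of $\kw{well\_formed\_state}(\Gamma, \Sigma, s')$ comes directly from the induction hypothesis here, since no memory is touched beyond what the subexpression does.

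The computation cases carry the real content and require three auxiliary lemmas. \textbf{(i)} A \emph{substitution lemma}: if $\typerule{(x \to \tau, \Gamma)}{\Sigma}{\Pi}{\Psi}{e_2}{\tau_2}{\eta_2}$ and $v$ is a value with $\typerule{\Gamma}{\Sigma}{\Pi}{\Psi}{v}{\tau}{\phi}$, then $\typerule{\Gamma}{\Sigma}{\Pi}{\Psi}{e_2[x \leftarrow v]}{\tau_2}{\eta_2'}$ for some $\eta_2' \subseteq \eta_2$; this discharges \kw{LETV} and the binding branches of \kw{MSOME}, \kw{MNONE}, and \kw{MBYTES}. \textbf{(ii)} A \emph{store-consistency lemma} stating that writing a value of the correct type into an already-typed, freeable location (as in \kw{MASSGNV}) preserves well-formedness, and that reading a location yields a value of the type recorded in $\Sigma$ (as in \kw{DREFV}, which is exactly the first clause of Definition~\ref{well-formed-state}). \textbf{(iii)} An \emph{iteration lemma} for \kw{FORV}: since that rule collapses the loop into running the body $e$ a fixed number $n$ of times, I would prove by induction on $n$ that sequentially evaluating a well-typed body from a well-formed state preserves both its type and well-formedness, so the final value inhabits $\tau$ and $s'$ is well-formed. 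The remaining value rules (\kw{UOPV}, \kw{BOPV}, \kw{CONDT}/\kw{CONDF}, \kw{FACCESSV}, \kw{NONE}, \kw{SOMEV}, \kw{APP3}) are comparatively direct: inverting the corresponding typing rule pins down $\tau$, and the produced value or selected branch re-types at $\tau$ with an effect that is either empty or a subset of the operands' effects. Note that in \kw{BOPV} the ``unsafe $\Rightarrow 0$'' branch still returns a constant of the primitive type $\tau$, so type preservation holds while the effect drops to $\phi \subseteq \eta$, matching the conservative over-approximation described for preservation.

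The main obstacle is the allocation cases \kw{REFV} and \kw{STRUCTV}, because they mint a fresh location $l$ and extend the store typing with $l \mapsto {*}\kw{typeof}(v)$, whereas the statement as written threads a \emph{fixed} $\Sigma$ through both sides. To make these go through I would strengthen the conclusion to produce some $\Sigma' \supseteq \Sigma$ with $\typerule{\Gamma}{\Sigma'}{\Pi}{\Psi}{e'}{\tau}{\eta'}$ and $\kw{well\_formed\_state}(\Gamma, \Sigma', s')$, together with a \emph{store-weakening lemma} showing that typing is monotone under extension of $\Sigma$ at fresh locations (every prior \kw{TLOC} lookup is left unchanged). Freshness of $l$ guarantees the extension is conservative and that the new location is \kw{Freeable} in $\Theta'$, re-establishing the $\Sigma$-versus-$\Theta$ clauses of Definition~\ref{well-formed-state}; the returned location then types at $\tref{\kw{typeof}(v)}$ via \kw{TLOC}, matching the $\kw{alloc} :: \eta$ annotation of \kw{TREF} once the already-consumed $\kw{alloc}$ effect is dropped. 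A secondary subtlety is the external-call rule \kw{EAPP}, whose opaque transition $fn(\overline{v}) = v', s'$ must be assumed to respect its declared signature in $\Psi$ and to preserve well-formedness; this is precisely the interface at which the trusted helper-function contracts enter the argument.
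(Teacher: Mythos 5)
Your proposal is correct in substance but is organized differently from the paper's proof. The paper inducts on the typing derivation $\typerule{\Gamma}{\Sigma}{\Pi}{\Psi}{e}{\tau}{\eta}$ and then case-splits on whether the relevant subexpression has already reduced to a value, whereas you induct on the evaluation derivation $\eval{s}{e}{s'}{e'}$; these are interchangeable here since each amounts to a case analysis on one derivation plus inversion on the other, and your congruence/computation split and effect-monotonicity observation for $\dotplus$ line up exactly with what the paper does implicitly. Your three auxiliary lemmas also correspond to the paper's: the substitution lemma is its Lemma~\ref{subst-preserve-typing}, the store-consistency lemma is its Lemmas~\ref{safe-dereference} and~\ref{safe-assignment}, and your iteration lemma for \kw{FORV} is the paper's $n$-fold application of the induction hypothesis to the loop body. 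The genuine divergence is the allocation cases \kw{REFV} and \kw{STRUCTV}. You propose the standard (TAPL-style) fix: strengthen the conclusion to some $\Sigma' \supseteq \Sigma$ together with a store-weakening lemma. The paper instead keeps $\Sigma$ fixed and discharges the obligation $\Sigma(l) = \tau$ for the fresh location $l$ by appealing to the clause of Definition~\ref{well-formed-state} stating that every \kw{Freeable} location in $\Theta$ is typed in $\Sigma$ (together with Lemma~\ref{ref_allocation_succeeds} for well-formedness of $s'$); this works only because the \kw{REFV} rule itself carries the update $\Sigma[l \rightarrow *\kw{typeof}(v)]$ in its premise, so the "fixed" $\Sigma$ in the statement is tacitly the post-allocation one. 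Your formulation makes this threading explicit and is the more robust way to mechanize the argument; the paper's buys a simpler lemma statement at the cost of leaning on the well-formedness invariant to absorb the extension. Your remark about \kw{EAPP} requiring trusted helper contracts likewise matches the paper's stated assumption that external calls are in the trusted base.
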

The complete proof is provided in Appendix~\ref{preservation}.

To prove soundness we define a semantic closure for BeePL program defined in~\figref{sem_closure} (present in appendix). The soundness lemma states that if an expression (or sequence) is well-typed and the state is well-formed, then its evaluation to a final expression (or sequence) must result in a value. 
\begin{lemma}[Soundness]\label{soundness}
\small
\begin{align*}
\forall \Gamma, \Sigma, \Pi, \Psi, e, e', s, s', \eta, \tau, n.\quad
&\typerule{\Gamma}{\Sigma}{\Pi}{\Psi}{e}{\tau}{\eta} \wedge 
  \mathsf{well\_formed\_state}(\Gamma, \Sigma, s) \\
&\wedge\ \meval{s}{e}{s'}{e'}{n} \wedge e' \not\rightarrow \implies 
  \kw{isVal}\ e \wedge \exists \eta'.\ 
  \typerule{\Gamma}{\Sigma}{\Pi}{\Psi}{e'}{\tau}{\eta'} \\
&\wedge\ \eta' \subseteq \eta \wedge 
  \mathsf{well\_formed\_state}(\Gamma, \Sigma, s')
\end{align*}
\end{lemma}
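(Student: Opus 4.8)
The plan is to prove the Soundness lemma (\ref{soundness}) as a direct consequence of the Progress lemma (\ref{progress}) and the Preservation lemma (\ref{preservation}), using induction on the step count $n$ appearing in the multi-step evaluation relation $\meval{s}{e}{s'}{e'}{n}$. The standard type-soundness recipe is: progress guarantees a well-typed expression is not stuck, preservation guarantees well-typedness (together with the subset relation on effects and the well-formedness of the state) is maintained across each single step, and chaining these along the multi-step reduction yields that a normal form reached by a well-typed expression must itself be a value with the same type and a smaller effect set.

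First I would unfold $\meval{s}{e}{s'}{e'}{n}$ as the $n$-fold composition of the single-step relation $\eval{}{}{}{}$ and proceed by induction on $n$. In the base case $n = 0$, we have $e' = e$ and $s' = s$; the hypothesis $e' \not\rightarrow$ together with Progress~\ref{progress} forces $\mathsf{isVal}\ e$ (since a well-typed expression in a well-formed state is either a value or can step, and it cannot step), and the typing of $e'$ with $\eta' = \eta \subseteq \eta$ and well-formedness of $s' = s$ are immediate from the assumptions. For the inductive step $n = m+1$, I would decompose the reduction as a single step $\eval{s}{e}{s_1}{e_1}$ followed by $\meval{s_1}{e_1}{s'}{e'}{m}$. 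Applying Preservation~\ref{preservation} to the first step yields an effect $\eta_1$ with $\typerule{\Gamma}{\Sigma}{\Pi}{\Psi}{e_1}{\tau}{\eta_1}$, $\eta_1 \subseteq \eta$, and $\mathsf{well\_formed\_state}(\Gamma, \Sigma, s_1)$. These are exactly the hypotheses needed to invoke the induction hypothesis on the $m$-step tail, producing the final value together with a typing at some $\eta' \subseteq \eta_1$; transitivity of $\subseteq$ then gives $\eta' \subseteq \eta$, closing the step.

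The main subtlety I anticipate is not the inductive skeleton itself but the bookkeeping of the store typing environment $\Sigma$ across steps. Preservation as stated keeps $\Sigma$ fixed, yet reduction rules such as $\kw{REFV}$ allocate a fresh location and \emph{extend} $\Sigma$ with $l \rightarrow *\kw{typeof}(v)$; for the proof to compose cleanly one must either argue that $\Sigma$ only grows monotonically and that typing and well-formedness are stable under such extension (a weakening/monotonicity sublemma for $\Sigma$), or read the statement as implicitly quantifying over the relevant $\Sigma$ at each step. I would therefore make the monotone-growth of $\Sigma$ explicit, establish that $\typerule{\Gamma}{\Sigma}{\Pi}{\Psi}{e}{\tau}{\eta}$ and $\mathsf{well\_formed\_state}$ are preserved under $\Sigma \sqsubseteq \Sigma'$, and thread the correct $\Sigma$ through the induction. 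A secondary point requiring care is confirming that the normal-form hypothesis $e' \not\rightarrow$ interacts correctly with Progress: Progress must be applied to $e'$ in the final well-formed state $s'$ to conclude $\mathsf{isVal}\ e'$, which is why carrying $\mathsf{well\_formed\_state}(\Gamma,\Sigma,s')$ all the way through the induction is essential rather than incidental.
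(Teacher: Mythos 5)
Your proposal is correct and shares the paper's overall skeleton (induction on the step count $n$, Progress for the base case, Preservation plus transitivity of effect inclusion for the inductive case), but the inductive step is handled by a genuinely different route. You peel off the \emph{first} step and apply the induction hypothesis to the remaining $m$-step tail, which still ends in the normal form $e'$, so the hypothesis $e' \not\rightarrow$ is inherited directly and Progress applied to $e'$ in the final well-formed state closes the case. The paper's rule \textsc{SMULTI} is defined by peeling the \emph{last} step, so the intermediate expression is not a normal form and the induction hypothesis cannot be invoked on the prefix; the paper instead iterates Preservation over the prefix and then appeals to the Termination lemma (Lemma~\ref{termination}) to conclude that $e'$ is a value. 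Your route is lighter --- it avoids the dependency on Termination entirely --- at the cost of needing a front-decomposition inversion of the multi-step relation (provable by a routine induction on the \textsc{SMULTI} derivation, but not the literal shape of the rule). Your observation about the store typing $\Sigma$ is also well taken: \textsc{REFV} extends $\Sigma$ with a fresh location while Preservation as stated holds $\Sigma$ fixed, so a monotonicity/weakening sublemma for $\Sigma \sqsubseteq \Sigma'$ (or an explicit existential over the extended $\Sigma$) is needed for the composition to go through; the paper's proof glosses over this, and making it explicit as you propose strengthens the argument. One small point: the conclusion should read $\kw{isVal}\ e'$ rather than $\kw{isVal}\ e$ (as in the appendix restatement); your proof establishes the former, which is the intended claim.
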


The complete proof is provided in Appendix~\ref{soundness}.

\subsection{eBPF properties}\label{sec:ebpf-proof}
In this section, we present formal proofs demonstrating how each of the aforementioned eBPF safety properties defined in~\secref{ebpf-spec} is enforced by our framework.

Lemma~\ref{termination} ensures that any well-typed BeePL expression or list of expressions evaluates to a value in finitely many steps if the initial state is well-formed. It also guarantees that the divergence effect is not inferred, capturing both runtime termination and its sound reflection in the effect system.
\begin{lemma}\label{termination}[Termination]
\small
\begin{align*}
\forall \Gamma, \Sigma, \Pi, \Psi, es, \eta_s, \tau_s, s.\quad
&\typerules{\Gamma}{\Sigma}{\Pi}{\Psi}{es}{\tau_s}{\eta_s} 
\wedge \mathsf{well\_formed\_state}(\Gamma, \Sigma, s) \implies \\ 
& \exists n, s', vs.\ \mevals{s}{es}{s'}{vs}{n}  \wedge \kw{isVal} \ vs \wedge \kw{divergence} \notin \eta_s \\
\wedge\ 
\forall \Gamma, \Sigma, \Pi, \Psi, e, \eta, \tau, s.\quad
&\typerule{\Gamma}{\Sigma}{\Pi}{\Psi}{e}{\tau}{\eta} 
\wedge \mathsf{well\_formed\_state}(\Gamma, \Sigma, s) \\
& \exists n, s', v.\ \meval{s}{e}{s'}{v}{n}  \wedge \kw{isVal} \ v \wedge \kw{divergence} \notin \eta
\end{align*}
\end{lemma}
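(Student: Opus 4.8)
The plan is to prove the two conjuncts—finiteness of the evaluation and divergence-freedom of the inferred effect—largely separately, and to treat the expression and expression-sequence statements by a single mutual induction (the sequence cases $\kw{EMPTY}$/$\kw{SEQH}$/$\kw{SEQT}$ reduce to the expression case via the head-then-tail discipline) over a well-founded order made precise below. I would first observe that Soundness (Lemma~\ref{soundness}) already supplies the easy half: once I exhibit some finite $\meval{s}{e}{s'}{v}{n}$ ending at a non-reducible term, Soundness forces that term to be a value and keeps the state well-formed. Hence the genuinely new content is (i) that every reduction sequence out of a well-typed expression is finite, and (ii) that $\kw{divergence}\notin\eta$.

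For (ii) I would argue by inspection of the typing rules in Figures~\ref{fig:typesystem} and~\ref{fig:typesystem1}: no rule ever emits the $\kw{divergence}$ effect, and in particular the only iterative construct, the for-loop, is typed by $\kw{TFOR}$, which merely unions $\eta_1\dotplus\eta_2\dotplus\eta$. Since effects propagate only by union and no leaf rule introduces $\kw{divergence}$, a straightforward induction on the typing derivation yields $\kw{divergence}\notin\eta$ for every well-typed expression and sequence.

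For (i) I would establish strong normalization by well-founded induction, using Preservation (Lemma~\ref{preservation}) to keep the expression well-typed and the state well-formed across steps and Progress (Lemma~\ref{progress}) to guarantee a step exists until a value is reached. The non-looping constructs (variables, constants, $\kref$, $\kderef$, $\kmass$, $\kuop$/$\kbop$, $\klet$, $\kif$, struct init and field access, $\kw{Some}$/$\kw{None}$/$\kmatch$) follow from the induction hypothesis on their immediate subexpressions: each subexpression reduces to a value in finitely many steps, after which the corresponding value rule fires once, so the step counts sum to a finite total. The for-loop is the first interesting case: by the hypothesis the bounds $e_1,e_2$ reduce to values $v_1,v_2$, rule $\kw{FORV}$ fixes a finite $n=\mathsf{range}(v_1,v_2,d)$ before the body runs, and I would then do a secondary induction on $n$, executing the body $e$ once per iteration; each iteration terminates by the hypothesis for $e$ (its typing is a subderivation), with Preservation supplying a well-formed state to the next iteration. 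Crucially, the $\kw{TFOR}$ side condition $\mathsf{fvar}(e)\cap(\mathsf{fvar}(e_1)\cup\mathsf{fvar}(e_2))=\phi$ guarantees the body cannot mutate the bounds, so $n$ is genuinely fixed.

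The main obstacle is function application. Reducing the callee and arguments to values is routine by the hypothesis, but after $\kw{APP3}$ fires the function body executes, and its typing comes from the declaration (rule $\kw{TFDECL}$) rather than from a subderivation of the application—so a naive induction on the current derivation does not cover it, and an unrestricted recursive call would in fact diverge. I would close this by exploiting the eBPF design constraint that BeePL admits no unbounded recursion: the call graph is acyclic, so I would replace the plain structural order with a lexicographic one on the pair (topological rank of the function being evaluated in the call graph, structural size of the expression). Under this order the callee's body is strictly smaller than the call site even though it is not a syntactic subterm, which lets the induction go through. Should the language later admit higher-order or genuinely recursive calls this measure would fail, and one would instead need a Tait-style reducibility argument; I note this as the fallback but do not pursue it here. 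Finally I would assemble the finite per-construct step counts into a single $n$, invoke Soundness to certify that the normal form is a value, and combine with (ii) to conclude.
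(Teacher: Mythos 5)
Your proposal follows essentially the same route as the paper's proof: a mutual induction over the typing derivation, with Progress supplying the next step, Preservation maintaining well-typedness and well-formedness across steps, an inner induction on the iteration count $k=\mathsf{range}(v_1,v_2,d)$ for the for-loop (justified by the $\mathsf{fvar}$ disjointness side condition of $\kw{TFOR}$), and divergence-freedom obtained from the fact that effects only propagate by union and no rule emits $\kw{divergence}$. Two points of divergence are worth noting. First, on function application you are actually \emph{more} careful than the paper: the paper simply asserts that since all functions are statically defined top-level declarations, ``their termination is analyzed structurally,'' without acknowledging that the callee's body is not a subterm (or subderivation) of the call site. Your lexicographic measure on (call-graph rank, expression size), together with the explicit acyclicity assumption, is the honest way to make that step well-founded, and your remark that the measure fails under genuine recursion correctly identifies the load-bearing design constraint. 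Second, and this is the one genuine problem: you appeal to Soundness (Lemma~\ref{soundness}) to conclude that the terminal term is a value, but in this paper Soundness is itself proved \emph{using} Termination, so your proposal as written creates a circular dependency. The fix is trivial and costs you nothing: once you know the reduction sequence halts at a term $e'$ that cannot step, Progress alone (applied to the well-typed $e'$ in the well-formed final state, both guaranteed by iterated Preservation) forces $e'$ to be a value. Replace the invocation of Soundness with that argument and the proof stands.
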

\begin{proof}
We proceed by mutual induction on the structure of the typing derivation for both expressions and expression sequences. Below, we present the proof for the case of a single expression. The case for a sequence of expressions follows similarly, by applying the induction hypothesis recursively at each step of the sequence. Induction on the structure of $e$ gives us different subgoals for each expressions of BeePL. We discuss interesting cases here: \\
\textit{Var case.}  
From the assumptions \(\typerule{\Gamma}{\Sigma}{\Pi}{\Psi}{x}{\tau}{\phi}\) and \(\mathsf{well\_formed\_state}(\Gamma, \Sigma, s)\), we invoke the progress theorem to conclude that the variable \(x\) evaluates to a value \(v\) in one step, yielding a resulting state \(s'\). That is, there exists \(n = 1\) such that $\meval{s}{x}{s'}{v}{1}.$
By applying the rule \textsc{SMULTI}, we reduce the goal to proving $\eval{s}{x}{s'}{v}$,
which is discharged by the result of the progress theorem. Since the evaluation yields a value and no further computation occurs, it terminates. Furthermore, by applying the preservation theorem to the typing assumption \(\typerule{\Gamma}{\Sigma}{\Pi}{\Psi}{x}{\tau}{\phi}\), the well-formedness of the state \(s\), and the evaluation step \(\eval{s}{x}{s'}{v}\), we obtain that the resulting value \(v\) is well-typed with type \(\tau\) and effect \(\eta'\), where \(\eta' \subseteq \phi\). This suffices to say that $\eta'$ = $\phi$, which trivially excludes \(\kw{divergence}\).

\textit{Constant and Values.} The proof is trivial for these cases. 

\textit{Ref.} We proceed by case analysis on the evaluation status of \(e\) in \(\kw{ref}\ e\). 
\begin{itemize}
\item \textit{Case 1:} For the case where $e$ is not a value, induction hypothesis guarantees that the evaluation of $e$ terminates. There exists a natural number $n$, a state $s'$, and a value $v'$ such that:
$\meval{s}{e}{s'}{v'}{n} \ \text{and} \ \kw{divergence} \notin \eta'$. Typing rule \textsc{TREF}, justifies that the type of $\kw{ref} \ e$ is $\tau\rptr$ with effect $\kw{alloc} :: \eta'$. Since $\kw{divergence} \notin \eta'$, implies $\kw{divergence} \notin \kw{alloc} :: \eta'$. After $n$ steps, the expression $\kw{ref}\ e$ reduces to $\kw{ref}\ v'$ in state $s'$. We then apply the operational rule \textsc{REFV}, which allocates a fresh location $l$ and returns it as a value: $\eval{s'}{\kw{ref}\ v'}{s''}{l}$
Applying the multi-step evaluation rule \textsc{SMULTI}, we conclude: $\meval{s}{\kw{ref}\ e}{s''}{l}{n+1}$, where $l$ is a value and $s''$ is the resulting state after allocation. Thus, the full evaluation of $\kw{ref} \ e$ terminates in $n+1$ steps with a value result, and since memory allocation preserves the well-formedness of state (by Lemma~\ref{ref_allocation_succeeds}), we conclude that termination holds.

\item \textit{Case 2:} Suppose the expression \( e \) has already reduced to a value \( v \). Then, the full expression becomes \( \kw{ref} \ v \). From the typing rule \textsc{TREF}, we know:
$\typerule{\Gamma}{\Sigma}{\Pi}{\Psi}{\kw{ref}~v}{\tau\rptr}{\kw{alloc} :: \eta} \ \text{and} \
\typerule{\Gamma}{\Sigma}{\Pi}{\Psi}{v}{\tau}{\eta}$. From the well-formedness of state \( s \), and by applying the operational rule \textsc{REFV}, the expression \( \kw{ref}~v \) allocates a new memory location \( l \), stores \( v \) at that location, and returns the location \( l \). That is, we have:$\eval{s}{\kw{ref}~v}{s'}{l}$ using n = 1 for multi step. To prove absence of divergence, we now apply the preservation theorem~\ref{preservation} to this step:
$\typerule{\Gamma}{\Sigma}{\Pi}{\Psi}{\kw{ref}~v}{\tau\rptr}{\kw{alloc} :: \eta} \ \wedge \ \eval{s}{\kw{ref}~v}{s'}{l} 
 \rightarrow \
\exists \eta'.\ \typerule{\Gamma}{\Sigma}{\Pi}{\Psi}{l}{\tau\rptr}{\eta'}\ \wedge\ \eta' \subseteq \kw{alloc} :: \eta$. Because the result is a value \( l \), and values carry no side-effect, we conclude \( \eta' = \phi \), and hence: $\kw{divergence} \notin \eta'$. Thus, \( \kw{ref}~v \) terminates in a single step to a value \( l \), and the total effect is bounded by the original effect \( \kw{alloc} :: \eta \), which excludes divergence by assumption.
Finally, we wrap this step using \textsc{SMULTI} to obtain: $\meval{s}{\kw{ref}~v}{s'}{l}{1}$. Hence, termination holds for this case.
\end{itemize}

\textit{Deref and Memory assignment.} The proof for dereference and memory assignment follows in the similar style as above. 

\textit{Let-binding.} We proceed by case analysis on whether the expression \( e_1 \) in \( \elet{x}{\tau}{e_1}{e_2} \) is a value or not.

\begin{itemize}
\item \textit{Case 1:} \( e_1 \) is not a value. From the typing judgment $\typerule{\Gamma}{\Sigma}{\Pi}{\Psi}{\elet{x}{\tau}{e_1}{e_2}}{\tau'}{\eta}$ and the typing rule \textsc{TLET}, we know: $
\typerule{\Gamma}{\Sigma}{\Pi}{\Psi}{e_1}{\tau}{\eta_1} \quad \text{and} \ 
\typerule{x \mapsto \tau, \Gamma}{\Sigma}{\Pi}{\Psi}{e_2}{\tau'}{\eta_2}, \ \text{with} \ \eta = \eta_1 \dotplus \eta_2.$
By the induction hypothesis on \( e_1 \), there exists \( n, s', v_1 \) such that $
\meval{s}{e_1}{s'}{v_1}{n} \ \text{and} \ \kw{divergence} \notin \eta_1.$
After \( n \) steps, the expression reduces to \( \elet{x}{\tau}{v_1}{e_2} \), and applying rule \textsc{LETV} yields \( e_2[x \leftarrow v_1] \). Thus, $\meval{s}{\elet{x}{\tau}{e_1}{e_2}}{s'}{e_2[x \leftarrow v_1]}{n+1}.$
To conclude, we must show that \( e_2[x \leftarrow v_1] \) terminates. By Lemma~\ref{subst-preserve-typing} and preservation over \( n+1 \) steps, we get: $\typerule{\Gamma}{\Sigma}{\Pi}{\Psi}{e_2[x \leftarrow v_1]}{\tau'}{\eta_2'} \quad \text{with} \quad \eta_2' \subseteq \eta_2,$
and \( s' \) is well-formed. By applying the progress theorem to \( e_2[x \leftarrow v_1] \), we conclude it either is a value or takes a step. In the latter case, we apply the IH for \( e_2 \) (on the substituted expression) to conclude termination, with divergence still excluded since \( \kw{divergence} \notin \eta_1 \) and \( \kw{divergence} \notin \eta_2 \).
\item \textit{Case 2:} \( e_1 \) is already a value \( v_1 \). Then we apply rule \textsc{LET} directly: $
\elet{x}{\tau}{v_1}{e_2} \rightarrow e_2[x \leftarrow v_1].$
The rest proceeds as in Case 1: we apply the substitution lemma, preservation, and the induction hypothesis to show that \( e_2[x \leftarrow v_1] \) terminates.
\end{itemize}

\textit{For loop} The only form of looping permitted in BeePL is through the construct $\fore{e_1}{e_2}{d}{e}$. The termination proof proceeds by analyzing the different cases of the operational semantics associated with the for-loop.

\begin{itemize}
\item \textit{Case 1:} For the case where \(e_1\) steps to \(e_1'\), the induction hypothesis ensures that \(e_1\) evaluates to a value \(v_1\) in \(n_1\) steps with \(\kw{divergence} \notin \eta_1\), where \(\eta_1\) is the effect from the typing of \(e_1\) in the \texttt{for}-loop. Thus, \(\meval{s}{\fore{e_1}{e_2}{d}{e}}{\fore{v_1}{e_2}{d}{e}}{s_1}{n_1}\), and by preservation, \(v_1\) is well-typed, \(s_1\) is well-formed, and \(\eta_1' \subseteq \eta_1\). Similarly, by the induction hypothesis for \(e_2\), there exists \(n_2\) such that \(e_2 \to^{n_2} v_2\) with \(\kw{divergence} \notin \eta_2\). Hence, \(\meval{s_1}{\fore{v_1}{e_2}{d}{e}}{\fore{v_1}{v_2}{d}{e}}{s_2}{n_2}\), where \(v_2\) is well-typed, \(s_2\) is well-formed, and \(\eta_2' \subseteq \eta_2\) by preservation. Since \(v_1\) and \(v_2\) are numeric constants (either \(\kw{int}\) or \(\kw{long}\) per the \texttt{TFOR} rule), the loop range is computed, yielding a fixed number \(k \in \mathbb{N}\) of iterations, including the case \(k = 0\) when the body is skipped.

We proceed by induction on $k$, the number of loop iterations.

\begin{itemize}
  \item \textit{Base Case ($k = 0$):} In this case, the range is empty, so the loop body does not execute. The loop evaluates in one final step to a unit value: $
  \fore{v_1}{v_2}{d}{e} \to \kw{Unit}$. 
  This takes a bounded number of steps ($n_1$ + $n_2$ + 1), and terminates in a value without divergence.

  \item \textit{Inductive Case ($k > 0$):} Since the loop body \(e\) is independent of the loop index and executes identically each time, the loop can be treated as \(k\) repetitions of \(e\), where \(k\) is determined by \(v_1\), \(v_2\), and direction \(d\). By induction, \(e\) terminates in value \(v_e\) in at most \(n_e\) steps, with \(\kw{divergence} \notin \eta_e\). Thus, the full loop runs \(k\) independent executions of \(e\), each taking up to \(n_e\) steps, and the total steps are bounded by \(n_1 + n_2 + k \cdot n_e\). As none of the components diverge, the loop terminates without divergence. To show \(\kw{divergence} \notin \eta\), we note that it is absent from \(\eta_1\), \(\eta_2\), and \(\eta_e\). Since the typing rule gives \(\eta = \eta_1 \dotplus \eta_2 \dotplus \eta_e\), we conclude \(\kw{divergence} \notin \eta\), completing the proof.

\item \textit{Case 2:} For all other cases where the expressions $e_1$, $e_2$, or $e$ are already values, we proceed in a similar manner as in Case~1. The proof follows the same structure but requires fewer evaluation steps, as some sub-expressions do not require further reduction.
\end{itemize}
Hence, the full \texttt{for}-loop expression evaluates to a value in finite steps and terminates without divergence.
\end{itemize}

\textit{Function application.} The termination proof for function application uses the inductive hypothesis on argument expressions to ensure their termination. We apply preservation inductively over the argument list to establish state well-formedness and then instantiate the typing judgment for the function body. Since BeePL only supports statically defined top-level functions (no closures or dynamic creation), all invoked functions are guaranteed to be well-typed, and their termination is analyzed structurally.

Termination for other expressions—primitive operators, struct initialization, field access, and pattern matching—is omitted as it follows the same structural induction strategy. For external function calls, we assume they terminate and do not prove this within BeePL’s semantics.
\qedhere
\end{proof}
\vspace{-0.2em}

Lemma~\ref{division-by-zero} states that if a unary or binary operator is well-typed and evaluated in a well-formed state, the result is never $\kw{undef}$. This is not guaranteed by the type system alone, but by a program transformation performed by the BeePL compiler. While operators like division, modulo, and shifts are partial in CompCert, BeePL compiler inserts default values in cases of undefined behavior to ensure well-defined results.

\begin{lemma}\label{division-by-zero}[Absence of undefined behavior]
\small
\begin{align*}
\forall \Gamma, \Sigma, \Pi, \Psi, s, e, e_1, e_2, n, v, s'.\ 
\mathsf{well\_formed\_state}(\Gamma, \Sigma, s) \implies \\
\typerule \Gamma \Sigma \Pi \Psi {\kw{bop}(op,e_1,e_2)} \tau \eta \wedge 
\meval{s}{\kw{bop}(op, e_1, e_2)}{s'}{v}{n} \implies
v \neq \kw{undef} \ \wedge \\ 
\typerule \Gamma \Sigma \Pi \Psi {\kw{uop}(op,e)} \tau \eta \wedge 
\meval{s}{\kw{uop}(op, e)}{s'}{v}{n} \implies
v \neq \kw{undef}
\end{align*}
\end{lemma}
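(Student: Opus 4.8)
The plan is to reduce the given terminating multi-step evaluation to its single value-producing step and then argue directly from the definition of that step, so the proof is really an inspection of the \textsc{BOPV} (resp.\ \textsc{UOPV}) rule guarded by a completeness property of the compiler-inserted check. Consider the binary case first. Since we are handed a complete derivation $\meval{s}{\kw{bop}(op,e_1,e_2)}{s'}{v}{n}$ ending in a value $v$, the structure of the small-step rules forces this derivation to decompose: by repeated use of \textsc{BOP1} the operand $e_1$ is reduced to a value $v_1$, then by \textsc{BOP2} the operand $e_2$ is reduced to a value $v_2$, and finally a single application of \textsc{BOPV} produces $v$. Applying preservation~\ref{preservation} along this derivation, $v_1$ and $v_2$ remain well-typed at the same type $\tau$, and the premise $\kw{isPrim}\ \tau$ of the typing rule \textsc{TBOP} guarantees that both are primitive values (an $\kw{int}$, $\kw{long}$, or $\kw{bool}$), in particular neither is a pointer nor $\kw{undef}$. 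Termination~\ref{termination} is what ensures such operand reductions exist in the first place, so the decomposition is always available.

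It then suffices to read off the result of \textsc{BOPV}, namely
\[
v = \kw{if}\ \mathsf{unsafe}(op, v_1, v_2, s)\ \kw{then}\ 0\ \kw{else}\ \mathsf{bop_{sem}}(op, v_1, v_2),
\]
and case split on the guard. If $\mathsf{unsafe}(op, v_1, v_2, s)$ holds, then $v = 0$, a well-defined primitive constant, so $v \neq \kw{undef}$ immediately. If the guard is false, then $v = \mathsf{bop_{sem}}(op, v_1, v_2)$, and the goal reduces to the key auxiliary fact that the guard is \emph{complete} with respect to the partiality of $\mathsf{bop_{sem}}$:
\[
\neg\,\mathsf{unsafe}(op, v_1, v_2, s)\ \implies\ \mathsf{bop_{sem}}(op, v_1, v_2) \neq \kw{undef},
\]
where we may use that $v_1, v_2$ are primitive and equityped. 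This implication is exactly the correctness obligation discharged by the program transformation of \secref{semantics}: for each operator whose semantics is partial---integer division and modulo (undefined on a zero divisor and on the signed-overflow case $\mathtt{INT\_MIN}/{-1}$) and the shift operators (undefined when the shift amount meets or exceeds the bit-width)---the predicate $\mathsf{unsafe}$ must flag precisely those operand configurations, while for every total operator $\mathsf{bop_{sem}}$ is defined on all primitive inputs and the implication holds vacuously.

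The unary case proceeds identically but more easily: the derivation ends in a single \textsc{UOPV} step yielding $\mathsf{uop_{sem}}(op, v)$ for a primitive, well-typed $v$ (from \textsc{TUOP} together with preservation), and since every supported unary operator is total on primitive values, the result is never $\kw{undef}$. I expect the main obstacle to be the completeness claim for $\mathsf{unsafe}$: it is not a consequence of typing but of the compiler's rewriting, so establishing it requires an exhaustive case analysis over the operator set, matching each partial branch of $\mathsf{bop_{sem}}$ against the corresponding disjunct of $\mathsf{unsafe}$---in particular not overlooking the signed-overflow case of division, which is easy to miss compared with the obvious divide-by-zero and out-of-range-shift cases. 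Once this auxiliary lemma is in hand, both implications in the statement follow from the routine structural reasoning above.
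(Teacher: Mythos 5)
Your proof is correct and follows essentially the same route as the paper's: decompose the multi-step evaluation into operand reductions followed by a single \textsc{BOPV} (resp.\ \textsc{UOPV}) step, case-split on the $\mathsf{unsafe}$ guard, dispatch the $v=0$ branch trivially, and reduce the remaining branch to an operator-by-operator argument that $\mathsf{bop_{sem}}$ is defined on well-typed primitive operands (the paper packages this last step as Lemma~\ref{progress-binaryoperators}/\ref{progrss-unaryoperators}). Your formulation of the auxiliary obligation as a completeness property of $\mathsf{unsafe}$ relative to the partiality of $\mathsf{bop_{sem}}$ is a slightly sharper statement of the same content, and your remark about the signed-overflow case of division matches the case analysis the paper carries out in the proof of that auxiliary lemma.
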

\begin{proof}
The proof proceeds by destructing the goal, which gives us two sub-goals to prove-one related to binary operator and the other related to unary operator. 
\begin{itemize}
\item \textit{\kw{bop} case:} 
$\meval{s}{\kw{bop}(op, e_1, e_2)}{s'}{v}{n}$ gives two steps: 
$\meval{s}{\kw{bop}(op, e_1, e_2)}{s_1}{\kw{bop}(op, v_1, v_2)}{n{-}1}$ and 
$\eval{s_1}{\kw{bop}(op, v_1, v_2)}{s'}{v}$. From \textsc{BOPV}, we get: 
\[
v = \kw{if} \ \mathsf{unsafe}(op,v_1,v_2,s) \ \kw{then}\ 0\ \kw{else}\ \mathsf{bop_{sem}}(op,v_1,v_2).
\]
Case-split on the conditional: (i) if $v = 0$, then $v \neq \kw{undef}$ trivially; 
(ii) if $v = \mathsf{bop_{sem}}(op,v_1,v_2)$, apply Lemma~\ref{progress-binaryoperators} to typing 
$\typerule \Gamma \Sigma \Pi \Psi {\kw{bop}(op,e_1,e_2)} \tau \eta$ and 
$\mathsf{well\_formed\_state}(\Gamma, \Sigma, s)$ to show that 
$v' = \mathsf{bop_{sem}}(op,v_1,v_2)$ with $v' \neq \kw{undef}$.

\item \textit{\kw{uop} case:} 
$\meval{s}{\kw{uop}(op, e)}{s'}{v}{n}$ gives 
$\meval{s}{\kw{uop}(op, e)}{s_1}{\kw{uop}(op, v_1)}{n{-}1}$ and 
$\eval{s_1}{\kw{uop}(op, v_1)}{s'}{v}$. From \textsc{UOPV}, we get 
$\mathsf{uop_{sem}}(op,v_1) = v$. Applying Lemma~\ref{progrss-unaryoperators} to typing 
$\typerule \Gamma \Sigma \Pi \Psi {\kw{uop}(op,e)} \tau \eta$ and 
$\mathsf{well\_formed\_state}(\Gamma, \Sigma, s)$ yields 
$v' = \mathsf{uop_{sem}}(op,v_1)$ with $v' \neq \kw{undef}$. Hence proved.
\end{itemize}
\qedhere
\end{proof}

\begin{lemma}\label{no-uinitialized-memory-access}[No uninitialized memory access]
\small
\begin{align*}
\forall \Gamma, \Sigma, \Pi, \Psi, s, \tau, \eta, v, s''.\quad 
\typerule{\Gamma}{\Sigma}{\Pi}{\Psi}{!e}{\tau}{\eta} \wedge \mathsf{well\_formed\_state}(\Gamma, \Sigma, s) \wedge 
\meval{s}{!e}{s''}{v}{n} \implies \\
\exists \eta', l, o, v, s'. \
\typerule{\Gamma}{\Sigma}{\Pi}{\Psi}{e}{\tau\rptr}{\eta'} \wedge \eta' \subseteq \eta \wedge 
\eval{s}{e}{s'}l \wedge s'_\Theta[l] = v
\end{align*}
\end{lemma}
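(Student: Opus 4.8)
The plan is to derive the statement from an inversion of the typing rule for dereference, an analysis of the two dereference reduction rules, and a final appeal to the $\Sigma$–$\Theta$ consistency clause of $\mathsf{well\_formed\_state}$. First I would invert the hypothesis $\typerule{\Gamma}{\Sigma}{\Pi}{\Psi}{!e}{\tau}{\eta}$. The only typing rule that concludes a dereference is $\kw{TDEREF}$, so there is a subderivation $\typerule{\Gamma}{\Sigma}{\Pi}{\Psi}{e}{\tau\rptr}{\eta'}$ with $\tau\rptr \neq \toption{\_\rptr}$ and $\eta = \kw{read} :: \eta'$. This directly furnishes the witness $\eta'$ and the inclusion $\eta' \subseteq \eta$ (as $\eta'$ is the tail of $\eta$), discharging two of the conjuncts in the conclusion.

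Next I would decompose the multi-step evaluation $\meval{s}{!e}{s''}{v}{n}$ using the operational rules $\kw{DREF}$ and $\kw{DREFV}$. Since $\kw{DREF}$ is merely a congruence that reduces the subexpression $e$, the whole sequence must factor as a prefix that drives $e$ to a location value $\loc l o$ in some state $s'$, followed by a single $\kw{DREFV}$ step on $!\loc l o$ that reads $s'_\Theta[\loc l o] = v$ and returns $v$. To justify that $e$ actually reaches a location, I would invoke the soundness lemma (Lemma~\ref{soundness}): since $e$ is well-typed at $\tau\rptr$ and $s$ is well-formed, $e$ evaluates to a value; and by a canonical-forms argument on pointer types, a value of type $\tau\rptr$ that is not an option must be a concrete location $\loc l o$ (the value grammar offers only $\loc l n$ and $\kw{option}\ v$ as pointer-shaped values, and the side condition $\tau\rptr \neq \toption{\_\rptr}$ excludes the latter).

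Then I would propagate typing and well-formedness along this prefix using Lemma~\ref{preservation}: each step of $e$ preserves its type $\tau\rptr$ (up to a shrinking effect) and preserves $\mathsf{well\_formed\_state}(\Gamma,\Sigma,\cdot)$, so in the final state $s'$ we obtain both $\typerule{\Gamma}{\Sigma}{\Pi}{\Psi}{\loc l o}{\tau\rptr}{\phi}$ and $\mathsf{well\_formed\_state}(\Gamma,\Sigma,s')$. Inverting the value typing with rule $\kw{TLOC}$ gives $\Sigma\ l = \tref{\tau}$, placing $l$ in the domain of the store typing.

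The final and most delicate step is extracting the stored value. From $\Sigma\ l = \tref{\tau}$ and $\mathsf{well\_formed\_state}(\Gamma,\Sigma,s')$, the clause relating $\Sigma$ to $\Theta$ yields $\kw{isValidAccess}(s'_\Theta, l, \kw{Freeable})$, and the accompanying clause guarantees an inhabitant, i.e. $s'_\Theta[\loc l o] = v$ for some value $v$; this $v$ is exactly what $\kw{DREFV}$ reads, so the conclusion's $v$ coincides with the dereference result. \textbf{The main obstacle is precisely this last connection}: I must show that a $\kw{Freeable}$ permission in the CompCert memory model, together with the store-typing discipline, forces every typed location to hold a \emph{definite} (initialized) value rather than merely being allocated with $\kw{undef}$ contents. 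This rests on the allocation discipline---the only way to mint a fresh location is $\kw{REFV}$, which stores a value at allocation time, and variable/global locations are initialized by construction---so no typed location is ever left uninitialized. Making this a load-bearing invariant inside $\mathsf{well\_formed\_state}$ and checking that every reduction rule preserves it is where the real effort lies; the remaining steps are routine inversions and applications of progress and preservation.
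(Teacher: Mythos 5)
Your proposal is correct and follows essentially the same route as the paper's proof: inversion of $\kw{TDEREF}$ to obtain $\eta = \kw{read} :: \eta'$, decomposition of the multi-step evaluation into a congruence prefix plus a final $\kw{DREFV}$ step, preservation to carry well-formedness to the intermediate state, and an appeal to the $\Sigma$–$\Theta$ consistency of $\mathsf{well\_formed\_state}$ to extract the stored value. The "delicate last step" you identify is exactly what the paper packages as Lemma~\ref{safe-dereference}, whose proof rests on the same allocation discipline ($\kw{REFV}$ stores a value at allocation time; external pointers are match-guarded) that you cite.
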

\begin{proof}
    In BeePL, memory locations may arise either through internal allocation via the \texttt{ref} construct or from the external environment. Given the typing hypothesis \(\typerule{\Gamma}{\Sigma}{\Pi}{\Psi}{!e}{\tau}{\eta}\), we know that the expression \(e\) must have type \(\tau*\), i.e., a pointer to type \(\tau\), and not an \texttt{option} type. Any pointers originating from outside the program must be resolved using a \texttt{match} construct before being safely dereferenced. For pointers generated internally within a BeePL program, we know they are never null, as the semantics of the \texttt{ref} construct guarantees that a value is allocated in memory and its address is returned.

From the typing rule for dereferencing, the hypothesis \(\typerule{\Gamma}{\Sigma}{\Pi}{\Psi}{!e}{\tau}{\eta}\) implies the existence of a subderivation \(\typerule{\Gamma}{\Sigma}{\Pi}{\Psi}{e}{\tau*}{\eta'}\) such that \(\eta = \kw{read} :: \eta'\). By the effect subsumption property, we conclude that \(\eta' \subseteq \kw{read} :: \eta'\), and the typing goal is discharged accordingly. Now, from the multi-step evaluation rule \textsc{SMULTI}, applied to the hypothesis \(\meval{s}{!e}{s''}{v}{n}\), we obtain two intermediate steps: \(\meval{s}{!e}{s_1}{!v'}{n - 1}\) and \(\eval{s_1}{!v'}{s'}{v}\), where \(v' = l\) is a location. By the preservation lemma, we know that this location \(l\) remains well-typed in the state \(s_1\), and the state itself is well-formed. This ensures that \(\Sigma(l) = \tau\). Moreover, since a value is always guaranteed to be present at location \(l\), we know that the memory access cannot be uninitialized. Finally, by applying Lemma~\ref{safe-dereference}, which guarantees the safety of dereferencing any well-typed location in a well-formed state, we conclude that there exists a value \(v\) stored at location \(l\) in memory \(s'\). Hence, dereferencing in this case is always safe, and the proof is complete.
\end{proof}

\begin{lemma}[No Null Pointer Dereference]\label{no-null-pointer-dereference}
Let \(\Gamma, \Sigma, \Pi, \Psi\) be a well-formed typing context, and let \(s\) be a well-formed state such that \(\mathsf{well\_formed\_state}(\Gamma, \Sigma, s)\). Then, for any expression \(l\) and type \(\tau_l\), if
\[
\typerule{\Gamma}{\Sigma}{\Pi}{\Psi}{l}{\tau_l}{\eta}
\]
and \(l\) is dereferenced (i.e., appears in a subterm of the form \(!l\)), then one of the following holds:
\begin{itemize}
  \item \( \tau_l = \tau~\rptr \), in which case \(l\) evaluates to a non-null location in memory
  
  \item $\tau_l = \kw{option}~\tau*$, and the dereference of \ \(l\) occurs only inside a \texttt{match} construct ($\kw{match}\ l\ \texttt{with}\ \kw{Pnone} \Rightarrow e\ \mid\ \kw{Psome}\ x \Rightarrow !x$),
  which ensures that the dereference is only performed in the \texttt{Some} branch, after checking for null.
\end{itemize}
Hence, in all well-typed programs, pointer dereferences are guaranteed to be safe: either the pointer is guaranteed non-null, or a runtime \texttt{match} check eliminates the null case before dereferencing.
\end{lemma}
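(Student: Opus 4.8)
The plan is to argue by inversion on the typing derivation that makes the dereference $!l$ well-typed, and then to treat the two possible origins of a pointer—internal allocation via $\kw{ref}$ and external, option-wrapped pointers—separately. The pivotal observation is that rule \textsc{TDEREF} requires the dereferenced expression to carry a pointer type that is \emph{not} an option type (the side condition $\tau_e* \neq \toption{\_*}$). Hence any directly dereferenceable expression is typed at a non-option pointer type, and the only such type from which a dereference reads a value is the reference type $\tau\rptr$. This immediately pins the two cases of the statement: either $\tau_l = \tau\rptr$, or $\tau_l = \toption{\tau*}$ and the dereference is necessarily mediated by a $\kw{match}$.

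First, for the case $\tau_l = \tau\rptr$, I would show that $l$ evaluates to a non-null location. By Progress (Lemma~\ref{progress}) the well-typed expression $l$ is either already a location value or can step, and by Preservation (Lemma~\ref{preservation}) its reference type is retained, so $\Sigma$ assigns the resulting location the ref type $\tau\rptr$. The well-formedness invariant guarantees that every location typed in $\Sigma$ satisfies $\kw{isValidAccess}(\Theta, \cdot, \kw{Freeable})$, and the semantics of $\kw{ref}$ (rule \textsc{REFV}) only ever returns freshly allocated locations. Since allocated, Freeable-accessible locations are non-null by construction, $l$ evaluates to a non-null location; reusing the argument of Lemma~\ref{no-uinitialized-memory-access} together with Lemma~\ref{safe-dereference} then shows the location holds a value, so the subsequent $\kw{read}$ is safe.

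Second, for the case $\tau_l = \toption{\tau*}$, I would exploit the fact that \textsc{TDEREF} structurally forbids dereferencing an option-typed expression. Consequently, in a well-typed program the only way the underlying pointer of $l$ can reach a dereference is by routing $l$ through \textsc{TMATCHO} as the scrutinee $m$: the $\kw{Psome}\ x$ branch binds a fresh $x$ at the \emph{non-option} pointer type $\tau_m*$, which may then be dereferenced legally via \textsc{TDEREF}, while the $\kw{Pnone}$ branch handles the null case with no dereference. Operationally, rule \textsc{MSOME} selects the $\kw{Psome}$ branch only when the scrutinee reduces to $\kw{Some}\ v$, substituting the genuine pointer $v$ for $x$, whereas rule \textsc{MNONE} selects the $\kw{Pnone}$ branch when the scrutinee is $\kw{None}$. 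Thus whenever $!x$ is actually evaluated, $x$ has been bound to a non-null pointer, and the null possibility has already been discharged by the preceding case discrimination.

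I expect the main obstacle to lie in the second case, because it is not a purely local inversion on a single judgment but a structural claim about how option-typed values may be consumed in \emph{any} well-typed context. Making this precise requires linking the static guarantee—that \textsc{TMATCHO} is the sole elimination form exposing the $\tau_m*$-typed payload to \textsc{TDEREF}—with the dynamic guarantee that \textsc{MSOME} fires only on $\kw{Some}$. Closing this gap cleanly is what turns the informal slogan ``every option dereference is guarded by a match'' into a theorem; the reference case, by contrast, reduces essentially to the reasoning already established for Lemma~\ref{no-uinitialized-memory-access}.
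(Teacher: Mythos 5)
Your proposal is correct and follows essentially the same route as the paper's proof: a case split on whether $l$ has type $\tau\rptr$ or $\toption{\tau*}$, using the \textsc{TDEREF} side condition to exclude direct dereference of option types, the $\kw{ref}$-only origin of locations plus state well-formedness for non-nullness in the first case, and \textsc{TMATCHO} together with \textsc{MSOME}/\textsc{MNONE} for the second. The structural gap you flag in the option case (that ``every dereference of the payload is mediated by a match'' is a global claim, not a single inversion) is equally present and equally unformalized in the paper's own argument.
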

\begin{proof}
\textit{Case 1:} \( \tau_l = \tau~\rptr \). From the typing rule for dereferencing, we know that the judgment \(\typerule{\Gamma}{\Sigma}{\Pi}{\Psi}{!l}{\tau}{\kw{read} :: \eta'}\) is derivable only when \(l\) has type \(\tau~\rptr\). The typing rules for BeePL do not permit dereferencing expressions of other types directly. 
Since \(\typerule{\Gamma}{\Sigma}{\Pi}{\Psi}{l}{\tau~\rptr}{\eta'}\) and the state \(s\) is well-formed. From the typing rule of location $\kw{TLOC}$, we know that $\Sigma(\ell) = \tau \rptr$. Moreover, in BeePL, locations are only introduced via the \texttt{ref} construct, which allocates memory and ensures that the location stores a value immediately. Therefore, dereferencing \(\ell\) is safe and never accesses null pointer.

\textit{Case 2:} \( \tau_l = \kw{option}~\tau* \). By inversion on the typing rule, it is not possible to type \(!l\) directly if \(l\) has type \(\kw{option}~(\tau*)\). The only way a dereference can occur in this case is when \(l\) is pattern matched using a \texttt{match} construct: $\kw{match}\ l\ \texttt{with}\ \kw{Pnone} \Rightarrow e_1\ \mid\ \kw{Psome}\ x \Rightarrow !x$. By the typing rule for \texttt{match}, both branches \(e_1\) and \(!x\) should be type checked with the same type. Specifically, in the \(\kw{Psome}\ x\) branch, \(x : \tau*\) is added to the context. Therefore, inside the \(\kw{Psome}\) branch, dereferencing \(!x\) is allowed as presence of $x$ ensures that $l$ is not null. Operationally, the semantics of BeePL guarantee that only the branch corresponding to the runtime constructor is evaluated during pattern matching. Hence, \(!x\) is only executed when \(l\) evaluates to \(\kw{Psome}\ \ell\) for some valid memory location \(\ell\). 

In both cases, dereferencing is only performed on non-null pointers. Either the pointer has type \(\tau~\rptr\) and is directly dereferenced, or it is an \texttt{option} type and pattern matched to safely isolate the \(\kw{Some}\) case before dereferencing. Therefore, null pointer dereferences cannot occur in any well-typed BeePL program.
\end{proof}

\begin{lemma}[Memory Bounds Safety]\label{memory-bounds-safety}
Let $\Gamma, \Sigma, \Pi, \Psi \vdash \match{v}{p}{e} : \tau, \eta$ and the evaluation succeeds via the rule $\kw{MBYTES}$ (the success case), that is, $\eval{s}{\match{v}{p}{e}}{s}{e_1[\overline{y_i \mapsto v_i},\ x \mapsto v_x]}.$
Then, all accesses to the byte value $v$ during the pattern match are within bounds, i.e., the extraction $\kw{extract}(v, \tau_x) = (v_x, \{ y_1 \mapsto v_1, \ldots, y_n \mapsto v_n \})$ is defined only when $\kw{length}(v) \geq \kw{sizeof}(\tau_x)$, ensuring no out-of-bounds memory access occurs.
\end{lemma}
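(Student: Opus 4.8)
The plan is to prove the statement by inversion on the evaluation derivation $\eval{s}{\match{v}{p}{e}}{s}{e_1[\overline{y_i \mapsto v_i},\ x \mapsto v_x]}$. Since we are told that the step succeeds through the rule $\kw{MBYTES}$ (as opposed to the failure rule $\kw{MBYTESF}$), inverting this derivation immediately exposes all of the premises of $\kw{MBYTES}$. In particular it yields $\mathsf{typeof}(v) = \kw{bytes}$, the fact that the selected pattern has the byte form $\kpbytes{x}{\tau_x}{y}{\tau_y}$, and crucially the bounds premise $\kw{length}(v) \geq \kw{sizeof}(\tau_x)$ together with the extraction equation $\kw{extract}(v,\tau_x) = (v_x, \{y_1 \mapsto v_1, \ldots, y_n \mapsto v_n\})$. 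Thus the length condition is not something to be separately derived: it is a structural side-condition that must already hold for the success rule to fire, and this is exactly what discharges the first half of the conclusion.

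To establish the second half — that every byte access performed during extraction lands inside $v$ — I would factor the behavior of $\kw{extract}$ into an auxiliary lemma. The key fact is that $\kw{extract}(v,\tau_x)$ parses the aggregate value $v_x$ and each field binding $y_i \mapsto v_i$ by reading bytes at offsets dictated by the layout of $\tau_x$ recorded in the composite environment $\Pi$: each field $y_i$ occupies the interval $[o_i,\ o_i + \kw{sizeof}(\tau_{y_i}))$, and the layout invariant guarantees $o_i + \kw{sizeof}(\tau_{y_i}) \leq \kw{sizeof}(\tau_x)$ for every field. Chaining this with the bounds premise $\kw{length}(v) \geq \kw{sizeof}(\tau_x)$ gives $o_i + \kw{sizeof}(\tau_{y_i}) \leq \kw{length}(v)$, so each read falls strictly within the buffer. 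Combined with the complementary $\kw{MBYTESF}$ rule, which fires exactly when $\kw{length}(v) < \kw{sizeof}(\tau_x)$, the match semantics is total and never attempts an out-of-range read: the dynamic bounds check is fused into the two operational rules, and the success branch is reachable only under the in-bounds hypothesis.

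The main obstacle I anticipate is making the auxiliary lemma about $\kw{extract}$ precise, since $\kw{extract}$ is the one point where the abstract $\kw{bytes}$ type meets a concrete memory layout. I would need to characterize $\kw{extract}$ so that, whenever the length premise holds, (i) it is total and (ii) the set of byte indices it touches is contained in $[0,\ \kw{sizeof}(\tau_x))$. Establishing (ii) reduces to the layout invariant for composite types — that field offsets and field sizes are mutually consistent with the computed $\kw{sizeof}(\tau_x)$ — which is the same invariant already relied upon by $\mathsf{field_{ofs}}$ in $\kw{STRUCTV}$ and $\kw{FACCESSV}$ and by $\mathsf{field_\tau}$ in $\kw{TFIELD}$. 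Once this invariant is stated and discharged for $\Pi$, the remainder of the argument is a direct chaining of the inequalities above and carries no further subtlety.
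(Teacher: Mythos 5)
Your first paragraph is exactly the paper's own argument: the proof in the appendix likewise observes that the bounds condition $\kw{length}(v) \geq \kw{sizeof}(\tau_x)$ is a premise of the success rule $\kw{MBYTES}$, so inversion on the evaluation step yields it immediately, and the fallback rule $\kw{MBYTESF}$ performs no memory access, so no other path can read out of bounds. Where you go beyond the paper is in your second and third paragraphs. The paper stops at the observation that extraction is ``safe by construction according to the semantics of match,'' effectively treating $\kw{extract}$ as a black box whose definedness under the length premise is taken as given; you correctly note that the stronger reading of the conclusion --- that every byte index touched during extraction lies in $[0,\ \kw{sizeof}(\tau_x))$ --- does not follow from the length premise alone but needs an auxiliary layout invariant for composite types, namely $o_i + \kw{sizeof}(\tau_{y_i}) \leq \kw{sizeof}(\tau_x)$ for each field offset $o_i$, the same consistency property already relied upon by $\mathsf{field_{ofs}}$ elsewhere in the semantics. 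This is a genuine refinement rather than a different route: it isolates the one point where the abstract $\kw{bytes}$ value meets a concrete memory layout, which the paper leaves implicit, and it is what one would actually have to state and discharge to mechanize the lemma. Both arguments are sound at the paper's level of rigor; yours is the more honest account of what remains to be proved about $\kw{extract}$.
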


\begin{proof}
From the typing derivation $\Gamma, \Sigma, \Pi, \Psi \vdash \match{v}{p}{e} : \tau, \eta$, we know:
\begin{itemize}
  \item $v$ is a value of type \kw{bytes},
  \item the pattern includes a case of the form $\kpbytes{x}{\tau_x}{y}{\tau_y} \Rightarrow e_1$,
  \item $\kw{sizeof}(\tau_x)$ is defined and finite as it extends the context to type check $\overline{e}$.
\end{itemize}
The result is a direct consequence of the operational semantics, as for ${\match{v}{p}{e}}$ to proceed with extraction and substitution, the check on the length of the buffer ($\kw{length}(v) \geq \kw{sizeof}(\tau_x)$ in this case) is necessary to hold. Hence this proves that extraction of a byte buffer $v$ in some target of type $\tau_x$ is safe by construction according to the semantics of $\kw{match}$ ($\kw{MBTYTES}$). If the condition fails, evaluation proceeds via the fallback rule \textsc{MBYTESF}, which does not perform any memory access. Hence, no out-of-bounds access can occur during evaluation of a match on bytes. Hence, no rule in the semantics permits extraction from out-of-bounds memory, and all memory accesses during match evaluation on byte values are within valid bounds.
\end{proof}

\subsection{Formally verified compiler from BeePL to BPF bytecode}
We implemented the BeePL compiler in the Rocq proof assistant. This compiler translates BeePL expressions into \texttt{Csyntax}, the first intermediate representation used by the CompCert verified C compiler. The translation preserves the semantics and structural discipline of BeePL while mapping its high-level constructs—such as function calls, control flow, structured data, and memory operations—into the imperative syntax expected by CompCert. The resulting \texttt{Csyntax} program is then compiled to BPF bytecode using the CompCert compilation pipeline. This end-to-end path—from BeePL to BPF via C—enables the generation of safe and semantically sound BPF bytecode, suitable for secure deployment in the kernel.

We present the design, structure, and implementation of the BeePL compiler in the Appendix. There, we detail the translation of key constructs such as \kw{ref}, \kw{let}-bindings, \kw{for}-loops, and \kw{match} expressions. Cases with direct one-to-one mappings from BeePL expressions to C statements are omitted for brevity. At a high level, we highlight how the translation of the \kw{for}-loop preserves termination, reflecting a deliberate design choice in the compiler. As detailed in \secref{appendix}, the BeePL-to-C translation evaluates the loop bounds $e_1$ and $e_2$ once at the start and assigns them to fresh variables $l$ and $h$. Based on the specified direction (\texttt{Up} or \texttt{Down}), the translation introduces a fresh loop index $i$, initialized to $l$, and updates it using either $i{++}$ or $i{-}{-}$ in each iteration. The loop executes only if the initial condition ($l \leq h$ for upward traversal, $l \geq h$ for downward) is satisfied, and this condition is checked before entering the loop. Since the bounds are fixed and the loop index advances by exactly one in each iteration, it cannot overshoot or skip the endpoint. As a result, the loop always makes progress and is guaranteed to terminate. The formal proof of the compiler’s correctness is given in Lemma~\ref{compiler-correctness}, which appears in the Appendix.

\paragraph{Extension of CompCert for Verified Compilation to BPF Bytecode}

CompCert already features backends for classic Instruction Set Architectures (ISA), namely PPC, x86, ARM and RISCV. 
Compared to those ISAs, eBPF has a very restricted instruction set. For instance, there is no support for signed arithmetic or floating-point numbers.
Our eBPF backend emulates signed arithmetic (notably signed division) at instruction selection. 
Other signed operations e.g., a cast to a signed byte is encoded using a left shift followed by an arithmetic right shift. 
Programs using floating-point numbers are simply rejected at assembly generation.
All the necessary code for performing and proving the CompCert optimisation passes
is borrowed from existing backends with little modifications.

The handling of functions calls and return require more work because the eBPF semantics deviate from what is usually implemented. In standard ISA (except x86), a call is a branch-and-link; the function prologue allocates a new stack frame, updates the stack pointer $sp$ and 
saves the stack pointer of the caller and the return address (if needed).
At function exit, the stack pointer $sp$ is explicitly restored and control is given back to the caller. As soon as the Application Binary Interface (ABI) is respected, the precise handling of the stack pointer is responsibility of the compiler.
Therefore, at function entry, CompCert  sets the stack pointer at the bottom of the newly allocated stack frame. 
As a result, within the function, stack accesses are of the form $\mathit{sp+ofs}$ where $\mathit{ofs}$ is a positive offset.

In eBPF, things are done differently. A deviation is that the stack register (R10) is documented as read-only\footnote{\url{https://www.kernel.org/doc/html/v5.17/bpf/instruction-set.html}}.
Actually, this is the responsibility of the JIT compiler to update the stack register; making sure that stack frames are allocated and restored
but also to enforce calling-conventions (callee-save registers are saved at function entry and restored at function exit).
As a result, though the eBPF program never explicitly updates R10, after a call, R10 points to the top of a newly allocated stack frame.
Unlike CompCert, stack accesses are of the form $R10-\mathit{ofs}$. Moreover, the stack pointer of the caller and the return address are not available.
To give a semantics to this behaviour while maintaining maximum compatibility with CompCert, we arrange the stack-frame layout as shown in Figure~\ref{fig:stacklayout}. 
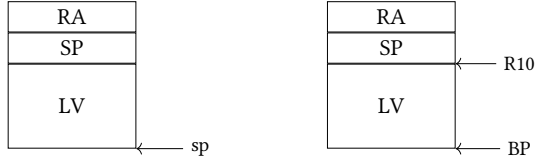
\begin{figure}[h]
\centering
\scalebox{0.85}{
\begin{tikzpicture}

  \node[rectangle, draw=black, minimum width=2cm, minimum height=0.4cm] (mra) at (0,0) {RA};
  \node[rectangle, draw=black, minimum width=2cm, minimum height=0.4cm, anchor=north] (msp) at (mra.south) {SP};
  \node[rectangle, draw=black, minimum width=2cm, minimum height=1.3cm, anchor=north] (mlv) at (msp.south) {LV};
  \path (mlv.south east)++(1,0) node (nsp) {\small sp};
  \draw[->] (nsp) -- (mlv.south east);

  \node[rectangle, draw=black, minimum width=2cm, minimum height=0.4cm] (ra) at (5,0) {RA};
  \node[rectangle, draw=black, minimum width=2cm, minimum height=0.4cm, anchor=north] (sp) at (ra.south) {SP};
  \node[rectangle, draw=black, minimum width=2cm, minimum height=1.3cm, anchor=north] (lv) at (sp.south) {LV};
  \path (sp.south east)++(1,0) node (r10) {\small R10};
  \draw[->] (r10) -- (sp.south east);
  \path (lv.south east)++(1,0) node (bp) {\small BP};
  \draw[->] (bp) -- (lv.south east);

\end{tikzpicture}
}
\caption{Stack layout: Mach language (left), eBPF assembly (right)}
\label{fig:stacklayout}
\end{figure}

The left hand side of Figure~\ref{fig:stacklayout} shows the layout of a stack frame of the Mach language which is the last CompCert intermediate language before assembly. We inherit from the CompCert infrastructure that the stack register $sp$ points to the bottom of the stack frame.
However, the precise layout of the stack frame is architecture dependent.
For the eBPF backend, we specify that the return address RA and the stack pointer of the caller SP are saved at the top of the stack frame.
At assembly level (right hand side of Figure~\ref{fig:stacklayout}), the layout of the stack frame is the same with the twist 
that the stack register $R10$ points at top of the stack frame just below SP. 

To facilitate the simulation proof, we introduce a (non-architectural) register BP (Base Pointer) which simulates the Mach $sp$ register.
Within a function $f$, the stack offset $\delta$ has value $f.fn\_stacksize - 2 * 8$ because SP and RA are 8 bytes i.e., 64 bits quantities 
and we have the invariant that $R10 = BP + \delta$.

In terms of assembly generation, there are several consequences that are worth mentioning.
First of all, every Mach stack access of the form $sp+\mathit{ofs}$ is compiled into an eBPF access of the form $R10 + (ofs - \delta)$.
In our current backend, the Mach instuctions $Mtailcall$ which performs a tail call and $Mgetparam$ which fetches arguments from the caller's 
stack are not supported. 
For tailcalls, the reason is that the classic assembly code sequence to model them consists in a explicit free of the current stack frame 
followed allocation by a direct jump to the function entry. As eBPF stack frame are implicitly handled by the runtime and because 
it is unclear how to semantically distinguish between a recursive tailcall to the local function from a genuine local jump, 
we simply forbid this construction.
Another consequence of the fact that the stack frames are managed by the runtime is that the eBPF calling conventions 
only allow parameter passing through registers. 
The rationale being that the parent stack frame (the value of SP) is hidden to the callee. 
This explains why our backend does not support the Mach instruction $Mgetparam$ which fetches arguments from the caller's stack.

Under these restrictions, our current eBPF backend is proved correct within the CompCert infrastructure and therefore ensure semantics 
preservation down to eBPF assembly. The assembly is thereafter compiled down to a loaded elf file using the existing eBPF assembler 
of the llvm toolchain.

\section{Related Work}\label{sec:relatedwork}

\paragraph{Verifier correctness}
There have been promising efforts to enhance verifier correctness, for instance,~\cite{10.1007/978-3-031-37709-9_12, Bhat2022FormalVO} improve range analysis but do not cover memory safety or termination checks. Although these works provide formal justifications, they lack machine-checked proofs and remain susceptible to state-explosion. Similarly, \cite{10.1145/3341301.3359641} employs scalable symbolic execution to verify eBPF safety, but despite path-reduction optimizations, it still faces state-explosion and complex constraint-solving issues. Another approach~\cite{258848} ensures BPF Just-In-Time (JIT) compilation correctness through machine-checked proofs. However, it still relies on the verifier or advanced symbolic execution tools like Serval for bytecode safety, focusing on preserving correctness in the translation from BPF bytecode to native machine code rather than addressing verifier limitations.

State Embedding~\cite{298730} validates the Linux eBPF verifier by embedding runtime checks to detect logic bugs, improving trust in existing infrastructure. In contrast, BeePL ensures safety by construction through a high-level language and a formally verified compiler, eliminating reliance on the kernel verifier. VEP~\cite{306007} preserves annotations from C to bytecode to aid verification and improve error messages, but its compiler's preservation is not formally verified. In contrast, CompCert also preserves annotations with formal guarantees in Rocq, while BeePL avoids reliance on annotations altogether. Compilers like Jasmin~\cite{10.1145/3133956.3134078} and CompCert~\cite{10.1007/11813040_31} are formally verified to preserve a range of properties. Unfortunately, these cannot be directly utilized to compile eBPF programs due to their limitations at both the source and target levels. Neither compiler can reason about eBPF properties at the source level or produce BPF bytecode as a target.

\paragraph{Language features}
Most BeePL language features on themselves are not unique.
Provable terminating loops are part of theorem provers like Rocq~\cite{barras1999coq}, Lean~\cite{moura2021lean}, Agda~\cite{bove2009brief}, and Idris~\cite{brady2013idris}.
All use using structural recursion on algebraic data types to prove termination \cite{abel2002predicative,abel2004termination},we use restricted loop constructs and type-and-effect system to guarantee termination; no arbitrary recursion.
Bitstrings originate from Erlang~\cite{armstrong2007history}, and are also part of other languages compiling to the BEAM, like Elixir~\cite{elixir2025ref} and Gleam~\cite{gleam2025tour}.
Preventing null pointer dereferencing using Option-types originates from functional languages \cite{pierce2002types},
and are part of multiple modern system languages, among others Rust~\cite{klabnik2023rust} and Swift~\cite{apple2025swift}.
These languages also prevent accessing uninitialised memory access with flow analysis.
To the best of our knowledge, we do not know of any language preventing use-after-free bugs by prohibiting procedures to return pointer types. In a general purpose language, this is a severe restriction. However, in a DSL like BeePL we think it is a reasonable trade-off. The combination of the above language features and the certified pipeline make BeePL an appealing source language to write verified eBPF kernel code.

\paragraph{Safe programming languages}
We would also like to highlight the scope of extending or reusing other languages such as Koka~\cite{article}, CakeML~\cite{10.1145/2535838.2535841}, and Cogent~\cite{O’CONNOR_CHEN_RIZKALLAH_JACKSON_AMANI_KLEIN_MURRAY_SEWELL_KELLER_2021}
for our purpose. While CakeML features a fully verified compiler from a high-level functional language to machine code, it is not well suited for writing eBPF programs due to its general-purpose design. In particular, it does not target BPF bytecode and lacks support for the specific runtime conventions required by the eBPF execution environment. The same limitations apply to Koka. Both languages assume a general-purpose memory model that allows unrestricted use of dynamic memory allocation (e.g., via malloc), and their standard libraries heavily rely on such capabilities. Adapting these languages to target eBPF would require imposing the same constraints we enforce in BeePL, such as a restricted memory model, precise effect tracking, and compatibility with the eBPF verifier, which would entail substantial modifications and effort. Instead, we focused on adding a new front-end to CompCert, allowing us to reuse its verified optimization and code generation passes without having to reimplement/modify them. 

Cogent employs uniqueness typing to eliminate errors such as uninitialized memory access, use-after-free bugs, and undefined behavior. It generates refinement obligations in Isabelle/HOL that users must manually discharge to establish correctness relative to their specifications, though its compiler remains part of the trusted base. In contrast, BeePL provides a formally verified compilation toolchain, guaranteeing preservation of safety properties stated at the source level without additional user effort.

\paragraph{Fuzzing and testing}
Previous works~\cite{10278676, syzkaller, Buzzer, 10.1145/3643778, 10.1145/3627703.3629562} have applied fuzzing techniques to detect verifier bugs. While fuzzing is more effective than traditional testing, it lacks completeness, produces false positives, and struggles with scalability in detecting logical or semantic issues. The work~\cite{10278676} introduces a fuzzing method using a Bound-Violation Indicator (BVI) to find logical bugs in the eBPF verifier. It detects inconsistencies in range analysis between the verifier and actual execution. This exposes bugs that can lead to unsafe memory access. Syzkaller~\cite{syzkaller} is an automated fuzzer designed to test the Linux kernel by generating syscall sequences that explore deep kernel paths. It uses coverage-guided fuzzing to find bugs like crashes, memory corruption, and security vulnerabilities. BUZZER~\cite{Buzzer} is a fuzzer for the Linux kernel’s eBPF verifier that targets semantic bugs by solving symbolic constraints to generate programs that bypass the verifier. It combines symbolic execution with fuzzing to explore hard-to-reach verifier logic paths. The work~\cite{10.1145/3643778} highlights that fuzzers often fail to generate inputs that adhere to semantic specifications and necessary dependencies, making them ineffective for detecting correctness bugs. Since fuzzer-generated programs do not sufficiently trigger the verifier’s complex analyses~\cite{10.1145/3627703.3629562}, they mainly uncover simple memory issues rather than deeper correctness flaws. Addressing such bugs requires stronger assertions at key verification checkpoints.

\section{Limitations and future work}

Our framework does not currently analyze stack usage or enforce stack frame size limits, nor does it perform detailed scope or liveness analysis. To ensure memory safety, we conservatively disallow functions from returning the address of local variables, avoiding access to deallocated stack memory. All eBPF helpers are currently treated as external calls with assumed correctness, placing them in the trusted computing base (TCB). To reduce the TCB, we plan to formally specify and verify commonly used helpers, either by modeling their behavior in our semantics or proving their contracts. Future work includes integrating formal scope and liveness analysis to safely support address-returning functions, extending the type system to track memory-region-based effects for finer control over aliasing and lifetimes.

\bibliographystyle{ACM-Reference-Format}
\bibliography{references}

\section{Appendix}\label{sec:appendix}

\subsection{Auxillary Lemmas}
Lemma~\ref{progrss-unaryoperators} states that if a unary operation is applied to a well-typed value in a well-formed state, then its evaluation will succeed and produce a defined result. It guarantees that no unary operation will cause the program to get stuck.
\begin{lemma}[Progress for unary operators]\label{progrss-unaryoperators}
\small
\begin{align*}
\forall \Sigma, \Gamma, \Pi, \Psi, op, v. \ 
\typerule \Gamma \Sigma \Pi \Psi {\kw{uop}(op,v)} \tau \eta \wedge \kw{well\_formed\_state}(\Gamma, \Sigma, s) \implies \\
\exists v', \mathsf{uop_{sem}}(op,v) = v' \wedge v' \neq \kw{undef}
\end{align*}
\end{lemma}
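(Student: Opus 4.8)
The plan is to invert the typing derivation, reduce the operand to a canonical primitive value, and then discharge the goal by a finite case analysis on the unary operator together with the definition of $\mathsf{uop_{sem}}$.

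First I would apply inversion to the hypothesis $\typerule{\Gamma}{\Sigma}{\Pi}{\Psi}{\kw{uop}(op,v)}{\tau}{\eta}$. The only typing rule that can conclude this judgment is $\kw{TUOP}$, so inversion yields a subderivation $\typerule{\Gamma}{\Sigma}{\Pi}{\Psi}{v}{\tau}{\eta}$ together with the side condition $\kw{isPrim}(\tau)$. Hence the operand is a well-typed value whose type is one of the primitive types $\kw{int}$, $\kw{long}$, or $\kw{bool}$. I note that the hypothesis $\kw{well\_formed\_state}(\Gamma, \Sigma, s)$ plays no essential role in this particular lemma, since the operand is already a fully evaluated value and no memory access is performed; it is carried uniformly with the other progress statements.

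Next I would establish a canonical-forms fact: a value $v$ that is well-typed at a primitive type must itself be a primitive constant of that type. This follows by inversion on the value-typing rules $\kw{TCONSI}$, $\kw{TCONSL}$, $\kw{TCONSB}$, and $\kw{TLOC}$: the only value forms admitting a primitive type are the integer, long, and boolean constants, since locations are typed at reference types and $\kw{Some}/\kw{None}$ at option types. This step rules out $v$ being a location or an option value, which is exactly what guarantees that $\mathsf{uop_{sem}}$ is applied to an argument on which it is defined rather than getting stuck.

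Finally I would perform a finite case analysis over the unary operator $op$ and the canonical shape of $v$. In each case the definition of $\mathsf{uop_{sem}}$ computes a concrete primitive result by unfolding; critically, no unary operator exhibits the partiality seen in division, modulo, or shift (which require the guarding $\mathsf{unsafe}$ predicate in rule $\kw{BOPV}$), so there is no execution path that returns $\kw{undef}$. Thus $\mathsf{uop_{sem}}(op,v)$ is defined and its result is a genuine primitive value, discharging the existential. The main obstacle I anticipate is the canonical-forms step, which is the only place where well-typedness of $v$ is genuinely used and which requires exhaustively ruling out every non-primitive value form by inversion; once $v$ is known to be a primitive constant, the remaining argument is a routine definitional unfolding, since totality of the unary semantics on typed primitive operands is built into the BeePL semantics.
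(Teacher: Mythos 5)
Your proposal is correct and follows essentially the same route as the paper: case analysis on the operator, inversion on $\kw{TUOP}$ to conclude the operand has a primitive type, and totality of $\mathsf{uop_{sem}}$ on primitive constants to rule out $\kw{undef}$. The only difference is that you make the canonical-forms step (a value of primitive type must be an $\kw{int}$, $\kw{long}$, or $\kw{bool}$ constant, not a location or option) explicit, which the paper leaves implicit; this is a welcome clarification rather than a divergence.
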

\begin{proof}
The proof proceeds by performing a case analysis on the operator $op$, which yields separate goals for each unary operator supported in BeePL. BeePL supports unary operators such as boolean negation, integer complement, and unary negation. The semantics of unary operators are closely tied to the type of their arguments and only fail when applied to values of incorrect type. However, the typing rules in BeePL ensure that such ill-typed cases are ruled out statically, preventing any undefined behavior at runtime. Also, BeePL type system ensures that unary operators are only allowed on primitive types.
\textit{Boolean negation case.} From the hypothesis (typing rule $\kw{TUOP}$), we know that the type of the value $v$ is $\kw{bool}$. Negating a boolean value (i.e., $\kw{true}$ or $\kw{false}$) always succeeds and returns its logical complement ($\kw{false}$ or $\kw{true}$), which is a well-defined value. Thus, this case is proven.
\textit{Integer complement and negation case.} From the hypothesis (typing rule $\kw{TUOP}$), we know that the type of the value $v$ is either $\kw{int}$ or $\kw{long}$. We consider each type separately. Based on the semantics of integer complement and unary negation, both operations succeed on 32-bit and 64-bit integers and always produce defined results. Hence, the goal follows. 
\end{proof}

Lemma~\ref{progress-binaryoperators} states that if a binary operation is applied to a well-typed values in a well-formed state, then its evaluation will succeed and produce a defined result. It guarantees that no binary operation will cause the program to get stuck.
\begin{lemma}[Progress for binary operators]\label{progress-binaryoperators}
\small
\begin{align*}
\forall \Sigma, \Gamma, \Pi, \Psi, op, v_1, v_2. \ 
\typerule \Gamma \Sigma \Pi \Psi {\kw{bop}(op,v_1, v_2)} \tau \eta \wedge \kw{well\_formed\_state}(\Gamma, \Sigma, s) \implies \\
\exists v', \mathsf{bop_{sem}}(op,v_1, v_2) = v' \wedge v' \neq \kw{undef}
\end{align*}
\end{lemma}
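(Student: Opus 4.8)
The plan is to mirror the structure of the progress proof for unary operators (Lemma~\ref{progrss-unaryoperators}), proceeding by case analysis on the binary operator $op$. First I would invert the typing hypothesis $\typerule \Gamma \Sigma \Pi \Psi {\kw{bop}(op,v_1, v_2)} \tau \eta$ using the rule $\kw{TBOP}$. This inversion yields that both operands $v_1$ and $v_2$ are typed at the same type $\tau$ and that $\kw{isPrim}\ \tau$ holds; consequently $\tau$ is one of $\kw{int}$, $\kw{long}$, or $\kw{bool}$, and in particular the operands cannot be pointers, structs, or other non-primitive values. This immediately discharges the only way $\mathsf{bop_{sem}}$ could become stuck through a type mismatch between the operands.

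With the operand types fixed, I would split on the class of $op$. For the total operations—addition, subtraction, multiplication, the bitwise and logical connectives, and the comparison operators—the CompCert semantics underlying $\mathsf{bop_{sem}}$ produce a defined value for any two well-typed primitive operands of matching width and signedness, so $v' \neq \kw{undef}$ follows directly. The well-formedness hypothesis $\kw{well\_formed\_state}(\Gamma, \Sigma, s)$ is used here only to guarantee that the signedness and size metadata recorded for the operands are consistent, so that the matching-width side conditions of $\mathsf{bop_{sem}}$ are met.

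The main obstacle is the family of partial operations—division, modulo, and the shift operators—which in CompCert are genuinely undefined on a zero divisor, on signed overflow, or on shift amounts that meet or exceed the operand bit-width. For these cases I would appeal to BeePL's design: its $\mathsf{bop_{sem}}$ is total by construction, returning the default value $0$ precisely in the situations that CompCert would leave undefined (as captured by the predicate $\mathsf{unsafe}$ and reflected in rule $\kw{BOPV}$). I would therefore argue that for every problematic combination of $v_1$ and $v_2$, $\mathsf{bop_{sem}}$ either computes the ordinary arithmetic result or yields the inserted default, both of which are ordinary values distinct from $\kw{undef}$. Establishing this cleanly requires unfolding the definition of $\mathsf{bop_{sem}}$ on each arithmetic operator and checking that its guarded branches exhaustively cover the cases CompCert leaves open; this bookkeeping—rather than any deep semantic argument—is where the real work lies. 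Combining the total and guarded cases yields $\exists v'.\ \mathsf{bop_{sem}}(op, v_1, v_2) = v' \wedge v' \neq \kw{undef}$, completing the proof.
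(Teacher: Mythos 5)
Your proposal is correct and follows essentially the same route as the paper's proof: case analysis on $op$, inversion of $\kw{TBOP}$ to restrict operands to primitive types, immediate discharge of the total operators, and handling of division, modulo, and shifts by splitting on the $\mathsf{unsafe}$ predicate so that the guarded branch of rule $\kw{BOPV}$ yields the default value $0$ while the safe branch appeals to $\mathsf{bop_{sem}}$ producing a defined result. The only cosmetic difference is that you describe the guard as living inside $\mathsf{bop_{sem}}$ rather than in the evaluation rule, but you flag this yourself and the paper's own proof exhibits the same slippage between the lemma statement (which mentions only $\mathsf{bop_{sem}}$) and its argument (which case-splits via $\kw{BOPV}$).
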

\begin{proof}
The proof proceeds by performing a case analysis on the binary operator $op$, which yields separate subgoals for each binary operator supported by BeePL. To ensure well-defined semantics, we must guarantee that operators are applied to operands of appropriate types—a property enforced by the typing rule ($\kw{TBOP}$) for binary operators, as assumed in our hypothesis. The most interesting cases arise with operators that may result in undefined behavior, such as division, modulo, and shift. These operators are susceptible to runtime errors like division by zero or arithmetic overflow. We illustrate the reasoning using the case of the division operator.

The division operator in BeePL is categorized into signed and unsigned variants. For unsigned division, undefined behavior may occur only when the divisor is zero. According to BeePL's semantics, an unsafe unsigned division is handled gracefully by producing a default value—specifically, zero. In such cases, we apply the $\kw{BOPV}$ rule by constructing a value equal to 0 and discharging the goal that requires proving $\kw{unsafe}(op, v_1, v_2, s) = \kw{true}$. By definition, the $\kw{unsafe}$ predicate returns $\kw{true}$ precisely when $op$ is an unsigned division and $v_2 = 0$. For the safe case, i.e., when $\kw{unsafe}$ returns $\kw{false}$, we appeal to the semantics of binary operations, which ensures that the function $\kw{bop_{sem}}$ produces a valid result.

In the case of signed division, the semantics of $\kw{unsafe}$ are more involved, covering not only division by zero but also arithmetic overflow conditions (e.g., dividing the most negative value by $-1$ in two's complement). Nevertheless, the proof strategy remains the same: distinguish between the safe and unsafe cases using the $\kw{unsafe}$ predicate and construct the corresponding semantic outcome. This approach generalizes naturally to other potentially unsafe operators such as modulo and shift. In each case, the semantics ensure that well-typed expressions either evaluate to a defined value or are handled explicitly through default values when unsafe conditions are detected. Hence, we conclude that all well-typed binary operations in BeePL have a well-defined semantics, ensuring progress and safety even in the presence of operations that are typically prone to undefined behavior. 
\end{proof}

\begin{lemma}[Safe dereference for valid pointers]\label{safe-dereference}
\small
\begin{align*}
\forall \Sigma, l, s, \tau. \ 
&\Sigma \ l = \tau* \wedge \kw{isValidAccess}(s_\Theta,l, \kw{Freeable}) \implies \exists \ v, s_\Theta[l] = v \wedge \kw{typeof}(v) = \tau
\end{align*}
\end{lemma}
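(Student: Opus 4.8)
The plan is to derive the two conjuncts---existence of a stored value and its type---from the two hypotheses together with the consistency invariant that ties the store typing $\Sigma$ to the memory $\Theta$. The existence half is a direct consequence of the memory model, whereas the type half rests on an invariant maintained by the allocation and assignment semantics.

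First I would extract the value $v$ from $\kw{isValidAccess}(s_\Theta, l, \kw{Freeable})$. In the CompCert memory model, a valid access with $\kw{Freeable}$ permission implies that the cell at $l$ is allocated and readable (since $\kw{Freeable}$ subsumes the read permission), so the load $s_\Theta[l]$ is defined and yields some value $v$. This discharges the first conjunct $s_\Theta[l] = v$ and requires only the standard memory-model fact that a readable, allocated location carries a value.

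Second I would establish $\kw{typeof}(v) = \tau$ from $\Sigma\ l = \tau*$. The crucial ingredient is the store-typing/memory consistency invariant: whenever $\Sigma\ l = \tau*$, the value actually stored at $l$ has type $\tau$. This correspondence is introduced at allocation by rule \textsc{REFV}, which sets $\Sigma[l \rightarrow *\kw{typeof}(v)]$ and $\Theta[l \rightarrow v]$ simultaneously, so the recorded pointer type is $\kw{typeof}(v)*$ by definition; injectivity of the pointer constructor then turns $\kw{typeof}(v)* = \tau*$ into $\kw{typeof}(v) = \tau$. The invariant is preserved by memory updates: rule \textsc{MASSGNV} fires only on assignments that type-check under \textsc{TMASSGN}, where the right-hand side has exactly the pointee type of the left-hand pointer, so the value overwriting the cell retains the type recorded in $\Sigma$. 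Because BeePL forbids implicit casts, a location's type is fixed at its allocation site and never mutates, so no reduction step can desynchronise $\Sigma$ from $\Theta$.

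The main obstacle is that the lemma statement does not literally list this consistency invariant among its hypotheses, so the type half of the conclusion cannot follow from the two stated assumptions in isolation. I would therefore make the dependence explicit and appeal to the ambient well-formedness of the state (Definition~\ref{well-formed-state}), under which the $\Sigma$--$\Theta$ correspondence is exactly the property that guarantees a location typed as $\tau*$ in the store typing holds a $\tau$-valued cell in memory. With that invariant in hand the two conjuncts combine immediately, and the only genuinely memory-model-specific reasoning is confined to the existence step.
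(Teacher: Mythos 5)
Your proposal is correct and follows essentially the same route as the paper's proof: the existence of $v$ is obtained from the \textsf{Freeable} permission via the CompCert memory model, and $\kw{typeof}(v) = \tau$ is obtained from the $\Sigma$--$\Theta$ consistency maintained by allocation and well-typed assignment. Your explicit observation that the type half does not follow from the two stated hypotheses alone and must appeal to the ambient well-formedness invariant is a fair and slightly more careful framing of what the paper glosses as ``the type preservation invariant of the CompCert memory model.''
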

\begin{proof}
From the hypothesis, we know that the location $l$ is well-typed in the store typing environment $\Sigma$. Such a location appears in the store context only if it was freshly allocated using the $\kw{ref}$ construct or produced during the evaluation of an expression. If $l$ was created via $\kw{ref}$, then a value has already been stored at that location, making dereferencing safe. If $l$ originates from the external environment, its use is always guarded by a $\kw{match}$ construct that rules out null pointers, ensuring safety before dereferencing. Additionally, the hypothesis and the definition of valid access confirm that the address $l$ is allocated in memory $\Theta$ and has $\kw{Freeable}$ permission. According to the CompCert memory model, this level of permission allows all memory operations, including reads. Thus, a load from $l$ is guaranteed to succeed and return a value $v$. It remains to show that the value $v$ has the expected type. Since $\Sigma \ l = \tau*$, the location $l$ was allocated with type $\tau$, and only values of type $\tau$ can be stored at $l$ (due to the type preservation invariant of the CompCert memory model). Hence, the type of the value $v$ loaded from $l$ is indeed $\tau$, i.e., $\kw{typeof}(v) = \tau$.
\end{proof}

\begin{lemma}[Safe assignment for valid pointers]\label{safe-assignment}
\small
\begin{align*}
\forall \Gamma, \Sigma, l, s, v, \tau. \ 
\Sigma \ l = \tau* \wedge \kw{typeof}(v) = \tau \wedge \kw{isValidAccess}(s_\Theta,l, \kw{Freeable}) \implies\\ \exists \ \Theta', s_\Theta[l \rightarrow v] = \Theta' \wedge \kw{well\_formed\_store}(\Gamma, \Sigma, \prec s_\Delta, s_\Omega, \Theta'\succ)
\end{align*}
\end{lemma}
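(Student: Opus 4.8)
The plan is to prove this lemma as the exact dual of the safe dereference lemma (Lemma~\ref{safe-dereference}), in two stages: first establishing that the store operation succeeds and yields a well-defined memory $\Theta'$, then showing that the resulting state preserves the well-formedness invariant. As with the dereference case, the whole argument is anchored in the CompCert memory model, since BeePL reuses it directly.

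First I would show that the assignment produces a valid updated memory. From the hypothesis $\kw{isValidAccess}(s_\Theta, l, \kw{Freeable})$ and the CompCert memory model, the location $l$ is allocated with $\kw{Freeable}$ permission, which subsumes $\kw{Writable}$; hence a store at $l$ is permitted. Since $\kw{typeof}(v) = \tau$ matches the type with which $l$ was allocated—recorded by $\Sigma \ l = \tau*$—the memory chunk used for the store is exactly the one fixed at allocation time, so there is no size or alignment mismatch. The CompCert store operation therefore succeeds, and we obtain $\Theta' = s_\Theta[l \rightarrow v]$, discharging the existential witness for $\Theta'$.

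Next I would establish $\kw{well\_formed\_store}(\Gamma, \Sigma, \prec s_\Delta, s_\Omega, \Theta'\succ)$. Since the assignment modifies only the memory component while leaving $\Sigma$, $s_\Delta$, and $s_\Omega$ unchanged, every clause of Definition~\ref{well-formed-state} that depends solely on the unchanged components—variable bindings, function-signature agreement, and struct declarations in $\Pi$—carries over immediately from the well-formedness of the original store. The two memory-touching clauses require care: (i) every location typed in $\Sigma$ must remain accessible in $\Theta'$ with $\kw{Freeable}$ permission, and (ii) each such location must hold a value consistent with its $\Sigma$-type. For (i) I would invoke the CompCert property that a store preserves the permission map and the validity of all accesses, so the accessible set and its permissions are invariant under the update. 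For (ii), at $l$ the freshly stored value $v$ has type $\tau$ by hypothesis, matching $\Sigma \ l = \tau*$; at every location $l' \neq l$ the stored value is untouched, so its type consistency is inherited from the well-formedness of $s$.

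The hard part will be clause (i)/(ii): one must ground the abstract reasoning in the concrete CompCert lemmas guaranteeing that a store at a valid location neither alters permissions nor disturbs the validity of other accesses, together with the load-after-store separation property—reading at $l$ returns exactly $v$ while reads at all $l' \neq l$ are unchanged. This is routine given the CompCert infrastructure, but it is the only point where the proof cannot stay at the level of the typing and store-typing abstractions and must descend to properties of the underlying memory model.
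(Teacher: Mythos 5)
Your proposal is correct and follows essentially the same route as the paper's proof: the store succeeds because $\kw{Freeable}$ permission licenses a write at a location whose type matches $\kw{typeof}(v)$, and well-formedness is preserved because only the contents at $l$ change while permissions, $\Sigma$, $s_\Delta$, and $s_\Omega$ are untouched. If anything, your version is tighter — the paper digresses into whether $l$ originated from $\kw{ref}$ or an external $\kw{match}$-guarded pointer, which is not needed given that $\kw{isValidAccess}$ is already a hypothesis, whereas you go straight to the CompCert store/permission/load-after-store properties that actually discharge the memory-touching clauses.
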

\begin{proof}
From the hypothesis, we know that the location $l$ is well-typed in the store typing environment $\Sigma$. This location may either result from a fresh allocation using the $\kw{ref}$ construct or originate from the external environment. In the case of $\kw{ref}$, the location is guaranteed to be valid, and the hypothesis ensures that the value being assigned is compatible with the pointer type associated with $l$, thereby making the assignment safe. For locations obtained from the external environment, the language enforces that any assignment must be wrapped in a $\kw{match}$ construct. This discipline ensures that assignments are only performed on valid, non-null locations, preventing unsafe memory updates. Finally, we show that the updated state $\prec s_\Delta, s_\Omega, \Theta' \succ$ remains well-formed. Here, $\Theta'$ is the updated memory obtained by assigning the value $v$ to the location $l$. Since the assignment respects the expected type of $l$, the location remains well-typed in the context $\Sigma$. Moreover, the memory invariant $\kw{isValidAccess}(\Theta', l, \kw{Freeable})$ continues to hold: the original memory $\Theta$ guaranteed valid access to $l$, and the assignment does not alter memory permissions. Hence, the updated state preserves all well-formedness invariants.
\end{proof}

Lemma~\ref{well-formed-allocation} ensures that memory allocation always succeeds and yields an updated memory that remains well-formed, provided the initial state was well-formed.
\begin{lemma}[well\_formed\_allocation]\label{well-formed-allocation}
\small
\begin{align*}
\forall \Gamma, \Sigma, s, lo, hi. 
&\kw{well\_formed\_state}(\Gamma, \Sigma, s) \implies \\
&\exists \Theta', l, v, \kw{alloc}(s_\Theta, lo, hi) = (\Theta', l) \wedge 
\kw{well\_formed\_store}(\Gamma, \Sigma, \prec s_\Delta, s_\Omega, \Theta'\succ)
\end{align*}
\end{lemma}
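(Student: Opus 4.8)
The plan is to split the statement into its two conjuncts: that $\kw{alloc}(s_\Theta, lo, hi)$ succeeds and returns a concrete pair $(\Theta', l)$, and that the store obtained by replacing $s_\Theta$ with the enlarged memory $\Theta'$ is still well formed. The first conjunct is essentially free: in the CompCert memory model that BeePL inherits, $\kw{alloc}$ is a \emph{total} operation — for any memory and any bounds $lo, hi$ it always returns a fresh block identifier $l$ together with the extended memory $\Theta'$, since the model never runs out of blocks. This immediately discharges the existential witnesses $\Theta'$ and $l$, together with the auxiliary witness $v$ that records the as-yet-undefined initial content of the new block. The real work is therefore concentrated in the second conjunct.

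For well-formedness preservation I would rely on the single structural fact that makes CompCert allocation tractable: $\kw{alloc}$ is a \emph{frame-preserving extension}. Concretely, the returned block satisfies $l \notin \mathsf{dom}(s_\Theta)$ (freshness), and for every pre-existing block its permissions and stored contents in $\Theta'$ coincide with those in $s_\Theta$, because allocation touches only the new block. Armed with this, I would verify each clause of Definition~\ref{well-formed-state} in turn. The clauses governing $\Gamma$-bound local and global variables hold because every such variable is mapped to an \emph{already-allocated} location $l' \neq l$, whose type in $\Sigma$ and whose value in memory are preserved verbatim by the extension. The clause $\Sigma\, l' = \tau \Rightarrow \kw{isValidAccess}(\Theta', l', \kw{Freeable})$ is preserved because valid access for old blocks is unaffected by allocating a disjoint block, and the clauses concerning $\Delta$ and $\Pi$ (function signatures and struct declarations) never mention the memory component at all, so they carry over unchanged.

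The main obstacle is the converse memory clause — that every $\kw{Freeable}$-accessible location is itself typed in $\Sigma$. After allocation the fresh block $l$ becomes $\kw{Freeable}$-accessible, yet $l$ is by construction absent from the fixed $\Sigma$ carried through the statement, so this direction cannot be re-established for a literally unchanged $\Sigma$ that also tracks totality of accessible blocks. I expect to handle this exactly as the $\kw{REFV}$ rule of Figure~\ref{fig:exp_sem} does: in BeePL, raw allocation never occurs in isolation but is always paired with the store-typing extension $\Sigma[l \mapsto *\kw{typeof}(v)]$ at the same reduction step. Thus when this lemma is consumed inside the $\kw{ref}$ case of progress and preservation, the newly typed $l$ is precisely the one just allocated, the freshness condition $l \notin \mathsf{dom}(s_\Theta)$ guarantees this extension is conflict-free, and the $\kw{Freeable}$-to-$\Sigma$ direction is restored. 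If $\kw{well\_formed\_store}$ is instead read as constraining only the consistency of $\Sigma$-typed locations with memory contents (and not the totality of accessible blocks), the fixed-$\Sigma$ statement holds directly. Either way, the crux is the book-keeping that synchronises the memory extension with the store-typing extension, rather than any deep semantic argument.
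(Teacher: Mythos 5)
Your proposal follows the same core argument as the paper's proof: totality of CompCert's $\kw{alloc}$ discharges the existential witnesses, and the frame-preserving nature of allocation (freshness of $l$ plus unchanged permissions and contents on all pre-existing blocks) carries each clause of Definition~\ref{well-formed-state} across the extension. Where you go beyond the paper is in the third paragraph: the paper's proof asserts that "the invariant of well-formedness is preserved" without ever confronting the converse memory clause, namely that every $\kw{Freeable}$-accessible location must be typed in $\Sigma$. You are right that the freshly allocated block $l$ violates this clause for an unchanged $\Sigma$, and your two proposed resolutions — either the lemma is only ever consumed in lock-step with the $\Sigma[l \mapsto *\kw{typeof}(v)]$ extension performed by $\kw{REFV}$ (which is indeed how Lemma~\ref{ref_allocation_succeeds} uses it), or $\kw{well\_formed\_store}$ is a strictly weaker predicate than $\kw{well\_formed\_state}$ that drops the totality direction — are both plausible readings that the paper leaves implicit, since $\kw{well\_formed\_store}$ is never separately defined. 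Your version is therefore not just correct but more honest about where the bookkeeping actually lives; the only caveat is that, as stated, the lemma's conclusion is not literally derivable under the strong reading of the predicate, so you should flag which reading you are adopting rather than leaving it as an either/or.
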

\begin{proof}
The proof follows from the properties of the $\kw{alloc}$ function. In CompCert, memory allocation is defined as a total function that always succeeds, assuming an infinite memory model at the source level. BeePL adopts this allocation model and adapts it to its own type system by wrapping CompCert’s memory primitives around BeePL-specific types. The allocation is parameterized by the bounds $lo$ and $hi$, which determine the size of the memory block to be allocated. From the hypothesis, we know that the initial state $s$ is well-formed. The allocation operation modifies only the memory component $s_\Theta$ of the state, producing an updated memory $\Theta'$ and a fresh location $l$. The updated memory $\Theta'$ is also well-formed. This is because allocation does not interfere with existing locations—it guarantees that the new location $l$ is fresh and does not overlap with any location already present in $\Theta$. According to the CompCert memory model, $l$ becomes valid only in $\Theta'$ and not in the original memory $\Theta$.
Furthermore, all locations that were valid in $\Theta$ remain valid in $\Theta'$ due to the monotonicity properties of memory allocation in CompCert. As a result, the invariant of well-formedness is preserved across the allocation, completing the proof.
\end{proof}

\begin{lemma}[ext\_well\_formed]\label{ext-well-formed}
\small
\begin{align*}
\forall\ \Gamma,\ \Sigma,\ s,\ x,\ l,\ \tau.\quad
&\kw{well\_formed\_state}(\Gamma, \Sigma, s)\ \wedge 
\big(\Gamma(x) = \kw{None} \lor (\exists\, \tau'.\, \Gamma(x) = \kw{Some}\ \tau' \wedge \tau' = \tau)\big)\ \Rightarrow \\
&\kw{well\_formed\_store}(\Gamma, \Sigma, \prec s_\Delta, s_\Omega[x \mapsto (l, \tau)], \Theta'\succ)
\end{align*}
\end{lemma}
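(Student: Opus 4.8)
The plan is to unfold the definition of \kw{well\_formed\_store} (which is the store-consistency invariant of Definition~\ref{well-formed-state}) on the updated state $\prec s_\Delta,\ s_\Omega[x \mapsto (l,\tau)],\ \Theta'\succ$ and discharge each of its clauses in turn, exploiting the fact that the update touches only the local environment $\Omega$ (at the single key $x$) and the memory $\Theta$, while $\Gamma$, $\Sigma$, $\Pi$, $\Psi$, and $\Delta$ are all left unchanged. The central observation is that the typing judgments $\vdash$ and $\Vdash$ depend only on $\Gamma, \Sigma, \Pi, \Psi$ and never on the runtime environment $\Omega$; consequently every clause phrased purely in terms of typing — the function-signature clause (which refers to $s_\Delta[l]$) and the struct-declaration clause (which refers to $\Gamma$ and $\Pi$) — is inherited verbatim from the well-formedness of $s$, since none of $\Gamma, \Pi, \Psi, \Delta$ has moved.

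Next I would case-split on $\Gamma(x)$ as the hypothesis dictates. In the case $\Gamma(x) = \kw{None}$, the variable $x$ contributes no obligation to the first (local/global variable) clause, because that clause quantifies only over $y$ with $\Gamma(y) = \kw{Some}\ \tau_y$; for every such $y \neq x$ the lookup $s_\Omega[x \mapsto (l,\tau)][y]$ reduces to $s_\Omega[y]$, so its binding and $\Sigma$-type are untouched and the clause transfers directly. Adding $x$ to $\mathsf{dom}(\Omega)$ is harmless precisely because the definition never demands $\mathsf{dom}(\Omega) \subseteq \mathsf{dom}(\Gamma)$. In the case $\Gamma(x) = \kw{Some}\ \tau'$ with $\tau' = \tau$, the reasoning for $y \neq x$ is identical, and the one genuinely new obligation is for $x$ itself: since $x \in \mathsf{dom}(s_\Omega[x \mapsto (l,\tau)])$, I must exhibit $v, o$ with $s_\Omega[x \mapsto (l,\tau)][x] = (l, \tau)$, $\Sigma\ l = \tau$, and $\Theta'[l, o] = v$. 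The first holds by construction; the latter two follow from the contract on the location $l$ supplied by the external-call context. Note that if $x$ was previously a global variable satisfying the $x \notin \mathsf{dom}(\Omega)$ branch, it is now simply shadowed locally, and its obligation is discharged by the local branch rather than the global one.

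For the two memory clauses relating $\Sigma$-typing to \kw{Freeable}-accessibility, $\Sigma$ is unchanged, so I need only re-establish $\kw{isValidAccess}(\Theta', \cdot, \kw{Freeable})$ for each $\Sigma$-typed location and, conversely, that $\Theta'$ validates no location outside $\mathsf{dom}(\Sigma)$. Both directions reduce to the monotonicity of the CompCert memory operation producing $\Theta'$: allocation (and any assignment into already-valid cells by the external call) preserves the validity of previously valid locations and only ever validates $l$, which is recorded in $\Sigma$ with type $\tau$. I would therefore appeal to the preservation facts already proved for $\kw{alloc}$ in Lemma~\ref{well-formed-allocation} and for assignment in Lemma~\ref{safe-assignment}, rather than re-deriving the memory-model reasoning. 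With these clauses and the typing-only clauses settled, \kw{well\_formed\_store} holds for the updated state.

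The main obstacle I anticipate is pinning down the precise status of $\Theta'$, which the statement leaves unquantified. The proof hinges on the implicit contract that $\Theta'$ is a \emph{conservative} extension of $s_\Theta$ consistent with $\Sigma$ at $l$: it must furnish a well-typed value at $l$ (for the variable clause in the $\kw{Some}$ case) and must not create any \kw{Freeable} cell untracked by $\Sigma$ (for the backward memory clause, which would otherwise fail precisely when $l \notin \mathsf{dom}(\Sigma)$). Making this contract explicit — namely $\Sigma\ l = \tau$ together with $\kw{isValidAccess}(\Theta', l, \kw{Freeable})$ and monotone preservation of prior validity — and discharging it through the CompCert monotonicity lemmas is where the real content lies; the bookkeeping over the $y \neq x$ variables and the typing-only clauses is routine.
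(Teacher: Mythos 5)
Your proof is correct and follows essentially the same route as the paper's: a case split on whether the variable being inspected is the newly bound $x$ or some other $x'$, with the $x' \neq x$ case discharged by the unchanged lookup and the $x' = x$ case discharged by the hypothesis $\tau' = \tau$ together with the assumed well-typedness of $l$. The paper's proof is considerably terser — it treats only the local-variable clause of the invariant and, in the $x' = x$ case, simply asserts that ``the memory location $l$ is well-typed for type $\tau$'' without saying where that fact comes from; it does not separately handle $\Gamma(x) = \kw{None}$, nor the $\Sigma$/\kw{Freeable} clauses, nor the typing-only clauses. Your version is more complete on all of these points, and your closing observation — that the lemma leaves $\Theta'$ unquantified and the whole argument rests on an implicit contract that $\Sigma\ l = \tau$, that $l$ is \kw{Freeable} in $\Theta'$, and that $\Theta'$ conservatively extends $s_\Theta$ — correctly identifies the exact assumption the paper's proof makes silently. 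Discharging that contract via Lemmas~\ref{well-formed-allocation} and~\ref{safe-assignment} at the call sites is the right way to make the lemma actually usable.
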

\begin{proof}
We proceed by considering an arbitrary variable \( x' \) such that \( \Gamma(x') = \tau_x \) for some type \( \tau_x \). Our goal is to show that in the extended variable map \( s_\Omega[x \mapsto (l, \tau)] \), the value assigned to \( x' \) is a pair \( (l', \tau') \) such that \( \tau' = \tau \) and the location \( l' \) is well-typed in memory. The proof proceeds by case analysis on whether \( x' \) is the newly added variable:
\textit{Case 1: \( x' = x \).} Since \( \Gamma(x') = \tau_x \), it follows that \( x \) is already present in the typing environment. By our assumption, we have \( \tau = \tau_x \) and the memory location \( l \) is well-typed for type \( \tau \). Therefore, in the extended store \( s_\Omega[x \mapsto (l, \tau)] \), the binding for \( x' \) satisfies the well-formedness condition.
\textit{Case 2: \( x' \ne x \).} In this case, the extended variable map does not modify the binding for \( x' \), i.e.,  
\[
s_\Omega[x \mapsto (l, \tau)](x') = s_\Omega(x').
\]
Since the original store \( s_\Omega \) was well-formed, and the binding for \( x' \) is unchanged, the well-formedness condition continues to hold. Thus, in both cases, the extended store preserves the well-typedness of the state.
\end{proof}

Lemma ~\ref{ref_allocation_succeeds} states that if the state $s$ is well-typed and a value $v$ has type $t$, then we can always allocate fresh memory for $t$, store $v$ in it, and the resulting state will still be well-typed.
\begin{lemma}[ref\_allocation\_succeeds]\label{ref_allocation_succeeds}
\small
\begin{align*}
\forall \Gamma, \Sigma, s, v, t. 
&\kw{well\_formed\_state}(\Gamma, \Sigma, s) \wedge \kw{typeof}(v) = \tau \implies \\
&\exists \Theta', l, v, \kw{alloc}(s_\Theta, \kw{size}(\tau)) = (\Theta', l) \wedge \Theta'[l \rightarrow v] = \Theta'' \wedge 
\kw{well\_formed\_store}(\Gamma, \Sigma, \prec s_\Delta, s_\Omega, \Theta''\succ)
\end{align*}
\end{lemma}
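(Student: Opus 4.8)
The plan is to split the single \kw{ref} allocation into its two constituent memory operations—allocation of a fresh block followed by an initializing store—and to discharge each using the two previously established memory lemmas. First I would instantiate Lemma~\ref{well-formed-allocation} with bounds $lo = 0$ and $hi = \kw{size}(\tau)$. Since $\kw{well\_formed\_state}(\Gamma, \Sigma, s)$ holds by hypothesis, this yields a fresh location $l$ and an updated memory $\Theta'$ with $\kw{alloc}(s_\Theta, \kw{size}(\tau)) = (\Theta', l)$, together with the fact that the intermediate state $\prec s_\Delta, s_\Omega, \Theta'\succ$ is well-formed. From the CompCert memory model I would additionally extract the standard guarantees on freshly allocated blocks: $l$ is distinct from every location already live in $s_\Theta$, and $l$ carries \kw{Freeable} permission in $\Theta'$, so $\kw{isValidAccess}(\Theta', l, \kw{Freeable})$ holds.

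Next, to record the type of the new cell I would extend the store typing to $\Sigma' = \Sigma[l \rightarrow \tref{\tau}]$, mirroring the \textsc{REFV} rule, where $\tref{\tau}$ is the pointer type assigned to references produced by \kw{ref}. Because $\kw{typeof}(v) = \tau$ by hypothesis and $l$ is \kw{Freeable} in $\Theta'$, the preconditions of Lemma~\ref{safe-assignment} are met, so the store $\Theta'[l \rightarrow v] = \Theta''$ succeeds and the resulting state $\prec s_\Delta, s_\Omega, \Theta''\succ$ is well-formed with respect to $\Sigma'$. Choosing this $l$, $\Theta'$, and $\Theta''$ discharges the existentials in the goal.

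The main obstacle is reconciling the freshly introduced location with the store typing environment, which is held fixed in the statement. A truly fresh $l$ cannot already appear in $\Sigma$, so the well-formedness clauses—linking $\Gamma$, $\Sigma$, and the memory—must be re-established for the extended typing $\Sigma'$ rather than the original $\Sigma$. Here I would rely on the disjointness of $l$ from the existing domains: every well-formedness clause for a previously bound variable or location is preserved verbatim, since neither its $\Sigma$-entry nor its memory contents are touched by the fresh allocation and the store to $l$ (by monotonicity of CompCert allocation), while the single new clause for $l$ follows directly from $\Sigma'(l) = \tref{\tau}$, $\kw{typeof}(v) = \tau$, and $\kw{isValidAccess}(\Theta'', l, \kw{Freeable})$. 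This shows that extending the store typing with the fresh location leaves all prior invariants intact, completing the argument.
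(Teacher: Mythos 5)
Your proposal is correct and follows the same basic decomposition as the paper's proof: invoke Lemma~\ref{well-formed-allocation} to obtain the fresh block $l$ and the intermediate well-formed memory $\Theta'$, observe that the fresh block carries $\kw{Freeable}$ permission and a valid offset range $[0,\kw{size}(\tau)]$ so the initializing store succeeds, and then argue preservation of well-formedness by the disjointness of $l$ from everything already bound in $\Gamma$, $s_\Omega$, and $\Sigma$. The one place you diverge is the store step and its interaction with the store typing. The paper discharges the store directly from the permission and size facts and keeps $\Sigma$ completely unchanged, claiming the original well-formedness derivation for $s$ carries over verbatim because all clauses of the invariant quantify only over entities already present in $\Gamma$ and $\Sigma$. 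You instead route the store through Lemma~\ref{safe-assignment}, which forces you to first extend the typing to $\Sigma' = \Sigma[l \rightarrow \tref{\tau}]$ (since that lemma's precondition $\Sigma\ l = \tau*$ cannot hold for a genuinely fresh $l$ under the original $\Sigma$), and you then re-establish the invariant with respect to $\Sigma'$. This is arguably the more honest treatment: the well-formedness definition includes a clause asserting that every $\kw{Freeable}$ location in memory is typed in $\Sigma$, and a freshly allocated $\kw{Freeable}$ block that is absent from $\Sigma$ would strictly speaking violate it, a tension the paper's proof glosses over and that also surfaces in the lemma statement itself (which fixes $\Sigma$ in the conclusion). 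The cost of your route is that what you actually prove is well-formedness with respect to $\Sigma'$ rather than the stated $\Sigma$, so to match the lemma as written you would either need to weaken the statement to existentially quantify an extension of $\Sigma$, or adopt the paper's reading that the invariant only constrains locations already in $\mathrm{dom}(\Sigma)$. Either way the mathematical content goes through; you have simply made explicit a bookkeeping step that the paper leaves implicit.
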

\begin{proof}
From the hypothesis, we know that the store $s$ is well-formed and the value $v$ has type $\tau$. Let $sz$ denote the size computed from $\tau$ using the function $\kw{size}$, representing the number of bytes required to allocate a value of type $\tau$ in memory $s_\Theta$. The allocation range is defined with a lower bound of $0$ and an upper bound of $sz$. In CompCert, the function $\kw{alloc}$ is total: it always returns a newly allocated memory block along with an updated memory state. The underlying memory model is assumed to be unbounded at a high level, and the function succeeds as long as the constraint $lo \leq hi$ holds—in our case, $0 \leq sz$. This is also ensured by the Lemma~\ref{well-formed-allocation}. To guarantee that storing the value $v$ in memory succeeds, we must ensure that the address being written to is valid, the memory block has sufficient space to hold the value, and the location has appropriate write permissions. The allocation performed by $\kw{alloc}$ ensures these conditions by returning a fresh location $l$ with a valid offset range $[0, sz]$ and assigning it the $\kw{Freeable}$ permission. This permission level allows the memory to be read, written, and later deallocated safely.

The well-typedness of the value, captured by the hypothesis $\kw{typeof}(v) = \tau$, guarantees that the value is semantically consistent with the allocated memory layout and can be safely stored at the target location. Lastly, we prove that the update state$\prec s_\Delta, s_\Omega, \Theta''\succ$ is also well formed. The memory $\Theta'$ differs from $\Theta$ only at location $l$, which is freshly allocated and therefore not present in $\Gamma$ or $s_\Omega$. Since all three constructors of $\kw{well\_formed\_store}$ quantify only over variables and locations already present in $\Gamma$ and $\Sigma$, and all such blocks are unchanged in $\Theta'$ and $\Theta''$, the original well-formedness proof for $s$ still applies to $\prec s_\Delta, s_\Omega, \Theta'' \succ$.
\end{proof}

Lemma~\ref{var-allocation-succeeds} states that if we begin with a well-formed state $s$ and allocate a list of variables denoted by $\overline{x}$, then there exists an updated state in which all variables have been successfully allocated, and this new state remains well-formed.
\begin{lemma}[Var\_allocation\_succeeds]\label{var-allocation-succeeds}
\small
\begin{align*}
\forall \Gamma, \Sigma, s, x.\ 
&\kw{well\_formed\_state}(\Gamma, \Sigma, s) \wedge \kw{unique}\ {\overline{x}} \implies \\
&\exists \Omega', \Theta', \kw{alloc}(s_\Theta, \kw{size}(\overline{\tau_i})) = (\Theta', \overline{l}) \wedge 
s_\Omega[\overline{x} \rightarrow \overline{l_i,\tau_i}] = \Omega' \wedge
\kw{well\_formed\_store}(\Gamma, \Sigma, \prec s_\Delta, \Omega', \Theta'\succ)
\end{align*}
\end{lemma}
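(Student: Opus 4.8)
The plan is to proceed by induction on the list of variables $\overline{x}$ together with their associated types $\overline{\tau_i}$, threading the well-formedness invariant through each allocation step. In the base case, where $\overline{x}$ is empty, no allocation is required: I take $\Omega' = s_\Omega$ and $\Theta' = s_\Theta$, so $\kw{alloc}$ is called with an empty size list and returns the memory unchanged together with an empty list of locations, and $\kw{well\_formed\_store}(\Gamma, \Sigma, \prec s_\Delta, s_\Omega, s_\Theta\succ)$ is discharged directly from the hypothesis $\kw{well\_formed\_state}(\Gamma, \Sigma, s)$.

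For the inductive case, where the list has the form $x :: \overline{xs}$, I would first allocate a fresh memory block for $x$ by appealing to Lemma~\ref{well-formed-allocation} (equivalently, to the single-variable allocation argument underlying Lemma~\ref{ref_allocation_succeeds}). This yields a fresh location $l$, an updated memory $\Theta_1$, and a proof that the intermediate state $\prec s_\Delta, s_\Omega, \Theta_1\succ$ remains well-formed. I would then extend the variable environment with the binding $x \mapsto (l, \tau)$ and invoke Lemma~\ref{ext-well-formed} to conclude that $\prec s_\Delta, s_\Omega[x \mapsto (l,\tau)], \Theta_1\succ$ is again well-formed; the side condition of that lemma (that either $\Gamma(x) = \kw{None}$, or $\Gamma(x) = \kw{Some}\ \tau'$ with $\tau' = \tau$) is satisfied by construction from the declared variable types. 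Finally, I would apply the induction hypothesis to the tail $\overline{xs}$, whose uniqueness follows from $\kw{unique}\ \overline{x}$, starting from this new well-formed state; composing the single-block allocation with the recursively obtained allocations gives the required $\Omega'$ and $\Theta'$, and the aggregated allocation matches $\kw{alloc}(s_\Theta, \kw{size}(\overline{\tau_i}))$.

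The main obstacle I anticipate is the bookkeeping around freshness and monotonicity of allocation. Each successive call to $\kw{alloc}$ must return a location disjoint from all those already bound in $\Omega$, and the hypothesis $\kw{unique}\ \overline{x}$ must be used to guarantee that extending $\Omega$ never clobbers an existing binding. Here CompCert's memory model supplies the crucial facts: allocation is total whenever $lo \le hi$, it returns a genuinely fresh block, and—by the monotonicity properties already exploited in Lemma~\ref{well-formed-allocation}—it preserves the validity and typing of every previously allocated location. The delicate point is arguing that the per-step application of Lemma~\ref{ext-well-formed} composes correctly across the whole list, i.e. that the well-formedness obligations established for the already-allocated prefix are not re-opened by a later allocation. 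This reduces to the disjointness of the fresh location $l$ from $\mathsf{dom}(s_\Omega)$ together with the fact that $\Theta_1$ agrees with $s_\Theta$ on all previously valid locations, so the invariants for earlier bindings survive each new allocation. Once that compositionality is in place, the remaining steps are routine.
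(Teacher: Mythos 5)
Your proposal is correct and follows essentially the same route as the paper's proof: induction on the variable list, Lemma~\ref{well-formed-allocation} for the single-block allocation of the head, Lemma~\ref{ext-well-formed} for extending the variable environment, and the induction hypothesis (with uniqueness of the tail) for the remaining variables. The additional discussion of freshness, monotonicity, and compositionality elaborates points the paper leaves implicit but does not change the argument.
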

\begin{proof}
    The proof proceeds by induction on the list of variables \( \overline{x} \). \textit{Base case.} When the list is empty, the result is immediate. Since there are no variables to allocate, the original state is returned unchanged, and well-formedness follows directly from the hypothesis. \textit{Inductive case.} Suppose the list is of the form \( (x_1, \tau_1) :: \overline{(x_i, \tau_i)} \). By the induction hypothesis, allocating the remaining variables \( \overline{(x_i, \tau_i)} \) yields a new state that is well-formed, assuming the list contains no duplicates. For the head element \( (x_1, \tau_1) \), we apply Lemma~\ref{well-formed-allocation}, which guarantees that allocation always succeeds and produces a well-formed memory state. We use this result to instantiate the induction hypothesis on the tail \( \overline{(x_i, \tau_i)} \), using the updated memory obtained by allocating space for \( x_1 \) starting at offset \( 0 \) with size of $\tau_1$, resulting in the final memory state \( \Theta' \). Finally, Lemma~\ref{ext-well-formed} ensures that extending the variable environment with the new bindings \( s_\Omega[\overline{x} \rightarrow \overline{l_i, \tau_i}] \) preserves well-formedness. Thus, the extended state remains well-formed after allocating all variables in the list.
\end{proof}

Lemma~\ref{bind-variables-succeeds} states that if we begin with a well-formed state $s$ and bind the variables $\overline{x}$ with values $\overline{v}$ where the length of $\overline{x}$ is same as length of $\overline{v}$, then there exists an updated state in which all variables have been successfully assigned the values, and this new state remains well-formed.
\begin{lemma}[bind\_variables\_succeeds]\label{bind-variables-succeeds}
\small
\begin{align*}
\forall \Gamma, \Sigma, s.\ 
&\kw{well\_formed\_state}(\Gamma, \Sigma, s) \wedge |\overline{x}| = |\overline{v}| \implies \\
&\exists \Theta', s_\Theta[\overline{x} \rightarrow \overline{v}] = \Theta' \wedge
\kw{well\_formed\_store}(\Gamma, \Sigma, \prec s_\Delta, s_\Omega, \Theta'\succ)
\end{align*}
\end{lemma}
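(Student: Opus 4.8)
The plan is to proceed by induction on the list $\overline{x}$, kept in lockstep with $\overline{v}$ via the length hypothesis $|\overline{x}| = |\overline{v}|$, exactly as in the proof of Lemma~\ref{var-allocation-succeeds}. The two lemmas play complementary roles: Lemma~\ref{var-allocation-succeeds} reserves fresh cells and records them in $s_\Omega$, whereas here the locations for $\overline{x}$ already exist in $s_\Omega$ and we merely overwrite their stored values with $\overline{v}$, one at a time. This is consistent with the conclusion, where $s_\Omega$ is left unchanged and only $\Theta$ evolves. Reading $s_\Theta[\overline{x} \rightarrow \overline{v}]$ as the iterated store that writes each $v_i$ to the location $l_i$ to which $x_i$ resolves in $s_\Omega$, the goal is to show both that this iterated store is defined and that the resulting memory keeps the state well-formed.

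For the base case (empty lists) the memory is unchanged, so $\Theta' = s_\Theta$ and $\kw{well\_formed\_store}$ holds directly from the hypothesis. For the inductive case, write the lists as $(x_1, v_1) :: \overline{(x_i, v_i)}$. First I would use $\kw{well\_formed\_state}(\Gamma, \Sigma, s)$ to resolve $x_1$ to a location $l_1$ in $s_\Omega$, obtaining $\Sigma(l_1) = \tau_1$ together with $\kw{isValidAccess}(s_\Theta, l_1, \kw{Freeable})$. I would then invoke Lemma~\ref{safe-assignment} to perform the single write $s_\Theta[l_1 \rightarrow v_1] = \Theta_1$, which already yields a well-formed intermediate store. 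Because this update touches only the value cell at $l_1$---leaving the domain of the memory, all permissions, and every other location untouched---the remaining bindings $\overline{(x_i, v_i)}$ still satisfy the hypotheses over the new state, and the length equation is preserved for the tail. Applying the induction hypothesis then produces the final $\Theta'$ and discharges well-formedness.

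The hard part will be the type-compatibility premise $\kw{typeof}(v_1) = \tau_1$ demanded by Lemma~\ref{safe-assignment}: the statement as phrased assumes only $|\overline{x}| = |\overline{v}|$ and says nothing about the types of the $v_i$. In the intended use-site (the $\kw{APP3}$ rule, where freshly allocated parameter and local cells are initialized with the evaluated argument values), this compatibility is exactly what preservation guarantees for the argument list. I would therefore thread the well-typedness of $\overline{v}$ against the declared types of $\overline{x}$ through the induction as an explicit side-condition, so that at each step the value written matches the type recorded in $\Sigma$ at the corresponding location; without it the single-write step cannot appeal to Lemma~\ref{safe-assignment}. The remaining reasoning---that overwriting one cell leaves every other variable's binding intact, so the hypotheses transfer cleanly to the tail---is routine and mirrors the per-element argument in Lemma~\ref{var-allocation-succeeds}.
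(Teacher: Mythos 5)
Your proof takes essentially the same route as the paper's: induction on the paired lists, using well-formedness to resolve the head variable to a typed, writable location, performing the single store, and applying the induction hypothesis to the tail over the intermediate well-formed state. Your observation that the statement lacks a typing hypothesis on $\overline{v}$ needed to invoke Lemma~\ref{safe-assignment} is a fair one — the paper's own proof silently appeals to "the assumptions on the typing... of $l$" without that hypothesis being available from the stated premises — so threading well-typedness of the values as a side condition is a reasonable strengthening rather than a deviation.
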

\begin{proof}
The proof proceeds by induction on the lists \( \overline{x} \) and \( \overline{v} \).
\textit{Base case.} If both lists are empty, then there are no variable-value bindings to perform. The state \( s \) remains unchanged, and the result follows immediately from the assumption that \( s \) is well-formed.
\textit{Inductive case.} Suppose the lists are of the form \( \overline{x} = x_1 :: x_n \) and \( \overline{v} = v_1 :: v_n \). By the well-formedness of the state \( s = \prec s_\Delta, s_\Omega, \Theta \succ \), we know that there exists a location \( l \) and a type \( \tau_1 \) such that \( s_\Omega[x_1] = (l, \tau_1) \), \( \Sigma \ l = \tau_1 \), and the memory \( \Theta \) allows a valid write access at location \( l \). Therefore, assigning \( v_1 \) to location \( l \) yields an updated memory \( \Theta_1 \), which remains well-formed due to the assumptions on the typing and access validity of \( l \). We now apply the induction hypothesis to the remaining lists \( x_n \) and \( v_n \), using the intermediate well-formed state \( \prec s_\Delta, s_\Omega, \Theta_1 \succ \). This yields a final store \( \Theta' \) obtained by assigning the remaining values \( v_n \) to the corresponding variables \( x_n \), such that the resulting state \( \prec s_\Delta, s_\Omega, \Theta' \succ \) is well-formed. Therefore, after successively updating the memory for all variable-value pairs, we obtain a final memory \( \Theta' \) such that the entire state remains well-formed.
\end{proof}

Lemma~\ref{subst-preserve-typing} states that substitution of a value $v$ for a variable $x$ in the expression $e$ preserves the type and subsumption rule for the effects. It is important to observe that substitution semantics accounts for the effect associated with the expression $e$. However, when e evaluates to a value prior to substitution, its effect is no longer present in the substituted expression. As a result, the conclusion must allow for a potential drop in effects, and we apply a subsumption rule on effects rather than requiring exact equality.
\begin{lemma}[subst\_preserve\_typing]\label{subst-preserve-typing}
\small
\begin{align*}
\forall \Gamma, \Sigma, x, e, e', \tau, \eta, \tau', \eta'.\ 
&{\typerule {(x \rightarrow \tau, \Gamma)} \Sigma \Pi \Psi {e'} {\tau'} {\eta'}} 
\wedge \typerule \Gamma \Sigma \Pi \Psi {e} {\tau} {\eta} \implies \\
& \exists \eta'', \typerule \Gamma \Sigma \Pi \Psi {e'[x \leftarrow e]} {\tau'} {\eta''} \wedge \eta'' \subseteq (\eta \dotplus \ \eta')
\end{align*}
\end{lemma}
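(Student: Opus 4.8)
The plan is to prove the lemma by structural induction on the typing derivation of $e'$ (equivalently, on the syntax of $e'$), keeping $\Gamma$ universally quantified so that the induction hypothesis can be re-applied in the enlarged contexts created by binders. Throughout I treat effects as finite sets, with $\dotplus$ read as union, the cons forms (e.g.\ $\kw{read} :: \eta$) as single-element insertion, and $\subseteq$ as set inclusion. The three algebraic facts I will lean on are monotonicity of $\dotplus$ with respect to $\subseteq$, its commutativity and associativity, and absorption $\eta \dotplus \eta = \eta$. Idempotence is what lets me collapse the several copies of $\eta$ that arise—one per occurrence of $x$ in $e'$—back down to a single $\eta$ on the right-hand side, while monotonicity propagates the per-subterm bounds through each rule's effect combination.

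\emph{Base cases.} The only leaf with real content is the variable rule $\kw{TVAR}$. If $e' = x$, then $e'[x \leftarrow e] = e$; inverting $\kw{TVAR}$ under $(x \rightarrow \tau, \Gamma)$ forces $\tau' = \tau$ and $\eta' = \phi$, so the second hypothesis $\typerule{\Gamma}{\Sigma}{\Pi}{\Psi}{e}{\tau}{\eta}$ discharges the goal with $\eta'' = \eta \subseteq \eta \dotplus \phi$. If $e' = y$ with $y \neq x$, the substitution is the identity and $(x \rightarrow \tau, \Gamma)(y) = \Gamma(y) = \tau'$, so $y$ still types under $\Gamma$ with effect $\phi \subseteq \eta \dotplus \phi$. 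This split is exactly where subsumption becomes essential: when $x$ does not occur we retain only the tiny effect $\phi$, whereas the stated bound permits the larger $\eta \dotplus \eta'$. Constants, locations, $\kw{None}$, $\kw{unit}$ and the remaining value/leaf forms are untouched by substitution and retyped with effect $\phi$.

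\emph{Inductive cases.} For every compound constructor I push substitution through to the immediate subexpressions, apply the induction hypothesis to each, and reassemble the derivation with the same typing rule. For the non-binding formers—$\kw{ref}$, $!$, $:=$, $\kw{uop}$, $\kw{bop}$, conditionals, $\kw{Some}$, field access, application, and sequences—the conclusion's effect is a $\dotplus$-combination of the subexpressions' effects; the induction hypothesis gives that each substituted piece has effect $\subseteq \eta \dotplus \eta_i'$, and monotonicity together with idempotence yield that their union is $\subseteq \eta \dotplus (\eta_1' \dotplus \cdots) = \eta \dotplus \eta'$. Here I also use that types are purely syntactic and that substitution does not alter them, so side conditions such as $\kw{isPrim}\,\tau$, the matching operand types, and the premises ruling out option-typed pointers in $\kw{TDEREF}$ and $\kw{TMASSGN}$ continue to hold after substitution.

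\emph{Main obstacle.} The genuine difficulty lies in the binder-introducing constructors: $\klet$ (rule $\kw{TBIND}$), the $\kw{Psome}\,x$ and byte patterns of $\kw{match}$ (rules $\kw{TMATCHO}$, $\kw{TMATCHB}$), and the for-loop. In $\elet{y}{\tau_y}{e_1'}{e_2'}$ the body $e_2'$ is typed under $(y \rightarrow \tau_y, x \rightarrow \tau, \Gamma)$, whereas the hypothesis types $e$ only under $\Gamma$; to apply the induction hypothesis to $e_2'$ I must first transport $e$'s typing to the enlarged context. I therefore plan to invoke (and, if it is not already available, to state) a standard weakening lemma, $\typerule{\Gamma}{\Sigma}{\Pi}{\Psi}{e}{\tau}{\eta} \implies \typerule{(y \rightarrow \tau_y, \Gamma)}{\Sigma}{\Pi}{\Psi}{e}{\tau}{\eta}$, together with an exchange property permitting me to reorder the two bindings, and to assume the usual Barendregt convention ($y \neq x$ and $y \notin \mathsf{fvar}(e)$) so that capture-avoiding substitution commutes with entering the binder. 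The pattern cases are analogous, with $\kw{TMATCHB}$ adding several bound variables $\overline{y}$ at once, and the shared-result-type requirement across branches surviving because substitution leaves types unchanged. For the for-loop I additionally check that the disjointness side conditions $\mathsf{fvar}(e) \cap \mathsf{fvar}(e_i) = \phi$ of $\kw{TFOR}$ remain valid after substituting a sufficiently closed $e$; once these are re-established, the effect bookkeeping is identical to the other combinators. Getting the weakening/exchange plumbing and capture-avoidance exactly right across all binders is the part demanding the most care, while the remainder is routine rule-by-rule reassembly.
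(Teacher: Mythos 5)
Your proposal is correct and follows the same overall skeleton as the paper's proof: structural induction on $e'$, a two-way split at the variable case ($e' = x$ versus $e' = y \neq x$), and rule-by-rule reassembly with effect subsumption for the compound forms. Where you diverge is in how the binder cases are discharged and in how explicitly the effect arithmetic is justified, and in both respects your version is the more careful one. The paper handles $\klet$, $\kw{match}$, and struct initialization by an explicit shadowing case analysis ($y = x$ versus $y \neq x$) and then silently applies the induction hypothesis to the body, which is typed under the enlarged context $(y \rightarrow \tau_y, x \rightarrow \tau, \Gamma)$ — this implicitly requires exactly the weakening and exchange lemmas you propose to state, since the hypothesis only types $e$ under $\Gamma$; you surface that dependency rather than hiding it, and your appeal to the Barendregt convention replaces the paper's shadowing subcase. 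Likewise, your explicit use of idempotence ($\eta \dotplus \eta$ collapsing under $\subseteq$) is genuinely needed whenever $x$ occurs in more than one subterm — e.g.\ both operands of $\kw{bop}$ — and the paper never says how the several copies of $\eta$ produced by the induction hypotheses are reabsorbed into the single $\eta \dotplus \eta'$ bound. Finally, you are right to flag that the $\kw{TFOR}$ disjointness side conditions $\mathsf{fvar}(e) \cap \mathsf{fvar}(e_i) = \phi$ could be violated by substituting an open $e$; the paper's for-loop case does not address this, and your caveat (restricting to sufficiently closed $e$, which suffices for the lemma's actual use in the \textsc{LETV} and \textsc{MSOME} reductions, where the substituted term is a value) is the honest way to close that gap. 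In short: same route, but your version makes explicit three obligations the paper leaves tacit.
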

\begin{proof}
The proof follows by doing induction on the structure of e. We discuss interesting cases in detail. 

\textit{Var case.} (i) \( x = y \). In this case, the variable \( y \) is substituted with the value \( v \). From the assumptions, we have \( \Gamma \vdash v : \tau \), and the typing environment includes \( x : \tau \). The expression \( y \) typechecks in the extended environment \( (x \rightarrow \tau, \Gamma) \), which implies that \( y \) has type \( \tau' = \tau \), since \( x = y \). After substitution, the resulting expression is \( v \), which has type \( \tau' \) as required. For the effects, since \( v \) is a value, we instantiate \( \eta'' \) with the empty effect \( \phi \), and since the empty effect is a subeffect of all other effects, we conclude that \( \eta'' \subseteq (\eta \dotplus \eta') \). Thus, the substitution preserves both typing and effects.
(ii) \( x \neq y \). In this case, no substitution occurs, and the resulting expression remains \( y \). From the assumption, we know that \( \Gamma \vdash y : \tau' \). Therefore, the type of the substituted expression is still \( \tau' \). The effect \( \eta'' \) is instantiated as \( \eta' \), and by the definition of effect subsetting, \( \eta' \subseteq (\eta \dotplus \eta') \). Hence, typing and effect preservation hold in this case as well. 

\textit{Function application} In the case of function application \( fn(es) \), we first apply the induction hypothesis to show that substitution preserves the typing of each argument in the list \( es \). Using these results, along with the typing rule for function application, we conclude that the entire expression \( fn(es) \) remains well-typed after substitution. 

\textit{Ref, Deref and Memory assignment case.} The proof proceeds by applying the induction hypothesis and appropriately instantiating the effect \( \eta'' \): with \( \kw{alloc :: \eta'} \) in the case of \texttt{ref}, \( \kw{read :: \eta'} \) in the case of \texttt{deref}, and \( \eta' :: \kw{alloc} \) in memory assignment, ensuring that the resulting effect annotations are correctly preserved. \textit{Uop, Bop, and Cond case.} The proof is trivial and utilizes induction hypothesis. 

\textit{Let-binding case.} For the let-binding expression \( \elet{y}{\tau}{e_1}{e_2} \), corresponding to the typing rule \textsc{TLET}, the proof proceeds in a manner similar to the variable case. We perform a case analysis on whether \( y = x \) or \( y \neq x \). In both cases, we apply the induction hypothesis to the subexpressions \( e_1 \) and \( e_2 \) as needed. The effect \( \eta'' \) is instantiated with the appropriate effect composition for the \textsc{TLET} rule. This ensures that the substituted expression remains well-typed.

\textit{Field access.} In the case of field access \( \structf{e}{f} \), substitution only applies to the subexpression \( e \). By applying the induction hypothesis to \( e \), and reapplying the typing rule \textsc{TFIELD}, we conclude that the substituted expression \( \structf{e[x \leftarrow se]}{f} \) remains well-typed, with the effect preserved as a subeffect of the original.

\textit{Struct initialization.} In the case of struct initialization \( \structi{y}{f}{e} \), we proceed by case analysis on whether \( y = x \) or \( y \neq x \).\textit{Case \( y = x \):} The variable \( y \) bound in the struct initialization shadows the variable \( x \) targeted by the substitution. Therefore, no substitution takes place within the list of expressions \( \overline{e} \). Since the original typing derivation holds and substitution does not alter the expression, the type is preserved. The effect is instantiated with the effect of the field initializers \( \overline{e} \), and the typing follows by applying the rule \textsc{TSINIT}.
\textit{Case \( y \neq x \):} In this case, substitution proceeds recursively into each initializer \( e_i \in \overline{e} \). By applying the induction hypothesis to each \( e_i \), we obtain the corresponding typing and effect preservation. Reapplying the rule \textsc{TSINIT} to the substituted list establishes the preservation of the overall expression.

\textit{For-loop} Substitution is applied to all three subexpressions \( e_1, e_2, e \) of a for-loop $\fore {e_1} {e_2} d {e}$, and the direction $d$ variable is assumed not to interfere with the substitution variable \( x \). By applying the induction hypothesis to \( e_1 \), \( e_2 \), and \( e \), we obtain that each substituted subexpression is well-typed with preserved or reduced effects. Reapplying the typing rule \textsc{TFOR}, we conclude that the substituted loop remains well-typed, and the total effect is preserved under subsumption rule for effects. \textit{None and Some case} For the None case, the substitution has no effect; hence type and effects are preserved. For the Some case, the induction hypothesis and the rule $\kw{TSOME}$ is used to discharge the goal.

\texttt{Match case} We consider the expression \( \kw{match}~e~\kw{with}~p_1 \Rightarrow e_1~|~\dots~|~p_n \Rightarrow e_n \), with typing judgment $
\Gamma, x : \tau \vdash \kw{match}~e~\kw{with}~p_1 \Rightarrow e_1~|~\dots~|~p_n \Rightarrow e_n : \tau_r, \eta$.
Let the effect be \( \eta = \eta_e \dotplus \eta_1 \dotplus \dots \dotplus \eta_n \), where: \( \Gamma, x : \tau \vdash e : \mathsf{option}~\tau', \eta_e \). For each branch \( i \), \( \Gamma_i = \Gamma, x : \tau \cup \textsf{binds}(p_i) \vdash e_i : \tau_r, \eta_i \) We proceed by case analysis on whether the pattern \( p_i \) binds \( x \).(i) Apply the induction hypothesis to the matched expression \( e \).  
We obtain: $\Gamma \vdash e[x \leftarrow se] : \mathsf{option}~\tau', \eta_e'
\quad \text{with} \quad \eta_e' \subseteq \eta_e \dotplus \eta_{se}$ (ii) For each branch \( i \), we consider the pattern \( p_i \) and the branch body \( e_i \). If \( x \notin \textsf{binds}(p_i) \), we apply the induction hypothesis:
  $\Gamma \cup \textsf{binds}(p_i) \vdash e_i[x \leftarrow se] : \tau_r, \eta_i'
  \ \text{with} \ \eta_i' \subseteq \eta_i \dotplus \eta_{se}$ If \( x \in \textsf{binds}(p_i) \), then \( x \) is shadowed, and substitution does not apply in \( e_i \), so: $\Gamma \cup \textsf{binds}(p_i) \vdash e_i : \tau_r, \eta_i
  \quad \Rightarrow \quad \eta_i' = \eta_i$
(iii) By reapplying the \textsc{TMATCH} rule, we derive:$
\Gamma \vdash \kw{match}~e[x \leftarrow se]~\kw{with}~p_1 \Rightarrow e_1'~|~\dots~|~p_n \Rightarrow e_n' : \tau_r, \eta'$
where $\eta' = \eta_e' \dotplus \eta_1' \dotplus \dots \dotplus \eta_n'$, and by construction:$
\eta' \subseteq (\eta_e \dotplus \eta_1 \dotplus \dots \dotplus \eta_n) \dotplus \eta_{se} = \eta \dotplus \eta_{se}$
Thus, substitution preserves both typing and effects.
\end{proof}

\begin{lemma}\label{progress}[Progress]
\small
\begin{align*}
\forall \Gamma, \Sigma, \Pi, \Psi, es, \eta_s, \tau_s, s.\quad
&\typerules{\Gamma}{\Sigma}{\Pi}{\Psi}{es}{\tau_s}{\eta_s} 
\wedge \mathsf{well\_formed\_state}(\Gamma, \Sigma, s) \implies \\
&\mathsf{isVal}\ es 
\vee \exists s', es'.\ \evals{s}{es}{s'}{es'} 
\wedge \mathsf{well\_formed\_state}(\Gamma, \Sigma, s') \\
\wedge\ 
\forall \Gamma, \Sigma, \Pi, \Psi, e, \eta, \tau, s.\quad
&\typerule{\Gamma}{\Sigma}{\Pi}{\Psi}{e}{\tau}{\eta} 
\wedge \mathsf{well\_formed\_state}(\Gamma, \Sigma, s) \implies \\
&\mathsf{isVal}\ e 
\vee \exists s', e'.\ \eval{s}{e}{s'}{e'} 
\wedge \mathsf{well\_formed\_state}(\Gamma, \Sigma, s')
\end{align*}
\end{lemma}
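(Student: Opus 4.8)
The plan is to prove both statements simultaneously by mutual structural induction on the typing derivations $\typerule{\Gamma}{\Sigma}{\Pi}{\Psi}{e}{\tau}{\eta}$ and $\typerules{\Gamma}{\Sigma}{\Pi}{\Psi}{es}{\tau_s}{\eta_s}$, since expressions and sequences are mutually defined (e.g. application and match carry sequences of subexpressions). For each typing rule I case-split on whether the immediate evaluation-position subexpressions are already values. The introduction forms land in the left disjunct: constants via \kw{TCONSI}/\kw{TCONSL}/\kw{TCONSB}, \kw{None}, $\kw{Some}\ v$, $\kw{unit}$, locations, and the empty sequence are all values. Variables handled by \kw{TVAR} take the single step \kw{LVAR} or \kw{GVAR}, whose resulting state is unchanged and hence trivially well-formed, using the first clause of $\mathsf{well\_formed\_state}$ to guarantee that the binding and its stored value actually exist.

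For every elimination or computation form the argument is uniform. If some evaluation-position subexpression is not a value, the induction hypothesis supplies a step for it together with well-formedness of the intermediate state, and I lift this step through the corresponding congruence rule (\kw{REF}, \kw{DREF}, \kw{MASSGN1}/\kw{MASSGN2}, \kw{BOP1}/\kw{BOP2}, \kw{UOP}, \kw{COND}, \kw{LET}, \kw{APP1}/\kw{APP2}, \kw{STRUCT}, \kw{FACCESS}, \kw{FOR1}/\kw{FOR2}, \kw{MATCH1}, \kw{SEQH}/\kw{SEQT}); the well-formedness of the new state is exactly what the hypothesis returns. When the subexpressions are all values I apply the matching value-reduction rule, and here a canonical-forms argument (which I would either inline or factor out as a small lemma) supplies the needed shape: a value of a \kw{ref} pointer type is a location, a value of \kw{bool} type is a boolean, and so on. This justifies firing \kw{DREFV}, \kw{CONDT}/\kw{CONDF}, \kw{FACCESSV}, \kw{MSOME}/\kw{MNONE}, \kw{MBYTES}/\kw{MBYTESF}, and \kw{LETV}.

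The remaining obligation in the value cases is that the post-step state stays well-formed, and for the memory-modifying rules this is discharged by the auxiliary lemmas already established: \kw{REFV} by Lemma~\ref{ref_allocation_succeeds}, dereference \kw{DREFV} by Lemma~\ref{safe-dereference} (which also guarantees a value actually resides at the location, so the rule can fire at all), assignment \kw{MASSGNV} by Lemma~\ref{safe-assignment}, and struct initialization \kw{STRUCTV} by Lemma~\ref{well-formed-allocation}; the substitution performed in \kw{LETV} leaves memory untouched. For the primitive operators, \kw{BOPV} and \kw{UOPV} never get stuck because Lemmas~\ref{progress-binaryoperators} and~\ref{progrss-unaryoperators} show $\mathsf{bop_{sem}}$ and $\mathsf{uop_{sem}}$ are defined on well-typed operands—with the $\mathsf{unsafe}$ guard returning the default value $0$ rather than blocking—and neither modifies the state. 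The \kw{for}-loop case relies on the \kw{TFOR} premises forcing the evaluated bounds to be \kw{int}/\kw{long} constants, so that $\mathsf{range}$ is defined and \kw{FORV} can fire.

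I expect the function-application case to be the main obstacle. After reducing the head to a function name and the arguments to values, I must fire \kw{APP3} (internal) or \kw{EAPP} (external), and for the internal case this entails building $\Omega'$ and $\Theta'$ by allocating fresh cells for the parameters and locals and then binding the argument values. Showing that this step is always available and preserves well-formedness requires the function clause of $\mathsf{well\_formed\_state}$—which guarantees the callee exists in $\Delta$, that $\kw{args} \cap \kw{vars} = \phi$, and that the arities and argument types match—together with Lemma~\ref{var-allocation-succeeds} for the allocation phase and Lemma~\ref{bind-variables-succeeds} for the binding phase, with Lemma~\ref{ext-well-formed} used to re-establish the invariant after each environment extension. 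Threading these through while keeping the store-typing invariant intact, and handling the external case where the helper is simply assumed to return, is the delicate bookkeeping of the proof; every other case follows the uniform congruence/value template above.
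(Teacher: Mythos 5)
Your proposal is correct and follows essentially the same route as the paper's proof: mutual induction with a value/non-value case split, congruence rules plus the induction hypothesis for the stepping subcase, and the same auxiliary lemmas (safe dereference, safe assignment, allocation and binding lemmas) to discharge the value-reduction subcases, including the application case via \kw{APP3}. The only cosmetic difference is that the paper invokes the safe-assignment lemma for struct initialization where you cite the allocation lemma, but since \kw{STRUCTV} both allocates and writes fields, both are in fact needed and your account is consistent with the paper's.
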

\begin{proof}
The proof proceeds by mutual induction on the structure of expressions and expression sequences, considering each form of expression and demonstrating that they either are values or can progress to the next state. Below, we present the proof for the case of a single expression. The case for a sequence of expressions follows similarly, by applying the induction hypothesis recursively at each step of the sequence. \\
\textit{Variable case}
When evaluating a variable expression $x$, we distinguish between two cases based on the scope of the variable: \emph{local} and \emph{global}. From the typing rule \textsc{(TVAR)}, we know that the environment $\Gamma$ maps $x$ to a type $\tau$, i.e., $\Gamma(x) = \tau$. The well-formedness of the state $s$ ensures that the variable $x$ appears either in the local environment $\Omega$ or in the global environment $\Delta$.
\begin{itemize}
  \item If $x \in \mathrm{dom}(\Omega)$, then there exists a location $l$ such that $\Omega(x) = (l, \tau)$, and by the well-formedness of memory, we have $\Sigma(l) = \tau$ and $s_\Theta[l, o] = v$ for some offset $o$ and value $v$ with $\mathrm{typeof}(v) = \tau$.
  \item If $x \notin \mathrm{dom}(\Omega)$, then $x$ must be a global variable, and we have $\Delta(x) = l$ for some location $l$. By well-formedness, $\Sigma(l) = \tau$ and there exists $o$ and $v$ such that $s_\Theta[l, o] = v$ and $\mathrm{typeof}(v) = \tau$.
\end{itemize} In both cases, the evaluation of $x$ yields the value $v$ without modifying the state $s$, and the resulting state remains well-formed. The proof for constant expressions is straightforward, as constants evaluate to themselves with the same type and do not affect the state.\\
\textit{Ref Case.}
The evaluation of the expression $\kw{ref}\ e$ proceeds by a case analysis:

\begin{itemize}
  \item \textit{Value case:} If the inner expression of \( \kw{ref} \ e \) has reduced to a value \( v \), then by the typing rule \textsc{(TREF)} and the well-formedness of the current state \( s \), Lemma~\ref{ref_allocation_succeeds} ensures that a fresh memory location \( l \) is successfully allocated and it is safe to store the value \( v \) at \( l \). Moreover, Lemma~\ref{ref_allocation_succeeds} guarantees that the resulting state \( s' \), obtained after allocation, remains well-formed. Therefore, the expression \( \kw{ref} \ v \) steps to a location \( l \), as captured by the semantic rule \textsc{(REFV)}. The subgoals arising from this rule---such as memory consistency and typing of the allocated location---are discharged using the guarantees provided by Lemma~\ref{ref_allocation_succeeds}.
   \item \textit{Step case:} By the induction hypothesis, the evaluation step $e \rightarrow e'$ preserves the well-formedness of the state. By the semantic rule $\textsc{(REF)}$, we have: $\kw{ref}\ e \rightarrow \kw{ref}\ e'$
  and the resulting state remains well-formed. 
\end{itemize}
\textit{Dereference case.} The evaluation of the expression \( !e \) proceeds by case analysis. From the typing rule \textsc{(TDEREF)}, we know that the type of \( e \) must be of the form \( \tau* \), i.e., a pointer to type \( \tau \), where the pointer may originate either from the BeePL program or from the external environment. In both cases, \( e \) denotes a memory location \( l \), and from the typing rule \textsc{(TLOC)} we obtain the judgment \( \Sigma \ l = \tau \), ensuring that the location is well-typed in the store typing context \( \Sigma \). Furthermore, from the structure of the typing derivation using \textsc{(TDEREF)}, it follows that the type of \( e \) cannot be an option type; such cases must be eliminated by a pattern match before dereference is allowed. 
\begin{itemize}
\item \textit{Value case:} Inner expression of \(!e \) has reduced to a location \( l \). We invoke Lemma~\ref{safe-dereference}, which, under the assumption that the state \( s \) is well-formed, guarantees the existence of a value \( v \) such that \( s_\Theta[l] = v \) and \( \kw{typeof}(v) = \tau \). Thus, \( s, !e \rightarrow s, v \) via the semantic rule \textsc{(DEREFV)}. Since dereferencing does not modify the state, we have \( s' = s \), and the well-formedness of the state is preserved.
\item \textit{Step case:} By the induction hypothesis, the evaluation step $e \rightarrow e'$ preserves the well-formedness of the state. Applying the semantic rule \textsc{(DEREF)}, we derive that \( s, !e \rightarrow s', !e' \), and since the well-formedness is preserved in the step from \( e \) to \( e' \), it follows that the overall evaluation of \( !e \) maintains the invariant that the state remains well-formed.
\end{itemize}

\textit{Memory assignment case.} The proof for memory assignment proceeds in a similar manner, using Lemma~\ref{safe-assignment} to handle the case where both $e_1$ and $e_2$ in the assignment expression $e_1 := e_2$ have been reduced to values. And for the case where they are not values, we utilize the induction hypothesis. 

\textit{Unary operators case.} The proof for unary operations relies on Lemma~\ref{progrss-unaryoperators}, which handles the case where the operand of the operator has already reduced to a value. According to this lemma, a well-typed unary operation always makes progress. Moreover, since unary operations do not modify the program state, the well-formedness of the state is preserved after evaluation. For the case where the operand has not yet reduced to a value, we apply the induction hypothesis to complete the proof.

\textit{Binary operator case.} The proof for binary operations relies on Lemma~\ref{progress-binaryoperators}, which handles the case where the operands of the operator has already reduced to a value. According to this lemma, a well-typed binary operation always makes progress. Moreover, since binary operations do not modify the program state, the well-formedness of the state is preserved after evaluation. For the case where the operands have not yet reduced to values, we apply the induction hypothesis to complete the proof.

\textit{Let-binding case.} We consider the expression \( \elet{x}{\tau}{e_1}{e_2} \). The evaluation proceeds by case analysis on whether \( e_1 \) is a value.
\begin{itemize}
\item \textit{Case 1:} \( e_1 = v \) for some value \( v \). In this case, the evaluation rule \textsc{(LETV)} applies, and the let-binding expression reduces as follows: $\elet{x}{\tau}{v}{e_2} \rightarrow e_2[x \leftarrow v]$. By unfolding the definition of substitution, we obtain the resulting expression \( e_2' := e_2[x \leftarrow v] \), which serves as the witness for the progress step. Since the substitution does not alter the memory state, we have \( s' = s \), and the well-formedness of the state is preserved.
\item \textit{Case 2:} \( e_1 \rightarrow e_1' \).
By the induction hypothesis applied to \( e_1 \), there exists an expression \( e_1' \) and a state \( s' \) such that:$
(s, e_1) \rightarrow (s', e_1')$ and $\mathsf{well\_formed\_state}(\Gamma, \Sigma, s')$
We then apply the semantic rule \textsc{(LET)} to conclude:$
s, \elet{x}{\tau}{e_1}{e_2} \rightarrow s', \elet{x}{\tau}{e_1'}{e_2}$
Hence, the let-binding expression makes progress, and the resulting state remains well-formed.
\end{itemize}

\textit{Cond case.} The typing rule $\kw{TCOND}$ ensures that the condition expression has type $\kw{bool}$ and that both branches have the same type, preserving type consistency. The proof involves three cases. The first two cases correspond to when the condition has already reduced to a boolean value: If the condition $e_1$ has evaluated to $\kw{true}$, then the conditional expression $\cond{e_1}{e_2}{e_3}$ reduces to $e_2$ by applying the rule $\kw{CONDT}$. If $e_1$ has evaluated to $\kw{false}$, the expression reduces to $e_3$ using the rule $\kw{CONDF}$. In both cases, the program state remains unchanged, and thus the well-formedness of the state is preserved. The third case addresses when the condition $e_1$ has not yet reduced to a value. In this case, the expression takes a step to $\cond{e_1'}{e_2}{e_3}$ by applying the rule $\kw{COND}$, where $e_1'$ is obtained from the induction hypothesis. By the induction hypothesis, we know that the state after evaluating $e_1$ remains well-formed.

\textit{Function application case.} The proof for function application proceeds by considering following cases.
\begin{itemize}
    \item \textit{Case 1:} The function \( fn \) is applied to a list of fully evaluated values \( \overline{v} \). From the typing rule \( \kw{TAPP} \), we know that the function \( fn \) has an arrow type. Moreover, from the definition of well-formedness of the state \( s \) (as specified in Definition~\ref{well-formed-state}), a well-typed function must be declared in the environment \( \Delta \). The number of arguments provided in the application must match the number declared in the function signature, and their types must align with the expected parameter types. In this case, the function application reduces to the function body, as specified in the declaration of \( fn \). This transition is captured by the rule \( \kw{APP3} \), which produces several subgoals related to memory allocation for local variables and binding of parameters to the argument values. These subgoals are discharged using Lemma~\ref{var-allocation-succeeds}, which ensures that the allocation of function-local variables preserves well-formedness, and Lemma~\ref{bind-variables-succeeds}, which guarantees that binding argument values to parameters results in a well-formed store.
   \item \textit{Remaining cases:} If some of the arguments to the function application are not yet fully evaluated, we use the induction hypothesis to show that the evaluation of these arguments progresses, preserving the well-formedness of the state.
\end{itemize}

\textit{Struct initialization and field access case.}We consider two subcases for struct initialization.
\begin{itemize}
\item \textit{Case 1:} All expressions \( \overline{e} \) to be assigned to the struct fields are fully evaluated to values \( \overline{v} \). From the well-formedness condition of the state, we know that every well-typed struct variable must be declared in the environment \( \Pi \). Using this fact, we extract the field layout information—specifically, the memory location and offset associated with each field name—and assign each value \( v_i \in \overline{v} \) to the corresponding memory position. By Lemma~\ref{safe-assignment}, we ensure that each write to memory is safe and that the resulting state \( s' \) remains well-formed.
\item \textit{Case 2:} Some expressions \( e_i \in \overline{e} \) have not yet reduced to values. \\
In this case, we apply the induction hypothesis to show that each such \( e_i \) can take a step \( e_i \rightarrow e_i' \) in a well-formed state. By applying the evaluation rule for struct initialization and appealing to the preservation of well-formedness under expression evaluation, we discharge the subgoal.
\end{itemize}

\textit{Struct Field Access.} To access a field \( f \) of a struct variable \( x \), we use the typing judgment to establish that \( x \) is well-typed and declared in the environment \( \Pi \). By the well-formedness of the state, we extract the memory location and offset of field \( f \). Then, by applying Lemma~\ref{safe-dereference}, we read the value at the specified location and offset, guaranteeing that the operation is safe and that the state remains well-formed.

\textit{For-loop case.} We consider the evaluation of the expression \( \kw{For}(e_1~\ldots~e_2, d)~e \) by performing case analysis on the evaluation status of the bound expressions \( e_1 \) and \( e_2 \).
\begin{itemize}
    \item \textit{Case 1:} \( e_1 \) is not a value.
From the typing rule \textsc{(TFOR)}, we know that \( e_1 \) is well-typed with type \( \tau \in \{ \kw{int}, \kw{long} \} \). By the induction hypothesis, the expression \( e_1 \) satisfies progress; hence, there exists \( e_1' \) and an updated state \( s' \) such that:$s, e_1 \rightarrow s', e_1'$
Applying the evaluation rule \textsc{(FOR1)}, we derive:$s, \kw{For}(e_1~\ldots~e_2, d)~e \rightarrow s', \kw{For}(e_1'~\ldots~e_2, d)~e$
\item \textit{Case 2:} \( e_1 \) is a value, but \( e_2 \) is not.
From the typing derivation, we similarly know that \( e_2 \) is well-typed. By the induction hypothesis, \( e_2 \) satisfies progress, so there exists \( e_2' \) and an updated state \( s' \) such that:$s, e_2 \rightarrow s', e_2'$. Applying the evaluation rule \textsc{(FOR2)}, we derive:$s, \kw{For}(v_1~\ldots~e_2, d)~e \rightarrow s', \kw{For}(v_1~\ldots~e_2', d)~e$
In both cases, the updated state \( s' \) remains well-formed by the induction hypothesis.
\item \textit{Case 3:} Both \( e_1 \) and \( e_2 \) are values.
Let \( v_1 = e_1 \), \( v_2 = e_2 \), and consider the expression:$\kw{For}(v_1~\ldots~v_2, d)~e$
From the typing rule \textsc{(TFOR)}, we have \( \kw{typeof}(v_1) = \kw{typeof}(v_2) = \tau \), where \( \tau \in \{\kw{int}, \kw{long}\} \), and that the loop body \( e \) is well-typed independently of \( e_1 \) and \( e_2 \). Specifically, the rule enforces:
$\mathsf{fvar}(e) \cap \mathsf{fvar}(e_1) = \phi \quad \text{and} \quad \mathsf{fvar}(e) \cap \mathsf{fvar}(e_2) = \phi$
ensuring that the loop bounds do not appear free in the loop body. Thus, the evaluation of \( e \) is unaffected by the values of \( v_1 \) and \( v_2 \), and once the bounds are fully evaluated, the loop can be unrolled. Let \( \kw{range}(v_1, v_2, d) = n \) denote the total number of iterations, and let \( \kw{repeat}(e, n) = e' \) denote the sequential unrolling of \( e \) repeated \( n \) times. The evaluation proceeds via the rule \textsc{(FOR3)}:$\kw{For}(v_1~\ldots~v_2, d)~e \longrightarrow e'$
\end{itemize}

\textit{None and Some case.} The proof for the $\kw{None}$ and $\kw{Some} \ v$  case is straight-forward as it does not alter the state and is a value; hence preserves the well-formedness of the state. For the case where $e$ in $\kw{Some} \ e$ steps to $e'$, the proof follows from the induction hypothesis. 

\textit{Match case.} We proceed by case analysis on the structure of the expression \( e_0 \) in the match expression \( \kw{match}~e_0~\kw{with}~\overline{p \rightarrow e} \). If \( e_0 \) is not a value, then by the induction hypothesis, there exists \( e_0' \) and \( s' \) such that \( s, e_0 \rightarrow s', e_0'\). In this case, we apply rule \textsc{Match1} to conclude that \( s, \kw{match}~e_0~\kw{with}~\overline{p \rightarrow e} \rightarrow s', \kw{match}~e_0'~\kw{with}~\overline{p \rightarrow e} \).

If \( e_0 \) is a value, we proceed by analyzing its form. If \( e_0 = \kw{Some}~v \), then since \( \kw{isSome}(v) \) holds and the pattern list contains a clause of the form \( (\kw{Psome}~x, e_2) \), rule \textsc{MSOME} applies, and we take the step to $s, e_2[x \leftarrow v]$. If \( e_0 = \kw{None} \), and the pattern list contains a clause \( (\kw{Pnone}, e_1) \), we apply rule \textsc{MNONE} to step to \( s, e_1 \).

If \( e_0 = v \) and \( \kw{typeof}(v) = \kw{bytes} \), we further distinguish based on the size of the byte sequence. If \( \kw{length}(v) \geq \kw{sizeof}(\tau_x) \), and the pattern list contains a clause \( ((x, \tau_x), e_1) \), then the byte extraction succeeds, and by rule \textsc{MBYTES}, we step to \( (e_1[\overline{y_i} \mapsto \overline{v_i}, x \mapsto v_x], s) \). Otherwise, if \( \kw{length}(v) < \kw{sizeof}(\tau_x) \), rule \textsc{MBYTESF} applies, and the evaluation steps to the fallback clause \( e_2 \), yielding \( s, e_2 \). This completes the case analysis for all possible forms of the match expression. In each case, we have shown that the expression either takes a step or is already a value (which does not apply for match), thereby establishing the progress property.

This completes the case analysis for all expression forms in the language. In each case, we have shown that a well-typed expression is either a value or can take a step according to the operational semantics, and that the state remains well-formed. Therefore, the progress property holds for all well-typed expressions of BeePL.
\end{proof}

\begin{lemma}\label{preservation}[Preservation]
\small
\begin{align*}
\forall \Gamma, \Sigma, \Pi, \Psi, es, es', \eta_s, \tau_s, s, s'.\quad
&\typerules{\Gamma}{\Sigma}{\Pi}{\Psi}{es}{\tau_s}{\eta_s} 
\wedge \mathsf{well\_formed\_state}(\Gamma, \Sigma, s) \wedge \evals{s}{es}{s'}{es'}  \implies \\
& \exists \eta_s'. \ \typerules{\Gamma}{\Sigma}{\Pi}{\Psi}{es'}{\tau_s}{\eta_s'} \wedge \eta_s' \subseteq \eta_s  
\wedge \mathsf{well\_formed\_state}(\Gamma, \Sigma, s') \\
\wedge\ 
\forall \Gamma, \Sigma, \Pi, \Psi, e, e', \eta, \tau, s, s'.\quad
&\typerule{\Gamma}{\Sigma}{\Pi}{\Psi}{e}{\tau}{\eta} 
\wedge \mathsf{well\_formed\_state}(\Gamma, \Sigma, s) \wedge \eval{s}{e}{s'}{e'} \implies \\
&\exists \eta'. \ \typerule{\Gamma}{\Sigma}{\Pi}{\Psi}{e'}{\tau}{\eta'} \wedge \eta' \subseteq \eta \wedge \mathsf{well\_formed\_state}(\Gamma, \Sigma, s') 
\end{align*}
\end{lemma}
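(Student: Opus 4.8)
The plan is to prove both conjuncts simultaneously by \emph{mutual induction on the typing derivations} for $e$ and for $es$, mirroring the structure used for the progress proof (Lemma~\ref{progress}). At each node I would invert the operational rule that produced $\eval{s}{e}{s'}{e'}$ (resp.\ $\evals{s}{es}{s'}{es'}$); the shape of $e$ together with the typing rule at the root of the derivation restricts which small-step rule can fire, so the inversion yields only a handful of subcases per expression form. These subcases split cleanly into two families: \emph{congruence} (search) rules such as \textsc{REF}, \textsc{DREF}, \textsc{MASSGN1}, \textsc{LET}, \textsc{BOP1}, \textsc{COND}, \textsc{FOR1}, \textsc{APP1}, \textsc{MATCH1}, and \textsc{SEQH}, in which a proper subexpression takes the step; and \emph{redex} (compute) rules such as \textsc{REFV}, \textsc{DREFV}, \textsc{MASSGNV}, \textsc{LETV}, \textsc{BOPV}, \textsc{CONDT}/\textsc{CONDF}, \textsc{APP3}, and \textsc{MSOME}/\textsc{MNONE}/\textsc{MBYTES}, in which the redex is actually contracted. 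The sequence conjunct is handled by the same mutual hypothesis: \textsc{SEQH}/\textsc{SEQT} are rebuilt through the typing rule \textsc{TSEQ}.

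For the congruence rules the argument is uniform. The relevant typing rule decomposes the effect $\eta$ as a union (via $\dotplus$) of the effects of the immediate subexpressions; I apply the induction hypothesis to the subexpression that steps, obtaining a residual that is well-typed at the \emph{same} type with a smaller effect ($\subseteq$ its original) and a preserved well-formed state; then I reapply the same typing rule to rebuild the typing of $e'$. The effect subsumption $\eta' \subseteq \eta$ follows from monotonicity of $\dotplus$ with respect to $\subseteq$ in the component that shrank, the other components being unchanged.

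The redex rules are where the auxiliary lemmas do the work, and also where the inequality $\eta' \subseteq \eta$ (rather than equality) becomes \emph{essential}: contracting a redex discards the effects contributed by subexpressions already reduced to values (e.g.\ in \textsc{BOPV} the operand effects vanish, and in \textsc{CONDT}/\textsc{CONDF} the guard's effect and the unused branch's effect vanish). For the substitution-based reductions---\textsc{LETV}, \textsc{MSOME}, \textsc{MNONE}, \textsc{MBYTES}, and the body-entry step \textsc{APP3}---I appeal to Lemma~\ref{subst-preserve-typing}, which is exactly the ingredient supplying a residual effect bounded by the original under $\subseteq$. For \textsc{LVAR}/\textsc{GVAR} and \textsc{DREFV}, the $\mathsf{well\_formed\_state}$ invariant (Definition~\ref{well-formed-state}) together with Lemma~\ref{safe-dereference} recovers that the value read from a location $l$ has exactly the type $\Sigma(l)$ recorded in the store typing, discharging the typing goal; since these steps leave $\Theta$ untouched, well-formedness is immediate. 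For \textsc{MASSGNV} I invoke Lemma~\ref{safe-assignment} to show the updated memory is still well-formed, and the residual is the unit value, typed at $\kw{unit}_\tau$ with empty effect.

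The main obstacle, and the case I would treat most carefully, is \emph{allocation} (\textsc{REFV}) and the attendant management of the store typing $\Sigma$. The operational rule freshly extends $\Sigma$ with $l \mapsto *\kw{typeof}(v)$, yet the lemma statement quantifies over a \emph{fixed} $\Sigma$; a fully formal treatment must therefore thread a store-typing extension $\Sigma \sqsubseteq \Sigma'$ through the induction and establish a weakening lemma showing that typing and well-formedness are stable under such an extension (the fresh location cannot clash with anything already typed in $\Sigma$). I would discharge the post-allocation well-formedness obligation using Lemma~\ref{ref_allocation_succeeds} and Lemma~\ref{well-formed-allocation}, and retype the resulting location via \textsc{TLOC} against the extended store typing. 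A secondary difficulty is \textsc{APP3}: the callee's body was type-checked under the function's own local context (its arguments and locals), so I must reconcile this with typing under $\Gamma$, using the $\mathsf{well\_formed\_state}$ clause that matches each declared function against its signature (argument count, argument types, and return type) before instantiating the body's typing and constructing its frame via Lemma~\ref{var-allocation-succeeds} and Lemma~\ref{bind-variables-succeeds}.
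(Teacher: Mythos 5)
Your proposal is correct and follows essentially the same route as the paper's proof: induction on the typing derivation (with the list case by the mutual hypothesis), splitting each form into congruence cases discharged by the induction hypothesis plus monotonicity of $\dotplus$, and redex cases discharged by Lemma~\ref{subst-preserve-typing} (for \textsc{LETV}, \textsc{MSOME}, \textsc{MBYTES}, \textsc{APP3}), Lemma~\ref{safe-dereference}, Lemma~\ref{safe-assignment}, and Lemma~\ref{ref_allocation_succeeds}. The one place you go beyond the paper is the \textsc{REFV} case, where you correctly observe that the lemma as stated fixes $\Sigma$ while the step extends it, and propose threading a store-typing extension $\Sigma \sqsubseteq \Sigma'$ with a weakening lemma --- a refinement the paper's proof elides by appealing informally to the well-formedness clause relating valid accesses to $\Sigma$.
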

\begin{proof}
The proof follows by doing induction on the typing derivation $\typerule{\Gamma}{\Sigma}{\Pi}{\Psi}{e}{\tau}{\eta}$ of expression. Below, we present the proof for the case of a single expression. The case for a sequence of expressions follows similarly, by applying the induction hypothesis recursively at each step of the sequence. We discuss some of the interesting cases here: 

\textit{Var case.} The semantics of variable access distinguishes between local and global variables. This distinction determines how the variable \( x \) is looked up—either in the local environment \( s_\Omega \) or the global environment \( s_\Delta \)—to retrieve the memory location \( l \) to which it is bound. Once the location \( l \) is obtained, the variable is dereferenced to retrieve the corresponding value \( v \) from the memory state \( s_\Theta \), i.e., $s_\Theta[l] = v$
By the property of well-typed dereferencing (as guaranteed by the typing rule \textsc{(TVAR)} and the store typing context \( \Sigma \)), we have \( \Sigma(l) = \tau \) and \( \kw{typeof}(v) = \tau \). Therefore, the resulting expression has type \( \tau \).
The effect associated with variable access is \( \phi \), and since \( \phi \subseteq \eta \) for any effect \( \eta \), subsumption holds. Furthermore, the operational semantics for variable access does not modify the state:$s' = s$ and hence, the well-formedness of the state is preserved.

\textit{Ref case.} Consider the expression \( \kw{ref} \ e \) and the typing rule \textsc{(TREF)}, which assigns it the type \( \tau* \), assuming that \( e \) has type \( \tau \). We proceed by case analysis on the evaluation of \( e \).
\begin{itemize}
    \item \textit{Case 1:} \( e \rightarrow e' \). By the induction hypothesis, we obtain:$\typerule{\Gamma}{\Sigma}{\Pi}{\Psi}{e'}{\tau}{\eta'}\ \text{with} \ \eta' \subseteq \eta$
and the updated state \( s' \) satisfies \( \mathsf{well\_formed\_state}(\Gamma, \Sigma, s') \). We then reapply the typing rule \textsc{(TREF)} to derive:$
\typerule{\Gamma}{\Sigma}{\Pi}{\Psi}{\kw{ref} \ e'}{\tau*}{\kw{alloc :: \eta'}}$
which establishes the preservation of typing for \( \kw{ref} \ e' \) under evaluation.
\item \textit{Case 2:} \( e = v \), where \( v \) is a value. 
In this case, by the operational semantics rule \textsc{(REFV)}, the expression \( \kw{ref} \ v \) allocates a fresh memory location \( l \), initializes it with the value \( v \), and yields:$\kw{ref} \ v \rightarrow l$. We solve this goal by applying the rule \textsc{(TLOC)}, which generates the subgoal: $\Sigma(l) = \tau$.
From the assumption that the original state \( s \) is well-formed, and by the semantics of \( \kw{ref} \), we know that \( l \) is a newly allocated and initialized location. This means \( \kw{isValidAccess}(s_\Theta, l, \kw{Freeable}) \) holds. By the third clause of the well-formedness definition (Definition~\ref{well-formed-state}), such valid allocations are properly reflected in \( \Sigma \), ensuring: $\Sigma(l) = \tau$.
Moreover, by Lemma~\ref{ref_allocation_succeeds}, the updated state \( s' \), after allocation, remains well-formed. Hence, preservation holds in this case as well. In terms of effect, location does not produce any effect; hence it will always be sub-effect of the effect associated with original $\kw{ref}$ construct. 
\end{itemize}

\textit{Deref case.} Consider the dereference expression \( !e \). We proceed by case analysis on the evaluation of \( e \).
\begin{itemize} 
\item \textit{Case 1:} \( e \rightarrow e' \).
By the induction hypothesis, there exists an updated state \( s' \) such that:
$
\eval{s}{e}{s'}{e'} \ \text{and} \
\typerule{\Gamma}{\Sigma}{\Pi}{\Psi}{e'}{\tau^*}{\eta'} \ \text{with} \ \eta' \subseteq \eta,\ \text{and} \ \mathsf{well\_formed\_state}(\Gamma, \Sigma, s')$. Applying the rule \textsc{(TDEREF)}, we obtain:$\typerule{\Gamma}{\Sigma}{\Pi}{\Psi}{!e'}{\tau}{\kw{read ::} \ \eta'}$
Since \( \eta' \subseteq \eta \), we conclude: $
\kw{read ::} \ \eta' \subseteq \kw{read ::} \ \eta$. Thus, preservation holds in this case.
\item \textit{Case 2:} \( e = l \), where \( l \) is a valid memory location, and \( !e \rightarrow v \).
By the operational semantics of dereferencing, the value stored at \( l \) is retrieved: $s_\Theta[l] = v$.
By Lemma~\ref{safe-dereference} and the well-formedness of the state \( s \), we have:
$\Sigma(l) = \tau \ \text{and} \ \kw{typeof}(v) = \tau$
Hence, $\typerule{\Gamma}{\Sigma}{\Pi}{\Psi}{v}{\tau}{\phi}$
Since \( \phi \subseteq \kw{read ::} \eta \), the effect condition is satisfied.
Furthermore, since dereferencing does not modify the memory: $s' = s$,
$\mathsf{well\_formed\_state}(\Gamma, \Sigma, s')$is preserved.
\end{itemize}

\textit{Memory assignment.} Preservation for memory assignment expressions of the form \( e_1 := e_2 \) follows a similar structure. In cases where \( e_1 \) and \( e_2 \) are not yet fully evaluated, the proof proceeds analogously to Case~1 of the \texttt{deref} and \texttt{ref} constructs, relying on the induction hypothesis for the subexpressions. When \( e_1 \) evaluates to a location \( l_1 \) and \( e_2 \) evaluates to a value \( v_2 \), the resulting expression is \( \kw{unit} \). Type preservation holds in this case as well, since the type of the entire assignment expression \( e_1 := e_2 \) is \( \kw{unit}_\tau \), matching the type of the result. The effect will be $\phi$, which is a subeffect of the original effect. The well-formedness of the updated state is guaranteed through lemma~\ref{safe-assignment}.

\textit{Let-binding.} For the case, where $e_1$ is not yet a value and can still take a reduction step,
we proceed exactly as in the earlier cases, invoking the induction hypothesis to obtain the desired
preservation properties. For the case, where $e_1$ is fully evaluated and has reduced to a value $v$ in $\elet{x}{\tau}{e_1}{e_2}$. According to the semantic rule $\textsc{LETV}$, let-binding steps to $e_2[x \leftarrow v]$.
By applying the lemma \texttt{Subst\_preserve\_typing}~\ref{subst-preserve-typing} to the original typing derivation, we obtain that the substituted expression \(e_2[x \leftarrow v]\) is also well-typed.
Because this reduction leaves the store unchanged, the well-formedness of the current state is preserved.

\textit{For loop.} For the case of for-loop $\fore {e_1} {e_2} d {e}$, we discuss the whole proof in two categories:
\begin{itemize}
  \item \textit{Sub-expressions are not fully evaluated.} If one of the subexpressions has not reduced to a value, we invoke the induction hypothesis for that sub-expression, exactly as in the let-binding and other structural cases, to obtain both subject reduction and preservation of state well-formedness.
  \item \textit{All control parameters are values.} if $e_1$ and $e_2$ has reduced to $v_1$ and $v_2$, the semantic rule $\kw{FORV}$ ensures $\mathsf{range}(v_1,v_2,d)=n \wedge \eval{\,s\,}{\mathsf{repeat}(e,n)}{s'}{v}$. The premise \(\eval{s}{\mathsf{repeat}(e,n)}{s'}{v}\) together with the induction hypothesis for the body \(e\) (applied \(n\) times) implies that \(v\) has type \(\tau\) and the final store \(s'\) is well formed.  
\end{itemize}

\textit{Struct Initialization and Field Access.}
For both struct initialization and field access, the key invariant rely on is the well-formedness of struct declarations. By the definition of state well-formedness (Definition~\ref{well-formed-state}), any struct used in a program must be declared in the composite environment \(\Pi\), and its declaration must type check under the typing context \(\Gamma\) (as stated in the last clause of the definition). This ensures that the types of all struct fields are known and consistent throughout the program. As a result:
\begin{itemize}
  \item \textit{Struct Initialization.}  
        The initialization of a struct preserves the type, since it simply associates values with the statically declared field types and does not modify the declared structure or type. The initialization of a struct's fields results in updates to memory. Since the struct is already declared in the composite environment $\Pi$, the corresponding memory locations for its fields are guaranteed to be valid. By Lemma~\ref{safe-assignment}, any assignment to a valid memory location succeeds and preserves state well-formedness. Therefore, the field-wise initialization of the struct not only succeeds but also results in a well-formed updated state.
  \item \textit{Field Access.}  
        Accessing a field reduces the expression to the value stored in the corresponding field of the struct. The field’s type must match the declaration in \(\Pi\), ensuring that the result is well-typed. Field access does not change the state; hence the well-formedness of the state is preserved. 
\end{itemize}

\textit{Match.} For the match case, where the sub-expressions are not fully reduced to value, we invoke the induction hypothesis for that sub-expression, exactly as in the other structural cases, to obtain both subject reduction and preservation of state well-formedness. 

For the case where $e$ has reduced to $\kw{Some} \ v$, $\match e p {pe}$, we know according to the rule $\kw{MSOME}$ that it reduces to $e_2[x \leftarrow v]$, where $e_2$ is the corresponding follow-up expression for the pattern $\kw{Psome}$. By applying the lemma \texttt{Subst\_preserve\_typing}~\ref{subst-preserve-typing} to the original typing derivation, we obtain that the substituted expression \(e_2[x \leftarrow v]\) is also well-typed. Because this reduction leaves the store unchanged, the well-formedness of the current state is preserved. For the case where $e$ evaluates to $\kw{None}$, the evaluation of $\match e p {pe}$ proceeds to the expression $e_1$, which is the follow-up expression for the pattern $\kw{Pnone}$. We use the inductive hypothesis corresponding to $e_1$ to discharge the goal. 

For the case where the type of $e$ in $\match e p {pe}$ in $\kw{bytes}$ and is reduced to a value $v$, we have two cases to consider:
\begin{itemize}
\item Given that \(\mathsf{typeof}(v) = \kw{bytes}\) and its length is sufficient, the matched pattern \(\kpbytes{x}{\tau_x}{y}{\tau_y}\) extracts values \(v_x\) and \(\{ y_i \mapsto v_i \}\). The expression \(e_1\) is typed under a context extended with \(x : \tau_x, y_i : \tau_y\). By multiple applications of the substitution lemma, the substituted expression is well-typed. The state update is identity (\(s' = s\)), so it remains well-formed.
\item The length check fails, so the fallback branch \(e_2\) is selected. Since \(e_2\) is one of the match branches, it is well-typed. No substitution or state update occurs, so both typing and well-formedness are preserved.
\end{itemize}

For the match construct, we carefully instantiate the effect corresponding to the selected branch. Since the typing rules ensure that each branch is typed with an effect that is no greater than the overall effect of the match expression, the instantiated effect is guaranteed to be a sub-effect of the original. Thus, effect subsumption holds by construction.
\end{proof}

To prove soundness we define a semantic closure for BeePL program as follows:
\begin{figure}[H]
 \footnotesize
  \[
\begin{array}{@{}c@{}}
    \inferrule*[left=\kw{SZERO}]{\eval s e {s} {e}}{\meval s e {s} {e} 0}\ \ \ \ \ \
    \inferrule*[left=\kw{SMULTI}]{\meval {s} {e} {s'} {e'} {n} \wedge \eval {s'} {e'} {s''} {e''}}{\meval s e {s''} {e''} {n+1}}\\[2ex]
    \inferrule*[left=\kw{SZEROS}]{\evals s {es} {s} {es}}{\mevals s {es} {s} {es} 0}\ \ \ \ \ \
    \inferrule*[left=\kw{SMULTIS}]{\mevals {s} {es} {s'} {es'} {n} \wedge \evals {s'} {es'} {s''} {es''}}{\mevals s {es} {s''} {es''} {n+1}}
  \end{array}
 \]   
 \caption{Semantics closures for expressions.}
\label{fig:sem_closure}
\end{figure}

The relation $\twoheadrightarrow$ and $\rightsquigarrow$ defines the multi-step evaluation (semantic closure) of expressions and expression lists, parameterized by a natural number $n$ that records the number of steps taken to reach the final state. The rules $\kw{SONE}$ and $\kw{SONES}$ capture a single-step evaluation, while $\kw{SMULTI}$ and $\kw{SMULTIS}$ recursively compose multiple steps, incrementing $n$ to reflect the total number of steps.

The soundness lemma~\ref{soundness} states that if an expression (or sequence) is well-typed and the state is well-formed, then its evaluation to a final expression (or sequence) must result in a value. This guarantees that well-typed programs do not get stuck during execution.
\begin{lemma}[Soundness]\label{soundness}
\footnotesize
\begin{multline*}
\forall \Gamma, \Sigma, \Pi, \Psi, es, es', s, s', \eta_s, \tau_s, n.\quad
\typerules{\Gamma}{\Sigma}{\Pi}{\Psi}{es}{\tau_s}{\eta_s} 
\wedge \text{well\_formed\_state}(\Gamma, \Sigma, s) \wedge 
\mevals{s}{es}{s'}{es'}{n} \wedge es' \centernot\rightarrowtail \\
\implies 
\kw{isVal}\ es' \wedge \exists \eta_s'.\ 
\typerules{\Gamma}{\Sigma}{\Pi}{\Psi}{es'}{\tau_s}{\eta_s'} 
\wedge \eta_s' \subseteq \eta_s \wedge 
\text{well\_formed\_state}(\Gamma, \Sigma, s')
\end{multline*}

\begin{multline*}
\forall \Gamma, \Sigma, \Pi, \Psi, e, e', s, s', \eta, \tau, n.\quad
\typerule{\Gamma}{\Sigma}{\Pi}{\Psi}{e}{\tau}{\eta} 
\wedge \text{well\_formed\_state}(\Gamma, \Sigma, s) \wedge 
\meval{s}{e}{s'}{e'}{n} \wedge e' \not\rightarrow \\
\implies 
\kw{isVal}\ e' \wedge \exists \eta'.\ 
\typerule{\Gamma}{\Sigma}{\Pi}{\Psi}{e'}{\tau}{\eta'} 
\wedge \eta' \subseteq \eta \wedge 
\text{well\_formed\_state}(\Gamma, \Sigma, s')
\end{multline*}
\end{lemma}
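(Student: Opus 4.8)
The plan is to derive soundness directly from the already-established Progress (Lemma~\ref{progress}) and Preservation (Lemma~\ref{preservation}) lemmas, using the standard ``progress plus preservation'' argument lifted to the multi-step relation. As with the earlier metatheory, I would prove the two conjuncts—one for single expressions and one for expression sequences—by mutual induction, presenting only the expression case in detail since the sequence case is identical up to the list-indexed closure rules $\kw{SZEROS}$ and $\kw{SMULTIS}$.

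First I would establish an auxiliary \emph{multi-step preservation} fact by induction on the step count $n$: if $\typerule{\Gamma}{\Sigma}{\Pi}{\Psi}{e}{\tau}{\eta}$ and $\mathsf{well\_formed\_state}(\Gamma,\Sigma,s)$ and $\meval{s}{e}{s'}{e'}{n}$, then there exists $\eta'$ with $\typerule{\Gamma}{\Sigma}{\Pi}{\Psi}{e'}{\tau}{\eta'}$, $\eta' \subseteq \eta$, and $\mathsf{well\_formed\_state}(\Gamma,\Sigma,s')$. The base case $n=0$ uses rule $\kw{SZERO}$, where $e'=e$ and $s'=s$, so the conclusion holds with $\eta'=\eta$. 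For the step case $n=m+1$, rule $\kw{SMULTI}$ decomposes the evaluation into $\meval{s}{e}{s_1}{e_1}{m}$ followed by a single step $\eval{s_1}{e_1}{s'}{e'}$. The induction hypothesis yields that $e_1$ is well-typed with some effect $\eta_1 \subseteq \eta$ in the well-formed state $s_1$; applying single-step Preservation (Lemma~\ref{preservation}) to this typing and the final step gives $e'$ well-typed with $\eta' \subseteq \eta_1$ and $s'$ well-formed. Transitivity of $\subseteq$ on effects then delivers $\eta' \subseteq \eta$. Crucially, this auxiliary claim does not mention the stuck condition $e' \not\rightarrow$, which is what lets the induction go through cleanly.

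Second, with multi-step preservation in hand, I would discharge the soundness statement itself. Given the hypotheses together with $e' \not\rightarrow$, the auxiliary fact tells us $e'$ is well-typed (with $\eta' \subseteq \eta$) in the well-formed state $s'$. I then invoke Progress (Lemma~\ref{progress}) on $e'$: it is either a value or can take a step. Since $e' \not\rightarrow$ rules out the second disjunct, $e'$ must be a value, which is precisely $\kw{isVal}\ e'$. The remaining conjuncts—the typing of $e'$, the effect inclusion $\eta' \subseteq \eta$, and the well-formedness of $s'$—are exactly what the auxiliary fact already provided.

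I expect no deep obstacle, since Progress and Preservation do the heavy lifting and the remaining work is mostly bookkeeping. The one point that requires attention is the asymmetric shape of $\kw{SMULTI}$, which splits $n+1$ steps as ``$m$ steps then one step'' rather than ``one step then $m$ steps''. This orientation is in fact convenient: the inductive hypothesis applies to the $m$-step prefix—whose endpoint $e_1$ need \emph{not} be stuck—and only the final single step needs Preservation, so I avoid having to reason about the stuck hypothesis inside the induction. The only genuine subtlety is threading the effect-subsumption bound $\eta' \subseteq \eta$ through every step, which reduces to transitivity of the subset relation on effects; and, as elsewhere, ensuring the mutual-induction obligations for expression sequences are stated and closed in lockstep with the expression case.
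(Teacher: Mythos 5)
Your proof is correct, and it follows the same overall skeleton as the paper's (induction on the step count $n$, decomposing via $\kw{SMULTI}$, threading typing and well-formedness through with Preservation, and using Progress plus the stuck hypothesis to obtain $\kw{isVal}$). The one genuine divergence is in how the value claim is established in the inductive case: the paper invokes the Termination lemma (Lemma~\ref{termination}) to obtain an $m$-step evaluation of $e$ to a value and then asserts ``we show that $m = n+1$'' to conclude that $e'$ itself is a value — a step that implicitly relies on determinism (or at least uniqueness of normal forms) of the small-step relation, which is never proved. You instead factor out a clean multi-step preservation lemma (proved by induction on $n$ without any stuckness hypothesis) and then apply Progress once to the final expression $e'$, letting $e' \not\rightarrow$ eliminate the stepping disjunct. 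This is the more standard and more economical argument: it needs only Progress and Preservation, avoids the detour through Termination, and closes the determinism gap that the paper's phrasing leaves open. The effect-subsumption bookkeeping via transitivity of $\subseteq$ and the lockstep treatment of the sequence judgment are handled identically in both proofs.
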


\begin{proof}
The proof follows by doing induction on the number of steps $n$. 
\begin{itemize}
    \item \textit{n = 0}: By definition of multi-step evaluation $\kw{SZERO}$, we know \( s = s' \) and \( e = e' \). Given: $\typerule{\Gamma}{\Sigma}{\Pi}{\Psi}{e}{\tau}{\eta} \ \text{and} \ \mathsf{well\_formed\_state}(\Gamma, \Sigma, s)$, we invoke the progress lemma~\ref{progress}, which states that either: $\mathsf{isVal}(e) \ \text{or}$ there exists $s_1$ and $e_1$ to which it steps to. If \( e \) is already a value, the result follows directly by instantiating \( \eta' = \eta \), and the type and well-formedness is preserved as the state does not change. Otherwise, if \( e \) can step, then we contradict the assumption \( \meval{s}{e}{s}{e}{0} \), which disallows any step. Thus only the first case applies, and the base case holds.
    \item \textit{$n = S \ n'$}: By the definition of multi-step evaluation $\kw{SMULTI}$, we know that there exist intermediate state \( s_1 \) and expression \( e_1 \) such that: $\meval{s}{e}{s_1}{e_1}{n} \ \text{and} \ \eval{s_1}{e_1}{s'}{e'}$. Now we apply preservation theorem inductively over each step in $n$ to obtain the two key results: $\typerule{\Gamma}{\Sigma}{\Pi}{\Psi}{e_1}{\tau}{\eta_1}$ and $\mathsf{well\_formed\_state}(\Gamma, \Sigma, s_1)$. Applying the termination lemma~\ref{termination} on the typing derivation of $e$ and well-formedness of the initial state $s$, we know that there exists $m$, such that $\meval{s}{e}{s'}{e'}{m}$ and $e'$ is a value. We show that $m = n + 1$ and conclude that $e'$ is a value. Now applying the preservation lemma $\typerule{\Gamma}{\Sigma}{\Pi}{\Psi}{e_1}{\tau}{\eta_1}$, $\mathsf{well\_formed\_state}(\Gamma, \Sigma, s_1)$ and $\eval{s_1}{e_1}{s'}{e'}$, we conclude the typing derivation and well-formedness of the resultant state $s'$ after $n + 1$ steps. From the transitivity relation over the effects $\eta' \subseteq \eta_1$ ($\eta_1$ is obtained after applying preservation for n steps) and $\eta_1 \subseteq \eta$ ($\eta_1$ is obtained for the last step), we conclude that $\eta' \subseteq \eta$.
\end{itemize}
\end{proof}

\begin{lemma}[Safe Read]\label{safe-read}
\small
\begin{align*}
\forall \Gamma, \Sigma, \Pi, \Psi, e, \tau, \eta, s, s', v, n.\ 
& \typerule{\Gamma}{\Sigma}{\Pi}{\Psi}{e}{\tau}{\eta} \wedge 
\mathsf{well\_formed\_state}(\Gamma, \Sigma, s) \wedge 
\mevals{s}{e}{s'}{v}{n} \wedge  
\kw{Read} \in \eta \\
&\implies\ \exists \eta'.\ \kw{Alloc} \in \eta' \wedge \eta' \subseteq \eta
\end{align*}
\end{lemma}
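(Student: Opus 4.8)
The plan is to first simplify the goal. The conclusion $\exists \eta'.\ \kw{Alloc} \in \eta' \wedge \eta' \subseteq \eta$ is logically equivalent to the single membership statement $\kw{Alloc} \in \eta$: the forward direction is immediate since $\kw{Alloc} \in \eta' \subseteq \eta$, and the converse is witnessed by taking $\eta' = \eta$. Hence it suffices to prove that whenever a well-typed expression carries a $\kw{Read}$ effect it also carries an $\kw{Alloc}$ effect. I would establish this by induction on the typing derivation $\typerule{\Gamma}{\Sigma}{\Pi}{\Psi}{e}{\tau}{\eta}$ (mutually with the rule for expression sequences), exploiting the fact that $\kw{Read}$ is never conjured from nothing: the sole rule introducing it directly is $\kw{TDEREF}$, which yields $\kw{read} :: \eta_{e_0}$ for an operand $e_0$ whose type is a non-option pointer $\tau_e*$, while every other rule simply propagates the effects of its subexpressions through $\dotplus$.

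First I would dispatch the base cases — variables, constants, $\kw{unit}$, $\kw{None}$, and locations — where $\eta = \phi$ contains no $\kw{Read}$ and the implication holds vacuously. For the compound cases ($\kw{TBIND}$, $\kw{TCOND}$, $\kw{TBOP}$, $\kw{TFOR}$, $\kw{TMATCHO}$, and so on) the overall effect is a $\dotplus$-union of subexpression effects, so if $\kw{Read}$ lies in the union it lies in one component; I would apply the induction hypothesis to that component to exhibit the matching $\kw{Alloc}$, which is preserved by the union.

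The crux is the $\kw{TDEREF}$ case. Here I would prove, by a nested induction, the supporting sub-invariant that \emph{every} expression typed at a non-option reference type $\tref{\tau_b}$ already has $\kw{Alloc}$ in its effect. This reflects BeePL's core discipline: a directly dereferenceable pointer can only be produced by $\kw{ref}$ (rule $\kw{TREF}$, whose effect is $\kw{alloc} :: \eta$), because pointers flowing in from helper calls or map lookups are given $\toption{\tau*}$ types and must be unpacked through a $\kw{match}$ before any dereference — indeed $\kw{TDEREF}$ explicitly excludes option-typed operands. Granting this sub-invariant, the operand $e_0$ of $!e_0$ has $\kw{Alloc} \in \eta_{e_0}$, hence $\kw{Alloc} \in \kw{read} :: \eta_{e_0} = \eta$, closing the case. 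I would additionally use the operational hypothesis $\mevals{s}{e}{s'}{v}{n}$ together with well-formedness and Lemma~\ref{safe-dereference} to confirm that each read performed at runtime touches a location that was genuinely allocated, so the static $\kw{Alloc}$ annotation faithfully witnesses that allocation.

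The main obstacle I anticipate is the interaction between $\kw{Read}$ effects and free pointer-typed variables — specifically function parameters and the effects declared on external helpers in $\Psi$. A bare reference parameter $p : \tref{\tau_b}$ would let $!p$ carry $\kw{Read}$ with no \emph{local} $\kw{Alloc}$, and a helper such as $\mathsf{bpf\_map\_lookup\_elem}$ declares a $\kw{read}$ effect in its signature without a companion $\kw{alloc}$. Making the lemma go through therefore hinges on the BeePL invariant that all externally-originating and uncertain pointers are $\toption{\tau*}$-typed, so they cannot be dereferenced without a preceding $\kw{match}$ typed inside the same enclosing expression whose effect the lemma tracks, together with the well-formedness constraint tying the store typing $\Sigma$ to actually-allocated $\kw{Freeable}$ locations. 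Pinning down and discharging this invariant precisely — so that the nested induction on reference-typed expressions does not stall at a free variable — is where the real work lies.
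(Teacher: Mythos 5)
Your overall route---induction on the typing derivation, dispatching base cases vacuously and pushing $\kw{Read}$ membership through the $\dotplus$-unions into a subexpression handled by the induction hypothesis---is the same skeleton the paper uses. The differences are in the two places where the skeleton is not enough, and they are instructive. First, the variable case: you treat $\kw{TVAR}$ as producing the empty effect $\phi$ (as the typing figure states) and close the case vacuously, whereas the paper's proof silently retypes a variable occurrence with effect $\{\kw{Read}\}$ and then argues that, because well-formedness guarantees $x$ is allocated in memory, the allocation "must be tracked in the effect list.'' That step conflates a runtime fact (the location exists in $\Theta$) with a static one ($\kw{Alloc}$ occurring in the effect $\eta$ of the expression being typed), and it does not actually produce an $\eta'\subseteq\eta$ containing $\kw{Alloc}$. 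Your reading is the more faithful one.

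Second, the $\kw{TDEREF}$ crux. Your proposed sub-invariant---every expression typed at a non-option reference type $\tref{\tau_b}$ already carries $\kw{Alloc}$---is the right instinct, and you correctly predict exactly where it stalls: a free variable or parameter of type $\tref{\tau_b}$ is typed with effect $\phi$ by $\kw{TVAR}$, so $!p$ has effect $\kw{read}::\phi$ with no $\kw{Alloc}$, and the sub-invariant (and hence the lemma as literally stated) fails for that subterm considered in isolation. The paper's proof does not overcome this either: in its let-binding case it asserts that if the read is on "another variable from $\Gamma$'' then "the corresponding $\kw{Alloc}$ appears in $\eta_1$ or $\eta_2$,'' which again is justified only by the runtime allocation recorded in $\Sigma$/$\Theta$, not by anything in the static effect. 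So the obstacle you flag as "where the real work lies'' is a genuine gap, and it is present in the paper's own argument as well; closing it would require either restricting the lemma to closed top-level expressions (so every dereferenced reference is introduced by a $\kw{ref}$ within $e$) or weakening the conclusion to the semantic statement that every location read was previously allocated, which is what the paper's prose actually establishes.
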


\begin{proof}
We proceed by induction on the structure of the typing derivation $
\typerule{\Gamma}{\Sigma}{\Pi}{\Psi}{e}{\tau}{\eta}$

\texttt{Var case.} From the typing rule for variables, we have: $
\Gamma(x) = \tau \Rightarrow \typerule{\Gamma}{\Sigma}{\Pi}{\Psi}{\texttt{Var}\ x\ \tau}{\tau}{\{ \kw{Read} \}}$.
Since the expression produces a \kw{read} effect, and the variable $x$ must be present in the typing context $\Gamma$, the well-formedness of the state $\mathsf{well\_formed\_state}(\Gamma, \Sigma, s)$ ensures that $x$ must be allocated in memory. Therefore, the allocation must have occurred prior to this point and be tracked in the effect list. So we conclude that there exists $\eta' \subseteq \eta$ such that $\kw{alloc} \in \eta'$.

\texttt{Let-binding case.}
From the typing derivation:$
\typerule{\Gamma}{\Sigma}{\Pi}{\Psi}{e_1}{\tau}{\eta_1}, \kw{and}
\typerule{(x \rightarrow \tau, \Gamma)}{\Sigma}{\Pi}{\Psi}{e_2}{\tau_2}{\eta_2}
\Rightarrow \typerule{\Gamma}{\Sigma}{\Pi}{\Psi}{\elet x \tau {e_1} {e_2}}{\tau_2}{\eta_1 \dotplus \eta_2}.$

From the hypothesis, $\kw{read} \in \eta_1 \dotplus \eta_2$. We consider two cases:
\begin{itemize}
  \item If $\kw{read} \in \eta_1$, we apply the induction hypothesis to $e_1$. This gives $\eta_1' \subseteq \eta_1$ such that $\kw{alloc} \in \eta_1'$, and thus $\eta_1' \subseteq \eta_1 \dotplus \eta_2$.

  \item If $\kw{read} \in \eta_2$, then the read occurs in the body $e_2$ under the extended context. If the read depends on the variable $x$, then $x$ must have been allocated during the evaluation of $e_1$, so $\kw{alloc} \in \eta_1$. If the read is on another variable from $\Gamma$, then $\mathsf{well\_formed\_state}(\Gamma, \Sigma, s)$ ensures it was allocated already. In either case, the corresponding $\kw{Alloc}$ appears in $\eta_1$ or $\eta_2$ and hence in $\eta$.
\end{itemize}
All other expression forms (e.g., \texttt{App}, \texttt{Prim}, \texttt{Cond}, \texttt{Sfield}, \texttt{Match}, etc.) either preserve or combine effects from subexpressions. If $\kw{read} \in \eta$, it must have originated from some subexpression with its own typing derivation. The inductive hypothesis applies, so the required $\kw{alloc} \in \eta'$ for some $\eta' \subseteq \eta$. 
By structural induction on the typing derivation, any presence of $\kw{read}$ in the effect set implies a prior $\kw{alloc}$ also appears in a subset of the effect list.
\end{proof}

\subsection{BeePL Compiler}
In this section, we describe how the BeePL compiler translates high-level BeePL constructs into C, highlighting key program transformations that ensure memory and type safety in the generated code. We focus only on the most interesting cases that illustrate how the compilation process enforces safety through careful code generation.

\paragraph{Translation of BeePL ref construct to C}
The BeePL compiler first translates the inner expression $e$ that will be stored in the reference. It then introduces a fresh local variable to hold this value and generates code to assign the value to that variable. Finally, it returns the address of the variable, effectively modeling the behavior of ref as a pointer in C. This allows BeePL's reference semantics to be represented using local variable allocation and address-taking in the CompCert C syntax.
$$
\llbracket \mathsf{ref}~e \rrbracket =
\begin{cases}
  \texttt{i = ce;} & \llbracket e\rrbracket = ce \wedge \kw{fresh} \ i \wedge \kw{typeof} \ i = \llbracket\kw{typeof} \ e\rrbracket_\tau\\
  \texttt{\&i} \\
\end{cases}
$$

\paragraph{Translation of BeePL for-loop to C}
A \texttt{for} loop in BeePL is translated into imperative C-style code with explicit loop bounds and direction-aware control flow. The source-level loop is of the form: $\mathsf{for}~({e_1}\ldots{e_2}, d)~\{e\}$, where $e_1$ and $e_2$ specify the lower and upper bounds, $d$ denotes the direction ($\mathsf{Up}$ or $\mathsf{Down}$), and $e$ is the loop body.
\begin{itemize}
  \item The expressions $e_1$ and $e_2$ are compiled to C expressions $ce_1$ and $ce_2$, which are stored in fresh local variables $\mathsf{l}$ and $\mathsf{h}$.
  \item A fresh variable $\mathsf{i}$ is introduced to serve as the loop index.
  \item Depending on the direction:
  \begin{itemize}
    \item If $d = \mathsf{Up}$, the generated code is: $\texttt{if (l <= h)} \quad \texttt{for (i = l; i <= h; i++) \{ ce \}}$
    \item If $d = \mathsf{Down}$, the generated code is: $\texttt{if (l >= h)} \quad \texttt{for (i = l; i >= h; i--) \{ ce \}}$
  \end{itemize}
  \item If the guard condition is false (i.e., the range is empty), the entire loop reduces to \texttt{Skip}.
\end{itemize}
This transformation ensures that BeePL’s high-level iteration is compiled into safe, structured loops in C. Since the loop bounds are computed before iteration starts, and the index progresses monotonically in the specified direction, the resulting loop is guaranteed to terminate, provided that the subexpressions $e_1$, $e_2$, and $e$ themselves terminate. This structure allows the BeePL compiler to preserve termination guarantees at the C level.
\small
$$\llbracket\fore {e_1} {e_2} d {e}\rrbracket =
    \begin{cases}
        \mathsf{l = ce_1;} \mathsf{h = ce_2;} & \mathsf{Dir == Up} \wedge \mathsf{fresh \ l}, \mathsf{fresh \ h}, \mathsf{fresh \ i} \wedge \mathsf{\llbracket e_1\rrbracket = ce_1} \\
        \mathsf{if (l <= h)} & \wedge \mathsf{\llbracket e_2\rrbracket = ce_2} \wedge \mathsf{\llbracket e\rrbracket = ce} \\
        \mathsf{then \ \ for (i = l; i <= h; i++)\{ ce\} } \\
         \mathsf{else \ \ Skip} \\ \\ 
                
        \mathsf{l = ce_1;} \mathsf{h = ce_2;} & \mathsf{Dir == Down} \wedge \mathsf{fresh \ l}, \mathsf{fresh \ h}, \mathsf{fresh \ i} \wedge \mathsf{\llbracket e_1 \rrbracket = ce_1} \\
        \mathsf{if (l >= h)} & \wedge \mathsf{\llbracket e_2 \rrbracket = ce_2} \wedge \mathsf{\llbracket e \rrbracket = ce} \\
        \mathsf{then \ \ for (i = l; i >= h; i--)\{ ce\} } \\
         \mathsf{else \ \ Skip} \\ \\ 
    \end{cases}$$

\paragraph{Translation of BeePL let-binding to C}
The let-binding construct in BeePL, written as \texttt{let $x$ : $t$ = $e$ in $e'$} is translated as follows:
\begin{itemize}
  \item It first compiles the continuation $e'$ into a C code fragment $ce'$.
  \item It then examines the structure of $e$:
  \begin{itemize}
    \item If $e$ is a statement-level construct (such as an assignment, \texttt{for}-loop, struct initialization, or a nested \texttt{Bind}), it is compiled into a statement $ce$ and simply sequenced before $ce'$ as $ce;~ce'$.
    \item If $e$ is a pure expression, it is compiled into a C expression $ce$, and the compiler emits a local variable assignment \texttt{x = ce;} followed by $ce'$.
  \end{itemize}
  \item This translation preserves the evaluation order of BeePL and ensures that side effects in $e$ occur before evaluating $e'$, as required by the semantics.
\end{itemize}
This strategy ensures that the generated C code faithfully reflects the sequential and scoping behavior of BeePL’s \texttt{let}-bindings, preserving both value and effect ordering as in the source program. It also distinguishes between expression-level and statement-level forms to enable correct and structured code generation.
$$
\llbracket \mathsf{let}~x : t = e~\mathsf{in}~e' \rrbracket =
\begin{cases}
  ce ; ce' & \llbracket e' \rrbracket = ce' \wedge \llbracket e \rrbracket = ce \wedge e \in \{\kw{let}, :=, \kw{for}, \kw{sinit}\}\\
  x = ce ;\ ce' & \llbracket e' \rrbracket = ce' \wedge \llbracket e \rrbracket = ce \wedge e \notin \{\kw{let}, :=, \kw{for}, \kw{sinit}\}\\
\end{cases}
$$

\paragraph{Translation of BeePL match constructs to C}
The BeePL \texttt{match} expression on an \texttt{option} type get translated to a \texttt{NULL} check in C. If $\llbracket e \rrbracket = \texttt{NULL}$ (i.e., \texttt{None}), it evaluates to $\llbracket e_1 \rrbracket$. Otherwise, for \texttt{Some}~$x$, it binds $x = ce$ and evaluates $\llbracket e_2 \rrbracket$. This ensures safe and direct handling of optional pointers.

$$\llbracket \mathsf{match}~e~\mathsf{with}~p_1 \Rightarrow e_1~|~p_2 \Rightarrow e_2 \rrbracket =
    \begin{cases}
        ce_1 & \mathsf{p_1 == None} \wedge \mathsf{p_2 == Some \ x} \ \wedge \ \kw{typeof} \ e = \kw{option}(\tau*) \\ &\llbracket e_1 \rrbracket = ce_1 \wedge \llbracket e_2 \rrbracket = ce_2 \wedge \llbracket e \rrbracket = ce \wedge ce = \kw{NULL} \\

        x = ce; ce_2 & \mathsf{p_1 == None} \wedge \mathsf{p_2 == Some \ x} \ \wedge \ \kw{typeof} \ e = \kw{option}(\tau*) \\ &\llbracket e_1 \rrbracket = ce_1 \wedge \llbracket e_2 \rrbracket = ce_2 \wedge \llbracket e \rrbracket = ce \wedge ce \neq \kw{NULL} \\
    \end{cases}$$
    \begin{figure}[h]
\centering
\begin{minipage}[t]{0.50\textwidth}
\begin{lstlisting}[basicstyle=\scriptsize,mathescape=true]
#section "xdp"
fun bprog4(option(struct xdp_md$\rptr$) ctx) : int {
    match ctx.data with 
        | eth, struct ethhdr: (h_proto, uint16) => 
            if (h_proto == htons(ETH_P_IPV6)) 
            then XDP_DROP 
            else XDP_PASS
        | _ => XDP_DROP
}
char LICENSE[] SEC("license") = "GPL";
\end{lstlisting}
\caption{BeePL program using the match construct to ensure safe memory access within bounds}
    \vspace{-1em}
\label{fig:ebpfprogb4}
\end{minipage}
\hfill
\begin{minipage}[t]{0.45\textwidth}
\begin{lstlisting}[basicstyle=\scriptsize,mathescape=true]
SEC("xdp_drop")
int cprog4(struct xdp_md *ctx)
{
    void* data_end = (void star)(long)ctx->data_end;
    void* data = (void star)(long)ctx->data;
    struct ethhdr *eth = data;
    uint16 h_proto;
    if (data + sizeof(struct ethhdr) > data_end)
        return XDP_DROP;
    h_proto = eth->h_proto;
    if (h_proto == htons(ETH_P_IPV6))
        return XDP_DROP;
    return XDP_PASS;
}
char LICENSE[] SEC("license") = "GPL";
\end{lstlisting}
\caption{C program obtained after compiling bprog4 using the BeePL compiler}
    \vspace{-1em}
\label{fig:ebpfprogc4}
\end{minipage}
\end{figure}

The BeePL \texttt{match} expression on \texttt{Pbytes} (illustrated in \figref{ebpfprogb4}) is translated to C (illustrated in \figref{ebpfprogc4}) by first evaluating the expression $e$ to obtain a byte range \texttt{ce} of type \texttt{bytes\_t} defined as \texttt{struct bytes\_t \{unsigned char *start; unsigned char* end\}}, where the fields \texttt{start} and \texttt{end} represent the lower and upper bounds of the byte range, respectively.. The generated C code includes a conditional check:
\begin{center}
\texttt{if (ce.start + sizeof(struct s) > ce.end)}
\end{center}
to ensure there is enough memory to safely read a \texttt{struct s} from the byte stream. If the check fails, the fallback branch $e_2$ is evaluated. Otherwise, the code casts the byte pointer to a \texttt{struct s*}, extracts fields into variables (e.g., ${x.f_i := \kw{tmp} \rightarrow f_i}$), updates \texttt{ce.start}, and continues with $e_1$. This strategy prevents out-of-bounds memory access by ensuring that field accesses only occur within valid memory bounds. We focus here on the case of reading a struct from the buffer, though the approach generalizes to all supported types. However, if a struct field is itself a pointer, additional care is required. In such cases, simply verifying the bounds of the enclosing buffer is insufficient—we must also reason about the memory region pointed to by that field. One way to handle this safely is to disallow pointer fields during deserialization, or alternatively, to require an additional \texttt{match} expression to validate the bounds of the pointed-to memory region before accessing it. Without such precautions, the safety guarantees provided by the initial match on the enclosing struct could be compromised. At this moment, we focus only on structs whose fields are of non-pointer types.
\[
\llbracket
\mathsf{match}~e~\mathsf{with}
\begin{array}[t]{l}
\mathsf{Pbytes}(x, s, \{f_i : \tau_i\}) \Rightarrow e_1 \\
\mid\ \_ \Rightarrow e_2
\end{array}
\rrbracket =
\begin{cases}
  \llbracket e \rrbracket = \mathit{ce},\ \mathit{ce} : \mathsf{bytes\_t} \\[0.4em]
  \texttt{if (ce.start + sizeof(struct } s \texttt{) > ce.end)} \\[0.3em]
  \quad \llbracket e_2 \rrbracket \\[0.6em]
  \texttt{else \{} \\[0.3em]
  \quad \texttt{struct } s\texttt{* tmp = (struct } s\texttt{*) ce.start;} \\[0.3em]
  \quad \texttt{for each } f_i: x.f_i := \texttt{tmp} \rightarrow f_i \\[0.3em]
  \quad \texttt{ce.start += sizeof(struct } s\texttt{);} \\[0.3em]
  \quad \llbracket e_1 \rrbracket \\[0.3em]
  \texttt{\}}
\end{cases}
\]

To prove the compiler correctness, we first define the equivalence relation between the BeePL and C state:
\[
s_B \approx s_C \quad \triangleq \quad
\left(
\begin{array}{l}
\mathsf{\Delta}_B = \mathsf{\Delta}_C \wedge 
\mathsf{\Omega}_B = \mathsf{\Omega}_C \wedge
\mathsf{\Theta}_B = \mathsf{\Theta}_C \ \wedge \\
\forall l, o, \tau_B.\ \Sigma_B(l) = \tau_B \implies
\exists \tau_C.\ \llbracket\tau_B\rrbracket_\tau = \tau_C \wedge 
\Theta_C[\tau_C, l, o] = \Theta_B[\tau_B, l, o] 
\end{array}
\right)
\]
\textit{The equivalence relation } $\approx$ \textit{ asserts that BeePL and C program states are identical in their memory representation—both use the same variable map } ($\Omega$), \textit{ global environment } ($\Delta$), \textit{ and memory } ($\Theta$), \textit{ since BeePL and CompCert are built on a common implementation of these components. The only distinction lies in their type systems. A memory access at } $(l, o)$ \textit{ yields the same value in both languages, provided the BeePL type } $\tau_B$ \textit{ translates to the corresponding C type } $\llbracket \tau_B \rrbracket_\tau$.

\begin{lemma}[Compiler Correctness: BeePL to Csyntax]
\label{compiler-correctness}
\small
\begin{align*}
\forall \Gamma, \Sigma, \Pi, \Psi, e, e', s, s', \tau, \eta, n, c, v_c, v_b, c_s, c_s'.\quad
& \typerule{\Gamma}{\Sigma}{\Pi}{\Psi}{e}{\tau}{\eta} \ \wedge \
\mathsf{well\_formed\_state}(\Gamma, \Sigma, s) \ \wedge \
\llbracket e \rrbracket = c \ \wedge \\
& \meval{s}{e}{s'}{v_b}{n} \wedge  s \approx c_s \wedge 
c_s, c \Downarrow c_s', v_c \ \implies \\
& v_b = v_c \wedge s' \approx c_s'
\end{align*}
\end{lemma}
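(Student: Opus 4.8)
The plan is to prove the statement by structural induction on the expression $e$ (equivalently, on the typing derivation $\typerule{\Gamma}{\Sigma}{\Pi}{\Psi}{e}{\tau}{\eta}$), following the case structure of the compilation function $\llbracket\cdot\rrbracket$. A key simplification is that both evaluations are hypotheses: the source multi-step evaluation $\meval{s}{e}{s'}{v_b}{n}$ and the target big-step evaluation $c_s, c \Downarrow c_s', v_c$ are given, so I never have to construct a C derivation, only show the two agree. This turns the obligation into a correspondence argument rather than a full forward simulation, and the essential ingredients are the state-equivalence relation $\approx$, the type-soundness results already established (Lemmas~\ref{progress},~\ref{preservation},~\ref{termination}, and~\ref{soundness}), and the memory lemmas (Lemmas~\ref{safe-dereference},~\ref{safe-assignment},~\ref{ref_allocation_succeeds}, and~\ref{well-formed-allocation}).

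Before the case analysis I would establish small-step decomposition lemmas for the BeePL multi-step relation: because the congruence rules in \figref{exp_sem} impose a deterministic left-to-right evaluation order, any derivation $\meval{s}{e}{s'}{v_b}{n}$ for a compound expression decomposes uniquely into multi-step reductions of each immediate subexpression to a value followed by the head reduction. These facts bridge the small-step source and big-step target semantics: for each syntactic form I invert the C big-step derivation (which decomposes $\llbracket e\rrbracket$ into the evaluations of the compiled subexpressions) in lockstep with the source decomposition and apply the induction hypothesis to each matching subexpression. The base cases—variables, constants, and values—close directly: for a variable, the well-formedness of $s$ together with $\approx$ guarantees that the location obtained from $\Omega$ or $\Delta$ holds the same value in $s_\Theta$ and $c_s$, and neither side alters the state.

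The memory-manipulating cases ($\kw{ref}$, $!$, $:=$) rely on the defining clause of $\approx$, namely that BeePL and CompCert C share the same underlying memory $\Theta$ and differ only through the type interpretation $\llbracket\cdot\rrbracket_\tau$. For $\kw{ref}\ e$ I apply the IH to $e$, then note that the \textsc{REFV} step allocates a fresh location exactly as the C translation writes the value to a fresh local and returns its address; Lemmas~\ref{ref_allocation_succeeds} and~\ref{well-formed-allocation} ensure the allocation succeeds and that $\approx$ is re-established at the new location. Dereference and assignment appeal to Lemmas~\ref{safe-dereference} and~\ref{safe-assignment} with the type-translation clause of $\approx$ to show the read or written value agrees. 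For the arithmetic operators I must match BeePL's safe semantics—where $\mathsf{unsafe}(op,v_1,v_2,s)$ forces the result to $0$—against the guard the compiler inserts (for instance rewriting $e_1 / e_2$ into a conditional returning $0$ when the divisor is zero or the operation overflows); a case split on $\mathsf{unsafe}$, checked against the inserted C guard, discharges both branches and reuses Lemmas~\ref{progress-binaryoperators} and~\ref{progrss-unaryoperators} for definedness. Conditionals, function application, and struct initialization and field access follow this same decompose-and-apply-IH pattern.

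The hardest cases are $\kw{let}$, the $\kw{for}$-loop, and the $\kw{bytes}$-pattern match. The $\kw{let}$ case exposes a genuine semantic gap: BeePL reduces $\elet{x}{\tau}{v}{e_2}$ by substitution $e_2[x\leftarrow v]$ via rule \textsc{LETV}, whereas the C code either sequences a statement or emits an assignment \texttt{x = ce} into the store; I would close this with a substitution-correspondence lemma showing that evaluating $e_2[x\leftarrow v]$ under $\approx$ matches evaluating $\llbracket e_2\rrbracket$ after binding $x$ to the compiled value in memory, keeping the IH applicable via the substitution-typing Lemma~\ref{subst-preserve-typing}. The $\kw{for}$-loop requires an inner induction on the iteration count $k = \mathsf{range}(v_1,v_2,d)$: once the bounds reduce to values (outer IH), the \textsc{FORV} rule unrolls the body $k$ times, and I show by induction on $k$ that the $k$-fold source unrolling stays $\approx$-synchronised with the C \texttt{for} loop, the $k=0$ base case collapsing both sides to \texttt{Skip}; the disjointness condition $\mathsf{fvar}(e)\cap\mathsf{fvar}(e_i)=\phi$ from \textsc{TFOR} is what makes each iteration identical. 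The $\kw{bytes}$ match requires showing that $\mathsf{extract}(v,\tau_x)$ and the length guard of \textsc{MBYTES}/\textsc{MBYTESF} correspond exactly to the C pointer cast, the inserted bounds test on \texttt{ce.start}, \texttt{ce.end}, and \texttt{sizeof}, and the field-by-field reads, with Lemma~\ref{memory-bounds-safety} guaranteeing the successful branch is taken only when the buffer suffices, so the source length condition and the C bounds check agree. I expect the for-loop nested induction and the let-binding substitution-versus-store correspondence to be the principal obstacles, since both demand reconciling BeePL's substitution-and-unrolling style with C's imperative, store-based execution while preserving $\approx$ at every intermediate step.
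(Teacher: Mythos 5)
Your proposal is correct and takes essentially the same route as the paper's proof: structural induction over the expression (equivalently the typing derivation), applying the induction hypothesis to subexpressions in evaluation order, and treating $\kw{ref}$, the $\kw{for}$-loop, and the $\kw{bytes}$ match as the substantive cases with a case split on the compiler-inserted runtime guard. You are in fact more explicit than the paper about two points it leaves implicit — the decomposition lemmas bridging BeePL's small-step multi-step relation with C's big-step judgment, and the substitution-versus-store correspondence needed for $\kw{let}$ (which the paper folds into ``remaining expressions proceed similarly'') — and both are genuine obligations, so flagging them is a strength rather than a deviation.
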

\begin{proof}
We discuss the proofs for the translations of the $\kw{ref}$, $\kw{for}$-loop, and $\kw{match}$ expressions as defined above. For the remaining expressions, the proof proceeds in a similar style. In particular, several expressions—such as constants, variables, and binary operations—have direct one-to-one mappings to Csyntax, making their correctness proofs straightforward. Other expressions involving memory or control flow are handled analogously using the inductive hypothesis and consistent memory manipulations.

\textit{$\mathsf{ref}~e$:} In BeePL, evaluating $\mathsf{ref}~e$ proceeds by first evaluating $e$ ($\meval{s}{e}{s_1}{v}{n_1}$), then
allocating a fresh memory location $\ell$ and storing $v$ at $\ell$ and finally returning the location $\ell$ as the result
Thus, $\meval{s}{\mathsf{ref}~e}{s'}{\ell}{n_1 + 1}$, where $s'$ is $s_1$ extended with the binding $\ell \mapsto v$, $\ell$ is fresh and $s_1$ is the intermediate state.

The BeePL compiler translates $\mathsf{ref}~e$ to: $\llbracket \mathsf{ref}~e \rrbracket = 
\texttt{i = } \llbracket e \rrbracket;~\texttt{\&i}$, where $\texttt{i}$ is a fresh local variable in C syntax, and the expression returns its address. The compiled C program performs evaluation of compiled C expression: $c_s$, $\llbracket e \rrbracket$ $\Downarrow$ $c_s'$, $v_c$, followes by storing $v_c$ in a fresh local variable \texttt{i}, and finally returning \texttt{\&i}.
So, the final result is the address of \texttt{i} holding $v_c$, and the final C state $c_s'$ contains this binding.

By inductive hypothesis on the evaluation of $e$:
\[
\meval{s}{e}{s_1}{v_b}{n_1} \wedge c_s, \llbracket e \rrbracket \Downarrow c_s^1, v_c \Rightarrow v_b = v_c \wedge s_1 \approx c_s^1
\]
After this step:
\begin{itemize}
  \item BeePL allocates a fresh location $\ell$ and binds $\ell \mapsto v_b$
  \item C creates a fresh local variable \texttt{i} and stores $v_c$ into it
\end{itemize}
By the freshness condition and we know that the BeePL’s and C’s environments grow consistently, we extend the relation:
$s' \approx c_s'$, where $s'$ extends $s_1$ with $\ell \mapsto v_b$, and $c_s'$ extends $c_s^1$ with \texttt{i} holding $v_c$. Finally, the result values: $v_b = \ell \ \text{and} \ v_c = \&\texttt{i}$ are considered equal under the equivalence relation: $\ell \approx \&\texttt{i}$. Thus, both BeePL and C perform the same evaluation steps, allocate storage, store the result, and return a pointer to it. The resulting states and values remain related:$v_b = v_c \wedge s' \approx c_s'$ as required.

\textit{$\kw{for}(e_1; e_2; d; e)$:} In BeePL, the loop expression evaluates the bounds $e_1$ and $e_2$ to values $v_1$ and $v_2$ respectively: $\meval{s}{e_1}{s_1}{v_1}{n_1} \ \text{and} \ \meval{s_1}{e_2}{s_2}{v_2}{n_2}$. Then, depending on the direction $d$, the loop body $e$ is evaluated iteratively:
\begin{itemize}
  \item If $d = \kw{Up}$ and $v_1 \leq v_2$, the loop runs from $v_1$ to $v_2$
  \item If $d = \kw{Down}$ and $v_1 \geq v_2$, the loop runs from $v_1$ down to $v_2$
\end{itemize}
producing a final state $s'$ and result $v$: $\meval{s}{\kw{for}(e_1; e_2; d; e)}{s'}{\kw{unit}}{n}$. 

By inductive hypotheses: $
\meval{s}{e_1}{s_1}{v_1}{n_1} \wedge c_s, ce_1 \Downarrow c_s^1, v_1 \Rightarrow s_1 \approx c_s^1$ and $
\meval{s_1}{e_2}{s_2}{v_2}{n_2} \wedge c_s^1, ce_2 \Downarrow c_s^2, v_2 \Rightarrow s_2 \approx c_s^2$. Then, for each iteration: $
\meval{s_i}{e}{s_{i+1}}{v}{n_i} \wedge c_{s_i}, ce \Downarrow c_{s_{i+1}}, v \Rightarrow s_{i+1} \approx c_{s_{i+1}}$

Thus, state equivalence is preserved throughout the loop. Hence; we conclude:
$v_b = v_c \wedge s' \approx c_s'$ as required.

\textit{Match on Bytes.} We proceed by case analysis on the runtime check generated by the compiler from the BeePL expression:
\[
\mathsf{match}~e~\mathsf{with}~\mathsf{Pbytes}(x, s, \{f_i : \tau_i\}) \Rightarrow e_1~|~\_ \Rightarrow e_2
\]

Let $\llbracket e \rrbracket = ce$, and suppose it evaluates in both BeePL and C to a byte region $\mathsf{ce} = (\mathsf{start}, \mathsf{end})$. The compiler translates this expression into the following C code:
\[
\texttt{if (ce.start + sizeof(struct } s \texttt{) > ce.end)}~\llbracket e_2\rrbracket~\texttt{else }\{...; \llbracket e_1 \rrbracket \texttt{\}}
\]

We analyze the two branches of this conditional.

\paragraph{Case 1: \texttt{ce.start + sizeof(struct s) > ce.end}}

In this case, the BeePL semantics selects the fallback branch $e_2$. Formally, we have:
$\meval{s}{e}{s_1}{\mathsf{ce}}{n_1} \ \text{and} \ \meval{s_1}{e_2}{s'}{v_b}{n_2}$
and by assumption, \\
$c = \llbracket \mathsf{match}~e~\mathsf{with}~... \rrbracket \ \text{evaluates in C as} \ c_s, c \Downarrow c_s', v_c$
Since the compiled C code takes the same branch (the condition fails), it executes $\llbracket e_2 \rrbracket$, producing $v_c$ and ending in state $c_s'$. By the inductive hypothesis on $e_2$, and since $s_1 \approx c_s''$ for some intermediate C state $c_s''$, we get:
\[
v_b = v_c \quad \text{and} \quad s' \approx c_s'
\]

\paragraph{Case 2: \texttt{ce.start + sizeof(struct s) $\leq$ ce.end}}

Here, BeePL succeeds in extracting a struct of type $s$ from the byte buffer. The semantics proceeds by casting $\mathsf{start}$ to a struct $s$ and then binding the corresponding parts to their respective fields. After successful copying of the buffer, the pointer is updated and the corresponding follow-up expression $e_1$ is evaluated in the extended state.
In C, the compiler emits the same logic as explained in the translation of match construct over bytes. By the inductive hypothesis applied to $e_1$, and assuming that field access and pointer arithmetic behave consistently in BeePL and C (as guaranteed by $s \approx c_s$), we conclude:
\[
\meval{s}{e}{s_1}{\mathsf{ce}}{n_1}, \quad \mathsf{start}' = \mathsf{start} + \mathsf{sizeof(struct~s)}, \quad \meval{s_1'}{e_1}{s'}{v_b}{n_2}
\]
and similarly, $c_s, c \Downarrow c_s', v_c, \ \text{where } c_s' \text{ reflects updated } \mathsf{start} \text{ and same field bindings}$

Therefore: $v_b = v_c \quad \text{and} \quad s' \approx c_s'$. In both cases, the BeePL and C semantics agree on the branch taken, the fields extracted, and the pointer updates. The evaluation of the continuation expression ($e_1$ or $e_2$) is correct by the inductive hypothesis. Hence, the compiled C code produces the same result as the BeePL code, and the final states remain related.

\end{proof}

\end{document}
\endinput